\newcommand{\customlabel}[2]{%
\protected@write \@auxout {}{\string \newlabel {#1}{{#2}{\thepage}{#2}{#1}{}} }%
\hypertarget{#1}{#2}%
}
\newcounter{axiomcounter}
\newcommand{\axlabel}[1]{%
\addtocounter{axiomcounter}{1}%
(\customlabel{#1}{Ax\arabic{axiomcounter}})%
}
\newenvironment{notheorembrackets}{%
\csdef{@spopargbegintheorem}##1##2##3##4##5{\trivlist%
      \item[\hskip\labelsep{##4##1\ ##2}]{##4{##3}\@thmcounterend\ }##5}%
    }{%
\csdef{@spopargbegintheorem}##1##2##3##4##5{\trivlist%
      \item[\hskip\labelsep{##4##1\ ##2}]{##4(##3)\@thmcounterend\ }##5}%
    }
\newcommand\enableshowkeys{\let\hideNextShowKeysLabel\undefined}
\renewcommand*\showkeyslabelformat[1]{%
\ifcsname hideNextShowKeysLabel\endcsname%
\else%
%\noexpandarg%
% instead of \textvisiblespace you can also put in ~
% if you want to keep a plain space at space characters
\StrSubstitute{#1}{ }{\textvisiblespace}[\TEMP]%
\parbox[t]{\marginparwidth}{\raggedright\normalfont\small\ttfamily\(\{\){\color{red!50!black}\expandafter\seqsplit\expandafter{\TEMP}}\(\}\)}%
\fi%
}
\newcommand{\A}{\ensuremath{\mathbb{A}\xspace}}
\newcommand{\perms}{\ensuremath{\mathfrak{S}_{\mathsf{f}}\xspace}}
\newcommand{\B}{\ensuremath{\mathcal{B}\xspace}}
\newcommand{\C}{\ensuremath{\mathbb{C}\xspace}}
\newcommand{\PP}{\ensuremath{\mathcal{P}\xspace}}
\newcommand{\E}{\ensuremath{\mathcal{E}\xspace}}
\newcommand{\N}{\ensuremath{\mathds{N}\xspace}}
\newcommand{\fpair}[1]{\ensuremath{\langle #1 \rangle}}
\newcommand{\obj}{\ensuremath{\mathbf{obj}\,\xspace}}
\newcommand{\mor}{\ensuremath{\mathbf{mor}\,\xspace}}
\newcommand{\Id}{\ensuremath{\operatorname{\textnormal{Id}}\xspace}}
\newcommand{\oper}[1]{\ensuremath{\operatorname{\textnormal{\textsf{#1}}}\xspace}}
\newcommand{\id}{\ensuremath{\operatorname{\textnormal{id}}\xspace}}
\newcommand{\Set}{\oper{Set}\xspace}
\newcommand{\Pos}{\oper{Pos}\xspace}
\newcommand{\Nom}{\oper{Nom}}
\newcommand{\supp}{\oper{supp}}
\newcommand{\dom}{\oper{dom}}
\newcommand{\inl}{\ensuremath{\mathsf{inl}\xspace}}
\newcommand{\inr}{\ensuremath{\mathsf{inr}\xspace}}
\newcommand{\inj}{\ensuremath{\mathsf{in}\xspace}}
\newcommand{\ar}{\ensuremath{\mathsf{ar}\xspace}}
\newcommand{\pow}{\ensuremath{\mathcal{P}\xspace}}
\newcommand{\powf}{\ensuremath{\mathcal{P}_{\mathsf{f}}\xspace}}
\newcommand{\powufs}{\ensuremath{\mathcal{P}_{\mathsf{ufs}}\xspace}}
\newcommand{\Bag}{\ensuremath{\mathcal{B}\xspace}}
\newcommand{\downinclusion}{\begin{turn}{270}{$\sqsubseteq$}\end{turn}}
\newcommand{\upinclusion}{\begin{turn}{90}{$\sqsubseteq$}\end{turn}}
\renewcommand{\S}{\ensuremath{\mathcal{S}\xspace}}
\newcommand{\Path}{\ensuremath{\text{\upshape\sf Path}}\xspace}
\newcommand{\trace}{\ensuremath{\text{\upshape\sf tr}}\xspace}
\newcommand{\PathOrd}{\ensuremath{\mathsf{PathOrd}}\xspace}
\newcommand{\Comp}{\ensuremath{\text{\upshape\sf Comp}}\xspace}
\newcommand{\M}{\ensuremath{\mathbb{M}}}
\renewcommand{\arg}{\ensuremath{\_\!\_}}
\newcommand{\LCoalg}{\ensuremath{\operatorname{\sf Coalg}_l}}
\newcommand{\bbP}{\ensuremath{\mathbb{P}}}
\newcommand{\map}[3]{#1\colon #2 \longrightarrow #3}
\renewcommand{\to}{\mathbin{\longrightarrow}}
\newcommand{\trsy}{\oper{LTS}_{A}\xspace} % I imagine that concretely defined categories are \sf, while category variables are \mathcal ...
\newcommand{\trans}[3]{#1\xrightarrow{~#2}#3}
\newcommand{\barbind}[1]{{|}_{#1}}
\spnewtheorem*{notation}{Notation}{\itshape}{\rmfamily}
\spnewtheorem{assumption}[theorem]{Assumption}{\bfseries}{\rmfamily}
\spnewtheorem{conclusion}[theorem]{Conclusion}{\bfseries}{\rmfamily}
\newsavebox{\mypullbackcorner}%
\sbox{\mypullbackcorner}{%
\begin{tikzpicture}
    \draw[-] (0,0) -- (.5em,.5em) -- (0,1em);
\end{tikzpicture}%
}
\newcommand{\pullbackangle}[2][]{\arrow[phantom,to path={
                     -- ($ (\tikztostart)!1cm!#2:([xshift=8cm]\tikztostart) $)
                        node[anchor=west,pos=0.0,rotate=#2,
                        inner xsep = 0]
                        {\begin{tikzpicture}[minimum
                        height=1mm,baseline=0,#1]
    \draw[-] (0,0) -- (.5em,.5em) -- (0,1em);
                        \end{tikzpicture}}}]{}}
\newcommand{\descto}[3][]{\arrow[phantom]{#2}[#1]{\text{\footnotesize{}#3}}}
\tikzset{
      vertex/.style={
        fill=black,
        shape=circle,
        outer sep = 1mm,
        inner sep = 0mm,
        minimum size=2mm,
      },
      named vertex/.style={
        inner sep = 0mm,
        outer sep = 1mm,
      },
      set/.style={
        draw=black!20,
        fill=none,
        line width=1mm,
        rounded corners=2mm,
        inner sep = 2mm,
      },
      myedge/.style={
        ->,
        draw=black,
        every node/.append style={
          shape=circle,
          inner sep=1pt,
          fill=black,
          text=white,
          draw=none,
          sloped,
          minimum width=5mm,
        },
      },
}
\tikzstyle{shiftarr}=[
\tikzset{external/up to date check=diff}
\def\@enddocumenthook{}%
\LetLtxMacro{\oldtodo}{\todo}
\renewcommand{\todo}[2][]{\tikzexternaldisable\oldtodo[#1]{#2}\tikzexternalenable}
\def\temp{&} \catcode`&=\active \let&=\temp
\newcommand{\mytikzcdcontext}[2]{
  \begin{tikzpicture}[baseline=(mainnode.base)]
    \node (mainnode) [inner sep=0, outer sep=0] {\begin{tikzcdold}[#2]
        #1
    \end{tikzcdold}};
  \end{tikzpicture}}
\def\myargs{#1}%
\edef\mydiagram{\noexpand\mytikzcdcontext{\expandonce\BODY}{\expandonce\myargs}}%
\newcommand{\sk}[1]{\textcolor{blue}{\bf Shin-ya: #1}}
\newcommand{\jd}[1]{\textcolor{green!70!black}{\xspace\bf Jeremy: #1}}
\newcommand{\tw}[1]{\textcolor{orange!70!black}{\xspace\bf Thorsten: #1}}
\def\sk#1{}
\def\jd#1{}
\def\tw#1{}
\begin{document}
\title{Path Category For Free\thanks{This research was supported by ERATO HASUO
    Metamathematics for Systems Design Project (No. JPMJER1603), JST. The
    first author was supported by the DFG project MI~717/5-1. He expresses his
    gratitude for having been invited to Tokyo, which initiated the present work.}\\ Open Morphisms
  From Coalgebras With Non-Deterministic Branching}
\titlerunning{Path Category For Free}
% If the paper title is too long for the running head, you can set
% an abbreviated paper title here
%
\author{Thorsten Wi{\ss}mann\inst{1}\orcidID{0000-0001-8993-6486} 
\and
J\'er\'emy Dubut\inst{2,3}
\and\\
Shin-ya Katsumata\inst{2}
\and
Ichiro Hasuo\inst{2,4}
}
\authorrunning{T.~Wi{\ss}mann, J.~Dubut, S.~Katsumata, I.~Hasuo}
% First names are abbreviated in the running head.
% If there are more than two authors, 'et al.' is used.
%
\institute{Friedrich-Alexander-Universit\"at Erlangen-N\"urnberg, Germany
  %\Letter~%mark the corresponding author as requested by Springer's 'guideline
  %        %to authors'
  \email{thorsten.wissmann@fau.de} \and
National Institute of Informatics, Tokyo, Japan\\
\email{\{dubut,s-katsumata,hasuo\}@nii.ac.jp} \and
Japanese-French Laboratory for Informatics, Tokyo, Japan \and
SOKENDAI, Kanagawa, Japan}
\maketitle              % typeset the header of the contribution
\begin{abstract}
%The abstract should briefly summarize the contents of the paper in
%150--250 words.
  There are different categorical approaches to variations of transition systems
  and their bisimulations. One is coalgebra for a functor G, where a
  bisimulation is defined as a span of G-coalgebra homomorphism. Another one is
  in terms of path categories and open morphisms, where a bisimulation is
  defined as a span of open morphisms. This similarity is no coincidence: given
  a functor G, fulfilling certain conditions, we derive a path-category for
  pointed G-coalgebras and lax homomorphisms, such that the open morphisms turn
  out to be precisely the G-coalgebra homomorphisms. The above construction
  provides path-categories and trace semantics for free for different flavours
  of transition systems: (1) non-deterministic tree automata (2) regular
  nondeterministic nominal automata (RNNA), an expressive automata notion living
  in nominal sets (3) multisorted transition systems. This last instance relates
  to Lasota's construction, which is in the converse direction.

\keywords{Coalgebra \and Open maps \and Categories \and Nominal Sets}
\end{abstract}
\section{Introduction}

\emph{Coalgebras} \cite{rutten00} and \emph{open maps} \cite{joyal96} are two
main categorical approaches to transition systems and bisimulations. The former
describes the branching type of systems as an endofunctor, a system becoming a
coalgebra and bisimulations being spans of coalgebra homomorphisms. Coalgebra
theory makes it easy to consider state space types in different settings,
e.g.~nominal sets~\cite{kozen15,kurz13} or algebraic
categories~\cite{bonchi17,hansen11,milius10}. The latter, open maps, describes
systems as objects of a category and the execution types as particular objects
called paths. In this case, bisimulations are spans of open morphisms. Open maps
are particularly adapted to extend bisimilarity to history dependent behaviors,
e.g.~true concurrency \cite{fahrenberg13,dubut15}, timed systems
\cite{nielsen99} and weak (bi)similarity \cite{fiore99}. Coalgebra homomorphisms
and open maps are then key concepts to describe bisimilarity categorically. They
intuitively correspond to functional bisimulations, that is, those maps between
states whose graph is a bisimulation.

\begin{table}[tbp] \centering
  \def\arraystretch{1.1}
  \vspace{-2mm}
  \hspace{-15mm} % Some negative hspace to properly center the table
  \begin{tabular}{rccccccl}
  \cmidrule[\heavyrulewidth]{2-7}
  & worlds & data & systems & func. sim. & func. bisim.\, & (bi)simulation& \\
  \cmidrule{2-7}
  \multirow{4}{*}{\rotatebox{90}{$\xrightarrow{\text{this paper}}$}}& open & $\map{J}{\bbP}{\M}$ & \multirow{2}{*}{\obj($\M$)} & \multirow{2}{*}{\mor($\M$)} & open maps & \multirow{4}{*}{%
                                                                                                 \hspace{-1mm}
  \begin{tikzpicture}[scale=0.8]	
	\node (Z) at (0,1) {$Z$};
	\node (X) at (-0.7,-0.8) {$T$};
	\node (Y) at (0.7,-0.8) {$T'$};
	\path[->,every node/.append style={align=center,anchor=south,font=\scriptsize}]
		(Z) edge node[sloped,above] {func.\\bisim\rlap{.}}  (X)
		(Z) edge node[sloped,above] {func.\\(bi)sim\rlap{.}}  (Y);	
	%\path[->,every node/.append
st%yle={align=center,anchor=south,font=\scriptsize,xshift=2.5mm,yshift=-2mm}]
	%	(Z) edge node {func.\\ bis.} (Y);	
\end{tikzpicture}\hspace{-2mm}}
   & \multirow{4}{*}{\rotatebox{270}{\hspace{-3mm}$\xrightarrow{\text{Lasota's}}$}}\\ 
   & maps & Def.~\ref{def:openmapsituation} &  &  & Def.~\ref{def:openmap} & \\
  \cmidrule{2-6}
  &\multirow{2}{*}{coalgebra} & $\map{G}{\C}{\C}$ & pointed G-coalg. & lax hom. & coalg. hom. & \\ 
   & & Def.~\ref{def:homsetOrder} & Sec.~\ref{subsec:coalgebras} & Def.~\ref{def:laxhomomorphism} & Def.~\ref{def:homomorphism} & \\
  \cmidrule[\heavyrulewidth]{2-7}
  \end{tabular}
  \hspace{-15mm} % Some negative hspace to properly center the table
  \vspace{2mm}
  \caption{Two approaches to categorical (bi)simulations}
  \label{fig:categoricalApproaches}
  \vspace{-10mm}
\end{table}

We are naturally interested in the relationship between those two categorical
approaches to transition systems and bisimulations. A reduction of open maps
situations to coalgebra was given by Lasota using multi-sorted transition
systems~\cite{lasota02}. In this paper, we give the reduction in the other
direction: from the category $\LCoalg(TF)$ of pointed $TF$-coalgebras and lax
homomorphisms, we construct the path-category
$\Path$ and a functor $J:\Path\to\LCoalg(TF)$ such that
$\Path$-open morphisms coincide with strict %pointed coalgebra
homomorphisms, hence functional bisimulations. Here, $T$ is a functor describing
the branching behaviour and $F$ describes the input type, i.e.~the type of data
that is processed (e.g.~words or trees). This development is carried out
with the case where $T$ is a powerset-like functor, and covers transition
systems allowing non-deterministic branching.

% Together with \cite{lasota02}, we show that
% both notions are equally expressive, at least for non-deterministic
% systems.

The key concept in the construction of $\Path$ are $F$-\emph{precise maps}.
Roughly speaking in set, a map $f\colon X\to FY$ is $F$-precise if every $y\in
Y$ is used precisely once in $f$, i.e.~there is a unique $x$ such that $y$
appears in $f(x)$ and additionally $y$ appears precisely once in $f(x)$. Such an
$F$-precise map represents one deterministic step (of shape $F$). Then a path $P\in
\Path$ is a finite sequence of deterministic steps, i.e.~finitely many precise
maps. $J$ converts such a data into a pointed $TF$-coalgebra.
There are many existing notions of paths and traces in
coalgebra~\cite{BeoharK17,hasuo07,DBLP:conf/calco/JacobsS09,MiliusEA15}, which lack the notion of \emph{precise} map, which
is crucial for the present work.

% (Section \ref{secPathCategory},\ref{subsec:preciseMorph})

Once we set up the situation $J\colon \Path\to\LCoalg(TF)$, we are on the
framework of open map bisimulations. Our construction of $\Path$ using
precise maps is justified by the characterisation theorem:
$\Path$-open morphisms and strict coalgebra homomorphisms coincide
(\autoref{thm:homIsOpen} and \autoref{thm:openIsHom}). This coincidence relies on the 
concept of path-reachable coalgebras, namely, coalgebras such that every state can be 
reached by a path. Under mild conditions, path-reachability is equivalent to an existing
notion in coalgebra, defined as the non-existence of a proper sub-coalgebra
(Section \ref{subsec:reachability}). Additionally, this characterization produces a canonical 
trace semantics for free, given in terms of paths (Section \ref{subsec:trace}).

We illustrate our reduction with several concrete situations: different classes
of non-deterministic top-down tree automata using analytic functors (Section
\ref{sec:treeAutomata}), Regular Nondeterministic Nominal Automata (RNNA), an
expressive automata notion living in nominal sets (Section \ref{sec:RNNA}),
multisorted transition systems, used in Lasota's work to construct a coalgebra
situation from an open map situation (Section \ref{subsec:Lasota}).

\paragraph*{Notation.} % The $i$-th component of $w\in A^\star$ is denoted by $w_i$.
We assume basic categorical knowledge and notation (see e.g.~\cite{adamek04,awodey10}). The
cotupling of morphisms $f\colon A\rightarrow C$, $g\colon B\rightarrow C$ is denoted by
$[f,g]\colon A+B\rightarrow C$, and the unique morphsim to the terminal object is
$!\colon X\rightarrow 1$ for every $X$.

\section{Two categorical approaches for bisimulations}
\label{sec:prelim}
We introduce the two formalisms involved in the present paper: the open maps (Section 
\ref{subsec:openmaps}) and the coalgebras (Section \ref{subsec:coalgebras}). Those 
formalisms will be illustrated on the classic example of Labelled Transition Systems (LTSs).

%\subsection{A running example: Labelled Transition Systems}
%\label{subsec:LTS}

%Labelled Transition Systems (LTS for short), are one of the simplest models of 
%computations: they are made of states and transitions labelled by some action, 
%that allows the systems to change its state. 

\begin{definition}
Fix a set $A$, called the alphabet. A \emph{labelled transition system} is a 
triple $(S,i,\Delta)$ with $S$ a set of \emph{states}, $i \in S$ the \emph{initial 
state}, and $\Delta \subseteq S\times A\times S$ the \emph{transition 
relation}.  When $\Delta$
is obvious from the context,
we write $\trans s a {s'}$ to mean $(s,a,s')\in\Delta$.%\tw{Can we use $A$ instead of $\Sigma$? I'd like to use $\Sigma$ for
%signatures.}
\end{definition}

For instance, the tuple 
$(\{0,\cdots,n\},0,\{(k-1,a_k,k)~|~1\le k\le n\})$ is an LTS,
and called the {\em linear system} over the word $a_1\cdots a_n\in A^\star$.
%and depicted as follows:
%\begin{equation}
%  \label{eq:linear}
%  \rightarrow 0
%  \xrightarrow{~a_1~} 1
%  \xrightarrow{~a_2~} 2
%  \cdots\rightarrow {n-1}
%  \xrightarrow{~a_n~} n.
%\end{equation}
To relate LTSs, one considers functions that preserves the structure
of LTSs:
\begin{definition}
\label{def:LTSmorphism}
A \emph{morphism of LTSs} from $(S,i,\Delta)$ to $(S',i', \Delta')$ is
a function $\map{f}{S}{S'}$ such that $f(i) = i'$ and for every
$(s,a,s')\in\Delta$, $(f(s),a,f(s'))\in\Delta'$. LTSs and morphisms of LTSs form
a category, which we denote by $\trsy$.
\end{definition}

Some authors choose other notions of morphisms (e.g.~\cite{joyal96}), allowing them to 
operate between LTSs with different alphabets for example. The usual way of comparing 
LTSs is by using simulations and bisimulations \cite{park81}. The former describes what it 
means for a system to have at least the behaviours of another, the latter describes that two 
systems have exactly the same behaviours. Concretely:

\begin{definition}
\label{def:LTSsimulation}
A \emph{simulation} from $(S,i,\Delta)$ to $(S',i',\Delta')$ is a relation
$R \subseteq S\times S'$ such that (1) $(i,i') \in R$, and (2) for every $\trans s a t$ and $(s,s') \in R$,
there is $t'\in S'$ such that $\trans{s'}a{t'}$ and $(t,t') \in R$.
Such a relation $R$ is a \emph{bisimulation} if
$R^{-1} = \{(s',s) \mid (s,s') \in R\}$ is also a simulation.
\end{definition}

Morphisms of LTSs are functional simulations, i.e.~functions between states whose 
graph is a simulation. So how to model (1) systems, (2) functional simulations and 
(3) functional bisimulations categorically? In the next two sections, we will 
describe known answers to this question, with open maps and coalgebra.
In both cases, it is possible to capture similarity and bisimilarity of two LTSs 
$T$ and $T'$.
Generally, a simulation is a (jointly monic) span of a functional
bisimulation and a functional simulation, and a bisimulation is a simulation
whose converse is also a simulation, as depicted in \autoref{fig:categoricalApproaches}.
% (Shin-ya) By moving the figure to the top, the top margin of the figure can be saved.
%  \begin{figure}[t]
%%\vspace{-5mm}
%\centering
%\begin{tikzpicture}[scale=0.8]
%		
%	\node (Z) {$Z$};
%	\node (X) at (-2.5,0) {$T$};
%	\node (Y) at ( 2.5,0) {$T'$};
%	
%	\path[->,every node/.append style={align=center,anchor=south,font=\scriptsize,yshift=1mm}]
%		(Z) edge node {functional\\bisimulation} (X)
%		(Z) edge node {functional\\simulation} (Y);
%			
%\end{tikzpicture}
%\quad\quad\quad\quad
%\begin{tikzpicture}[scale=0.8]
%	\node (Z) {$Z$};
%	\node (X) at (-2.5,0) {$T$};
%	\node (Y) at ( 2.5,0) {$T'$};
%	
%	\path[->,every node/.append style={align=center,anchor=south,font=\scriptsize,yshift=1mm}]
%		(Z) edge node {functional\\bisimulation} (X)
%		(Z) edge node {functional\\bisimulation} (Y);
%			
%\end{tikzpicture}
%\caption{similarity (on the left) and bisimilarity (on the right) as the existence of a span}
%    \label{fig:span}
%\vspace{-5mm}
%\end{figure}
Consequently, to understand
similarity and bisimilarity on a general level, it is enough to understand
functional simulations and bisimulations.

\vspace{-2mm}
\subsection{Open maps}
\label{subsec:openmaps}

The categorical framework of open maps \cite{joyal96} assumes functional
simulations to be already modeled as a category $\M$. For example, for $\M :=
\trsy$, objects are LTSs, and morphisms are functional simulations.
Furthermore, the open maps framework assumes another category $\bbP$ of `paths'
or `linear systems', together with a functor $J$ that tells how a `path' is to be
understood as a system:%
\begin{notheorembrackets}%
\begin{definition}[{\cite{joyal96}}]
\label{def:openmapsituation}
An \emph{open map situation} is given by categories $\M$ (`systems' with
`functional simulations') and $\bbP$
(`paths') together with a functor $J\colon \bbP \rightarrow \M $.
\end{definition}
\end{notheorembrackets}
For example with $\M:= \trsy$, we pick $\bbP := (A^\star,\le)$ to
be the poset of words over $A$ with prefix
order. Here, the functor $J$ maps a word $w\in A^\star$ to
the linear system over $w$, and
$w \le v$ to the evident  functional simulation $J(w\le v)\colon Jw\to Jv$.

In an open map situation $J\colon \bbP\to\M$, we can abstractly represent
the concept of a {\em run} in a system. A run of a path $w\in \bbP$ in a
system $T\in\M$ is simply defined to be an $\M$-morphism of type
$Jw\to T$. With this definition, each $\M$-morphism $h\colon T\to T'$
(i.e.~functional simulation) inherently transfers runs: given a run
$x\colon Jw\to T$, the morphism $h\cdot x\colon Jw\to T'$ is a run of
$w$ in $T'$.  In the example open map situation
$J\colon (A^\star,\le)\to\trsy$, a run of a path $w=a_1\cdots a_n\in A^\star$ in an
LTS $T=(S,i,\Delta)$ is nothing but a sequence of states
$x_0,\ldots,x_n \in S$ such that $x_0=i$ and
$x_{k-1}\xrightarrow{a_k} x_k$ holds for all $1\le k \le n$.

% \begin{wrapfigure}[6]{r}[2pt]{0.22\textwidth}%
%   \vspace{-8mm}%
% \hfill\begin{tikzpicture}
%   \begin{scope}
%     \node (a1) {$i$};
%     \node (b1) at +(1, 0.2) {$b$};
%     \node (c1) at +(1,-0.2) {$c$};
%     \node (d1) at ($ (c1) + (1, 0) $) {$d$};
%     \path[->]
%       ([xshift=-2mm]a1.west) edge (a1)
%       (a1) edge (b1)
%       (a1) edge (c1)
%       (c1) edge (d1)
%       ;
%   \end{scope}
%   \begin{scope}[yshift=-10mm]
%     \node (a2) {$i$};
%     \node (bc2) at +(1, 0) {$\{b,c\}$};
%     \node (d2) at ($ (bc2) + (1, 0) $) {$d$};
%     \path[->]
%       ([xshift=-2mm]a2.west) edge (a2)
%       (a2) edge (bc2)
%       (bc2) edge (d2)
%       ;
%   \end{scope}
%   \path[commutative diagrams/.cd, every arrow, every label]
%      (a1) edge[commutative diagrams/mapsto] node[swap] {$h$} (a2)
%      (c1) edge[commutative diagrams/mapsto] (bc2)
%      (d1) edge[commutative diagrams/mapsto] (d2)
%      ;
% \end{tikzpicture}%
% \vspace{-2mm}
% \caption{}%
% \label{fig:runreflection}
% \end{wrapfigure}

We introduce the concept of open map \cite{joyal96}. This is an
abstraction of the property posessed by {\em functional
  bisimulations}.  For LTSs $T=(S,i,\Delta)$ and $T'=(S',i',\Delta')$,
an $\trsy$-morphism $h\colon T\to T'$ is a functional bisimulation if
the graph of $h$ is a bisimulation. % :
% for any $h(x)\xrightarrow{a} y'$, there exists $y$ such that
% $x\xrightarrow{a}y$ and $h(y) = y'$.
This implies the
following relationship between runs in $T$ and runs in $T'$. Suppose
that $w\le w'$ holds in $A^\star$, and a run $x$ of $w$ in $T$ is
given as in \eqref{eq:sysT}; here $n,m$ are lengths of $w,w'$
respectively.  Then for any run $y'$ of $w'$ in $T'$ extending
$h\cdot x$ as in \eqref{eq:sysTT}, there is a run $x'$ of $w'$
extending $x$, and moreover its image by $h$ coincides with $y'$ (that
is, $h\cdot x'=y'$).  Such $x'$ is obtained by repetitively applying
the condition of functional bisimulation.
\begin{align}
  \label{eq:sysT}
  \rightarrow\underbrace{\overbrace{ i\xrightarrow{w_1} x_1\xrightarrow{w_2} \cdots\xrightarrow{w_n} x_n}^x\xrightarrow{w'_{n+1}} x'_{n+1}\xrightarrow{w'_{n+2}}\cdots\xrightarrow{w'_{m}} x'_m}_{x'}
  \quad (\text{in $T$})
\\
  \label{eq:sysTT}
  \rightarrow\underbrace{ i'\xrightarrow{w_1} h(x_1)\xrightarrow{w_{2}} \cdots\xrightarrow{w_n} h(x_n)\xrightarrow{w'_{n+1}} y'_{n+1}\xrightarrow{w_{n+2}}\cdots\xrightarrow{w'_{m}} y'_m}_{y'}
  \quad (\text{in $T'$})
\end{align}
Observe that $y'$ extending $h\cdot x$ can be represented as
$y'\cdot J(w\le w')=h\cdot x$, and $x'$ extending $x$ as
$x'\cdot J(w\le w')=x$. From these, we conclude that if an
$\trsy$-morphism $h\colon T\to T'$ is a functional bisimulation,
then for any $w\le w'$ in $A^\star$ and run $x\colon Jw\to T$ and
$y'\colon Jw'\to T'$ such that $y'\cdot J(w\le w')=h\cdot x$, there is a
run $x'\colon Jw'\to T$ such that $x'\cdot J(w\le w')=x$ and $h\cdot x'=y'$
(the converse also holds if all states of $T$ are reachable). This
necessary condition of functional bisimulation can be rephrased in
any open map situation, leading us to the definition of open map.

% Since bisimilarity is a local
% property and since we consider rooted systems, it is sufficient to
% quantify only over those $x$ in $(S,i,\Delta)$ that are reachable from
% $i$ via a run:
\smallskip
\noindent
\begin{minipage}{.8\textwidth}
\begin{notheorembrackets}%
\begin{definition}[{\cite{joyal96}}]%
  \label{def:openmap}
  Let $J\colon \bbP\to \M$ be an open map situation.
  An $\M$-morphism $\map{h}{T}{T'}$ is said to be \emph{open} if for every morphism
  $\Phi\colon w\to w'\in\bbP$ making the square on the right commute, there is $x'$
  making the two triangles commute.
\end{definition}%
\end{notheorembrackets}
\end{minipage}%
\hfill%
\begin{minipage}{.18\textwidth}
  \hfill%
  \begin{tikzcd}
    Jw
    \arrow{r}{x}
    \arrow{d}[swap]{J\Phi}
    & T
    \arrow{d}{h}
    \\
    Jw'
    \arrow{r}{y'}
    \arrow[dashed]{ur}[sloped,above]{\exists x'}
    & T'
  \end{tikzcd}%
  \hspace{-2mm}%
\end{minipage}

% \smallskip This definition reads as follows in $\trsy$. A morphism $h\colon T\to
% T'$ is open if for all runs $x\colon Jw\to T$ (going from the initial state $i$ to
% $x(\bullet_n)$ in $T$), every longer run $x'\colon Jv\to T'$, $v\ge w$,
% continuing at $h(x(\bullet_n))$ (i.e.~$x'\cdot J\Phi = h\cdot x$) is reflected
% by $h$ (i.e.~$h\cdot d = x'$), and this reflected run $d$ continues at
% $x(\bullet_n)$ (i.e.~$d(\bullet_n) = x(\bullet_n)$ or $d\cdot J\Phi = x$). For
% example, $h$ in Figure~\ref{fig:runreflection} is not open because this
% reflection fails for $x$ with $x(\bullet_1) = b$.
\smallskip
Open maps are closed under composition and stable under pullback~\cite{joyal96}.

\subsection{Coalgebras}
\label{subsec:coalgebras}

The theory of G-coalgebras is another categorical framework to study
bimulations. The type of systems is modelled using an endofunctor
$\map{G}{\C}{\C}$ and a system is then a coalgebra for this functor, that is, a
pair of an object $S$ of $\C$ (modeling the state space), and of a morphism of
type $S \longrightarrow GS$ (modeling the transitions). For example for LTSs,
the transition relation is of type $\Delta \subseteq S\times A\times S$.
Equivalently, this can be defined as a function $\map{\Delta}{S}{\pow(A\times
  S)}$, where $\pow$ is the powerset. In other words, the transition relation is a
coalgebra for the Set-functor $\pow(A\times\arg)$. Intuitively, this coalgebra
gives the one-step behaviour of an LTS: $S$ describes the state space of the
system, $\pow$ describes the `branching type' as being non-deterministic, $
A\times S$ describe the `computation type' as being linear, and the function
itself lists all possible futures after one-step of computation of the system.
Now, changing the underlying category or the endofunctor allows to model
different types of systems. This is the usual framework of coalgebra, as
described for example in \cite{rutten00}.

Initial states are modelled coalgebraically by a pointing to the
carrier $i\colon I\to S$ for a fixed object $I$ in $\C$, describing the `type of
initial states' (see e.g.~\cite[Sec.~3B]{adamek13}). For example, an initial state of an LTS is the same as a
function from the singleton set $I:=\{\ast\}$ to the state space $S$. This
object $I$ will often be the final object of $\C$, but we will see other
examples later. In total, an \emph{$I$-pointed $G$-coalgebra} is a $\C$-object
$S$ together with morphisms $\alpha\colon S\to GS$ and $i\colon I\to S$. E.g.~an
LTS is an $I$-pointed $G$-coalgebra for $I=\{*\}$ and $GX = \pow(A\times X)$.

In coalgebra, functional bisimulations are the first class citizens to be
modelled as homomorphisms. The intuition is that those
preserve the initial state, and preserve and reflect the one-step
relation.

\smallskip
\noindent\begin{minipage}[b]{.75\textwidth}
\begin{definition}
\label{def:homomorphism}
An \emph{$I$-pointed $G$-coalgebra homomorphism} from
$I \xrightarrow{\,i\,} S \xrightarrow{\,\alpha\,} GS$ to
$I \xrightarrow{\,i'\,} S' \xrightarrow{\,\alpha'\,} GS'$ is a morphism
$\map{f}{S}{S'}$ making the right-hand diagram commute.
\end{definition}
\end{minipage}%
\hfill%
\begin{tikzcd}[sep=5mm, row sep=4mm, baseline=(Sp.base)]
  I
  \arrow{r}{i}
  \arrow{dr}[swap]{i'}
  & S
  \arrow{r}{\alpha}
  \arrow{d}{f}
  & GS
  \arrow{d}{Gf}
  \\
  & |[alias=Sp]| S'
  \arrow{r}{\alpha'}
  & GS'
\end{tikzcd}
\hspace*{-2.5mm}% Align the GS with the text margin

\smallskip For instance, when $G=\pow(A\times \arg)$, one can easily see that a
function $f$ is a $G$-coalgebra homomorphism iff it is a functional bisimulation.
Thus, if we want to capture
functional simulations in LTSs, we need to weaken the condition of homomorphism to 
the inequality $Gf(\alpha (s))\subseteq \alpha'(f(s))$ (instead of equality). To
express this condition for general $G$-coalgebras, we introduce a partial order
$\sqsubseteq_{X,Y}$ on each homset $\C(X,GY)$ in a functorial
manner. % (see e.g. \cite{hughes04}).

\begin{definition}
  \label{def:homsetOrder}
  A \emph{partial order on $G$-homsets} is a functor
  ${\sqsubseteq}\colon \C^{\oper{op}}\times\C\to\Pos$ such that
  $U\cdot {\sqsubseteq}=\C(\arg,G\arg)$; here, $U\colon \Pos\to\Set$ is the
  forgetful functor from the category $\Pos$ of posets and monotone functions.
\end{definition}
\noindent The functoriality of $\sqsubseteq$
amounts to that $f_1\sqsubseteq f_2$ implies
$Gh\cdot f_1\cdot g\sqsubseteq Gh\cdot f_2\cdot g$.
\vspace{3mm}

\noindent\begin{minipage}[C]{.75\textwidth}
\begin{definition}
  \label{def:laxhomomorphism}
  Given a partial order on $G$-homsets,
  an \emph{$I$-pointed lax $G$-coalgebra homomorphism}
  $\map{f}{(S,\alpha,i)}{(S',\alpha',i')}$ is a morphism $f\colon S\to S'$
  making the right-hand diagram commute. The
  $I$-pointed $G$-coalgebras and lax \text{homomorphisms} form a category, denoted by
  $\LCoalg(I,G)$.
\end{definition}
\end{minipage}%
\hfill%
\begin{tikzcd}[sep=5mm, row sep=4mm]
  I
  \arrow{r}{i}
  \arrow{dr}[swap]{i'}
  & S
  \arrow{r}{\alpha}
  \arrow{d}{f}
  \descto[xshift=1mm]{dr}{\downinclusion}
  & GS
  \arrow{d}{Gf}
  \\
  & |[alias=Sp]| S'
  \arrow{r}[swap]{\alpha'}
  & GS'
\end{tikzcd}
\hspace*{-2.5mm}% Align the GS with the text margin

\begin{conclusion}
  \label{ex:ltsOpen}
  In $\Set$, with $I=\{\ast\}$, $G = \pow(A\times\arg)$, define the order 
$f \sqsubseteq g$ in $\Set(X, \pow(A\times Y))$ 
iff for every $x \in X$, $f(x) \subseteq g(x)$. Then
  $\LCoalg(\{\ast\},\pow(A\times \arg)) = \trsy$.
  In particular, we have an open map situation \\[1mm]
  \hspace*{8mm}\(
   ~\bbP = (A^\star,\le) ~~\overset{J}{\to}~~ \M =\trsy = \LCoalg(\{\ast\}, \pow(A\times \arg))
  \)\\[1mm]
and the open maps are precisely the coalgebra homomorphisms (for reachable LTSs).
In this paper, we will construct a path category $\bbP$ for more general
$I$ and $G$, such that the open morphisms are precisely the coalgebra homomorphisms.
\end{conclusion}

\section{The open map situation in coalgebras}
\label{sec:pathcat}
Lasota's construction~\cite{lasota02} transforms an open map situation $J\colon
\bbP \to \M$ into a functor $G$ (with a partial order on $G$-homsets), together
with a functor $\oper{Beh}\colon \M\to \LCoalg(I,G)$ that sends open maps to
$G$-coalgebra homomorphisms (see section~\ref{subsec:Lasota} for details). In
this paper, we provide a construction in the converse direction for functors $G$
of a certain shape.

As exemplified by LTSs, it is a common pattern that $G$ is the composition
$G=TF$ of two functors~\cite{hasuo07}, where $T$ is the branching type (e.g.~partial,
or non-deterministic) and $F$ is the data type, or the `linear behaviour'
(words, trees, words modulo $\alpha$-equivalence). If we
instantiate our path-construction to $T=\pow$ and $F=A\times \arg$, we obtain the
known open map situation for LTSs (\autoref{ex:ltsOpen}).

Fix a category $\C$ with pullbacks, functors $T,F:\C\to\C$, an object
$I\in\C$ and a partial order $\sqsubseteq^T$ on $T$-homsets. They
determine a coalgebra situation $(\C,I,TF,\sqsubseteq)$ where
$\sqsubseteq$ is the partial order on $TF$-homsets defined by
${\sqsubseteq_{X,Y}}={\sqsubseteq_{X,FY}^T}$.
Under some conditions on $T$ and $F$, we construct a path-category
$\Path(I,F+1)$ and an open map situation
$\Path(I,F+1)\hookrightarrow \LCoalg(I,TF)$ where
$TF$-coalgebra homomorphisms and
$\Path(I,F+1)$-open morphisms coincide. 

% In this section, we study $I$-pointed $TF$-coalgebras and their lax
% homomorphisms $\LCoalg(I,TF)$. Here $F$ handles the linear behaviour of the
% coalgebras, e.g.~$FX=A\times X$, and $T$ models the branching behaviour,
% e.g.~powerset $\pow\colon \Set\to\Set$. We present conditions on $T$ and $F$ and
% a definition of a path-category $\Path(I,F+1)\hookrightarrow \LCoalg(I,TF)$ such
% that the coalgebra homomorphisms in $\LCoalg(I,TF)$ are $\Path(I,F+1)$-open
% morphisms and that $\Path(I,F+1)$-open morphisms are coalgebra homomorphisms.
% >>>>>>> a104f034118bd5d22ac7c81ddf5656fd8253b25c
\subsection{Precise morphisms}
\label{subsec:preciseMorph}
While the path category is intuitively clear for $FX=A\times X$, it is not for
inner functors $F$ that model tree languages. For example for $FX=A + X\times X$, a
$\pow F$-coalgebra models transition systems over binary trees with leaves
labelled in $A$, instead of over words. Hence, the paths should be these kind of
binary trees. We capture the notion of tree like shape (``every node in a tree
has precisely one route to the root'') by the following abstract definition:

\begin{definition}
  For a functor $F\colon \C\to \C$, a morphism $s\colon S\to FR$ is called
  \emph{$F$-precise} if for all $f,g,h$ the following implication holds:
  \[
    \begin{tikzcd}[sep=4mm]
      S
      \arrow{r}{f}
      \arrow{d}[swap]{s}
      & FC
      \arrow{d}{Fh}
      \\
      FR
      \arrow{r}{Fg}
      & FD
    \end{tikzcd}
    \quad\overset{\exists d\,}{\Longrightarrow}\quad
    \begin{tikzcd}[sep=4mm]
      S
      \arrow{r}{f}
      \arrow{d}[swap]{s}
      & FC
      \\
      FR
      \arrow{ur}[swap]{Fd}
    \end{tikzcd}
    \&
    \begin{tikzcd}[sep=4mm]
      & C
      \arrow{d}{h}
      \\
      R
      \arrow{r}{g}
      \arrow{ur}{d}
      & D
    \end{tikzcd}
  \]
\end{definition}
\begin{remark}
\label{rem:PreciseWeakPullbacks}
  If $F$ preserves weak pullbacks, then a morphism $s$ is $F$-precise iff it
  fulfils the above definition for $g=\id$.
\end{remark}
\begin{example}
  \label{exPreciseMap}
  Intuitively speaking, for a polynomial $\Set$-functor $F$, a map $s\colon 
  S\rightarrow
  FR$ is $F$-precise iff every element of $R$ is mentioned precisely once in the
  definition of the map $f$. For example, for $FX = A\times X + \{\bot\}$, the
  case needed later for LTSs, a map $f\colon X\to FY$ is precise iff for every
  $y \in Y$, there is a unique pair $(x,a) \in X\times A$ such that $f(x) =
  (a,y)$. For $FX=X\times X +\{\bot\}$ on $\Set$, the map $f\colon X\to FY$ in
  \autoref{figExPrecise} is not $F$-precise, because $y_2$ is used three times
  (once in $f(x_2)$ and twice in $f(x_3)$), and $y_3$ and $y_4$ do not occur in
  $f$ at all. However, $f'\colon X\to FY'$ is $F$-precise because every element
  of $Y'$ is used precisely once in $f'$, and we have that $Fh\cdot f' = f$.
  Also note that $f'$ defines a forest where $X$ is the set of roots, which is
  closely connected to the intuition that, in the $F$-precise map $f'$, from
  every element of $Y'$, there is precisely one edge up to a root in $X$.
\end{example}

\begin{figure}[t] \centering
  \begin{tikzpicture}[xscale = 1,yscale=0.6]
    \foreach \prefix/\varname/\labeltext/\count/\xshift in
    {x/x/X/4/0,y/y/Y/4/2,xcopy/x/X/4/5,
      yprime/{\ensuremath{y'}}/{\ensuremath{Y'}}/4/7,ycopy/y/Y/4/9} {
      \foreach \n in {1,...,\count} {
        \node[named vertex] (\prefix\n) at (\xshift cm,-\n) {$\varname_\n$};
      }
      \node[set,fit=(\prefix1) (\prefix\count),label={[name=\prefix]$\labeltext$}] (\prefix domain) {};
    }
    \path[myedge] (x2) to (y1) ;
    \path[myedge] (x2) to[bend left=20] (y2.north west) ;
    \path[myedge] (x3) to[bend left] (y2) ;
    \path[myedge] (x3) to[bend right] (y2) ;
    \foreach \node in {x1,x4,xcopy1,xcopy4} {
      \path[myedge] (\node) to node[pos=1]{$\bot$} +(8mm,0) ;
    }
    \path[myedge] (xcopy2) to (yprime1) ;
    \path[myedge] (xcopy2) to (yprime2) ;
    \path[myedge] (xcopy3) to (yprime3) ;
    \path[myedge] (xcopy3) to (yprime4) ;
    \path[myedge,|->] (yprime1) to (ycopy1);
    \path[myedge,|->] (yprime2) to[bend left=15] (ycopy2);
    \path[myedge,|->] (yprime3) to[bend right=0] (ycopy2);
    \path[myedge,|->] (yprime4) to[bend right=18] (ycopy2);
    \path[every label]
    (x) edge[draw=none] node {$f$} (y)
    (xcopy) edge[draw=none] node {$f'$} (yprime)
    (yprime) edge[draw=none] node {$h$} (ycopy)
    %(ydomain) edge[draw=none] node {$\leadsto$} (xcopydomain)
    ;
  \end{tikzpicture}
  \caption{A non-precise map $f$ that factors through the $F$-precise $f'\colon
    X\to Y'\times Y' +\{\bot\}$}
  \label{figExPrecise}
  \vspace{-5mm}
\end{figure}

So when transforming a non-precise map into a precise map, one duplicates
elements that are used multiple times and drops elements that are not used. We
will cover functors $F$ for which this factorization pattern provides
$F$-precise maps. If $F$ involves unordered structure, this factorization needs
to make choices, and so we restrict the factorization to a class
$\S$ of objects that have that choice-principle (see \autoref{ex:whyClassS} later):\\

%\vspace{-2mm}
\noindent
\begin{minipage}{.78\textwidth}%
\begin{definition} \label{defPreciseFactor}
  Fix a class of objects $\S\subseteq \obj\C$ closed under isomorphism. We say
  that \emph{$F$ admits precise factorizations w.r.t.~$\S$}  if for every
  $f\colon S\rightarrow FY$ with $S\in \S$, there exist $Y'\in \S$, $h\colon Y'\rightarrow Y$ and
  $f'\colon S\rightarrow FY'$ $F$-precise with $Fh\cdot f' = f$.
\end{definition}
\end{minipage}\hfill%
\begin{minipage}{.210\textwidth}%
\vspace{-7mm}
\hfill%
\[
    \begin{tikzcd}
      S
      \arrow{dr}[swap]{\forall f}
      \arrow[dashed]{r}{\exists f'}
      & FY'
      \arrow{d}{Fh}
      \\
      &  FY
    \end{tikzcd}
  \]
  \end{minipage}\\
  
\noindent For $\C=\Set$, $\S$ contains all sets. However for
the category of nominal sets, $\S$ will only contain the strong nominal sets
(see details in \autoref{sec:RNNA}).

\begin{remark} \label{remPreciseFactor} Precise morphisms are essentially
  unique. If $f_1\colon X\to FY_1$ and $f_2\colon X\to FY_2$ are $F$-precise and
  if there is some $h\colon Y_1\to Y_2$ with $Fh\cdot f_1 = f_2$, then $h$ is an
  isomorphism. Consequently, if $f\colon S\to FY$ with $S\in \S$ is $F$-precise and
  $F$-admits precise factorizations, then $Y\in \S$.
\end{remark}

\noindent Functors admitting precise factorizations are closed under basic
constructions:
\begin{proposition} \label{propFactorizationClosure}
  The following functors admit precise factorizations w.r.t.~$\S$:\\
 1.~ Constant functors, if $\C$ has an initial object $0$ and $0 \in \S$.\\
 2.~ $F\cdot F'$ if $F\colon \C\to \C$ and  $F'\colon \C\to \C$ do so.\\
 3.~ $\prod\limits_{i\in I} F_i$, if all $(F_i)_{i\in I}$ do so and
      $\S$ is closed under $I$-coproducts.\\
 4.~ $\coprod\limits_{i\in I} F_i$, if all $(F_i)_{i\in I}$ do so, 
      $\C$ is $I$-extensive and $\S$ is closed under $I$-coproducts.\\
 5.~ Right-adjoint functors, if and only if its left-adjoint preserves $\S$-objects.
\end{proposition}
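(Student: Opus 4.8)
The plan is to isolate one reusable lemma and then dispatch the five items in turn. \emph{Composition lemma.} If $s\colon S\to FR$ is $F$-precise and $t\colon R\to F'Q$ is $F'$-precise, then $Ft\cdot s\colon S\to F(F'Q)$ is $FF'$-precise. This is a two-step diagram chase: given a commuting square $FF'h\cdot a=FF'g\cdot(Ft\cdot s)$ with $a\colon S\to FF'C$, rewrite its right-hand side as $F(F'g\cdot t)\cdot s$ and apply $F$-precision of $s$, obtaining $d_1\colon R\to F'C$ with $Fd_1\cdot s=a$ and $F'h\cdot d_1=F'g\cdot t$; then apply $F'$-precision of $t$ to the latter equation, obtaining $d\colon Q\to C$ with $F'd\cdot t=d_1$ and $h\cdot d=g$; finally $FF'd\cdot(Ft\cdot s)=F(F'd\cdot t)\cdot s=Fd_1\cdot s=a$.

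\emph{Items 1--3.} For a constant functor $F$ with value $A$, given $f\colon S\to FY$ I take $Y'=0$, $h\colon 0\to Y$ the unique morphism, and $f'=f\colon S\to A=F0$; precision of $f'$ is immediate because $F$ sends every morphism to $\id_A$, so the hypothesis of precision forces the competing morphism out of $S$ to be $f'$ itself, while the required morphism out of $0$ exists and makes both triangles commute by initiality (this uses $0\in\S$). For $F\cdot F'$, I precisely factor $f\colon S\to FF'Y$ through $F$ as $Fh_1\cdot f_1$ with $f_1$ $F$-precise and codomain $Z\in\S$, then precisely factor $h_1\colon Z\to F'Y$ through $F'$ as $F'h_2\cdot g_1$ with $g_1$ $F'$-precise and codomain $Y'\in\S$; then $f=FF'h_2\cdot(Fg_1\cdot f_1)$ and $Fg_1\cdot f_1$ is $FF'$-precise by the lemma. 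For $\prod_{i\in I}F_i$, a morphism $f\colon S\to\prod_iF_iY$ is a tuple $(f_i\colon S\to F_iY)_i$; I precisely factor each $f_i=F_ih_i\cdot f_i'$ with $f_i'\colon S\to F_iY_i'$ $F_i$-precise and $Y_i'\in\S$, set $Y'=\coprod_iY_i'\in\S$ (closure under $I$-coproducts), $h=[h_i]_i\colon Y'\to Y$, and let $f'\colon S\to\prod_iF_iY'$ have $i$-th component $F_i\kappa_i\cdot f_i'$ where $\kappa_i\colon Y_i'\to Y'$ are the injections; then $(\prod_iF_i)(h)\cdot f'=f$ componentwise, and a test square for $f'$ splits into test squares for the $f_i'$ whose componentwise mediators $d_i$ cotuple to a mediator $[d_i]_i$ for $f'$.

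\emph{Item 4.} For $\coprod_{i\in I}F_i$ I use $I$-extensivity: the morphism $f\colon S\to\coprod_iF_iY$ induces a decomposition $S\cong\coprod_iS_i$ with $f=\coprod_if_i$, $f_i\colon S_i\to F_iY$; precisely factoring each $f_i=F_ih_i\cdot f_i'$ and setting $Y'=\coprod_iY_i'\in\S$, $h=[h_i]_i$, $f'=\coprod_i(F_i\kappa_i\cdot f_i')$ gives $(\coprod_iF_i)(h)\cdot f'=f$. For precision, a test square for $f'$ is handled summand by summand, again via extensivity; the crucial point is that the decomposition of $S$ extracted from a test morphism $a\colon S\to\coprod_iF_iC$ agrees with the fixed one $S\cong\coprod_iS_i$, because the commuting square forces the summand $S_i$ to map into $F_iC$ and not into some $F_jC$ with $j\neq i$. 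This compatibility is precisely what extensivity provides (the comparison functor $\prod_i\C/X_i\to\C/\coprod_iX_i$ being an equivalence), and managing it while keeping the injections $\kappa_i$ straight is the one genuinely delicate point of the whole proof.

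\emph{Item 5.} Let $L\dashv F$ with unit $\eta$. Transposing across the adjunction, $s\colon S\to FR$ is $F$-precise iff its mate $\bar s\colon LS\to R$ is an isomorphism. ($\Leftarrow$) If $\bar s$ is invertible, the mediator for a square $Fh\cdot f=Fg\cdot s$ is $d:=\bar f\cdot\bar s^{-1}$, which satisfies $Fd\cdot s=f$ and $h\cdot d=g$ by transposing that square. ($\Rightarrow$) Apply $F$-precision of $s$ to the square with competing morphism $\eta_S\colon S\to FLS$, with $h:=\bar s$ and $g:=\id_R$, which commutes since $F\bar s\cdot\eta_S=s$; this produces $d\colon R\to LS$ with $\bar s\cdot d=\id_R$, and transposing the other equation $Fd\cdot s=\eta_S$ gives $d\cdot\bar s=\id_{LS}$, so $\bar s$ is iso. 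Hence a precise factorization of a given $f\colon S\to FY$ with $S\in\S$ is exactly a factorization of the mate $\bar f\colon LS\to Y$ through an isomorphism onto some object of $\S$, which exists iff $LS\in\S$ (take $Y'=LS$, $h=\bar f$) --- i.e.\ iff the left adjoint $L$ preserves $\S$-objects, as claimed.
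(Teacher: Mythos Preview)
Your proof is correct and follows the paper's approach closely: the composition lemma and the treatment of items 1--4 (constant functors via the initial object, composites via the lemma, products and coproducts via componentwise factorization and cotupling, extensivity for the coproduct case) all match the paper's arguments essentially step for step.

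The one place where you diverge slightly is item~5. The paper first shows directly that the unit $\eta_X\colon X\to FLX$ is $F$-precise (using that right adjoints preserve weak pullbacks, so the simplified precision criterion of Remark~\ref{rem:PreciseWeakPullbacks} applies), and then invokes Remark~\ref{remPreciseFactor} on essential uniqueness of precise factorizations to obtain $LX\in\S$ for the ``only if'' direction. You instead prove the sharper characterization that $s\colon S\to FR$ is $F$-precise iff its adjoint transpose $\bar s\colon LS\to R$ is an isomorphism, and read off both directions at once. Your route is a little more conceptual and avoids the detour through weak pullback preservation; the paper's route is more hands-on but reuses machinery already in place. Either way the content is the same: the canonical precise factorization of $f\colon S\to FY$ is $F\bar f\cdot\eta_S$ with $Y'=LS$.
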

\begin{example} \label{powersetNotPrecise} When $\C$ is infinitary extensive 
and $\S$ is closed under coproducts, every polynomial endofunctor 
$\map{F}{\C}{\C}$ admits precise factorizations w.r.t. $\S$. This is in particular 
the case for $\C = \S = \Set$. In this case, we shall see later (Section 
\ref{sec:treeAutomata}) that many other \Set-functors, e.g. the bag functor $\B$, 
where $\B(X)$ is the set of finite multisets, have precise factorizations. 
  In contrast, $F=\pow$ does not
  admit precise factorizations, and if $f\colon X\to \pow Y$ is $\pow$-precise,
  then $f(x) = \emptyset$ for all $x\in X$.
\end{example}

\subsection{Path categories in pointed coalgebras} \label{secPathCategory} We
define a path for $I$-pointed $TF$-coalgebras as a tree according to $F$.
Following the observation in \autoref{exPreciseMap}, one layer of the tree is
modelled by a $F$-precise morphism and hence a path in a $TF$-coalgebra is
defined to be a finite sequence of $(F+1)$-precise maps, where the $\arg+1$ comes
from the dead states w.r.t.~$T$; the argument is given later in
\autoref{whyPlus1} when reachability is discussed. Since the $\arg+1$ is not
relevant yet, we define $\Path(I,F)$ in the following and will use
$\Path(I,F+1)$ later.
\begin{figure}[t]
  \centering
  \begin{tikzpicture}[xscale = 1,yscale =0.5]
    \begin{scope}[xshift=-2cm,yshift=5mm]
      \node[] (si1) at (0,0) {$\ast$};
      \node[set,fit=(si1),label={[name=S0]$P_0$}] {};
    \end{scope}
    \begin{scope}[xshift=0]
      \node[vertex] (s01) at (0,0) {};
      \node[vertex] (s02) at (0,1) {};
      \node[set,fit=(s01) (s02),label={[name=S1]$P_1$}] {};
    \end{scope}
    \begin{scope}[xshift=2cm,yshift=-5mm]
      \node[vertex] (s11) at (0,0) {};
      \node[vertex] (s12) at (0,1) {};
      \node[vertex] (s13) at (0,2) {};
      \node[set,fit=(s11) (s13),label={[name=S2]$P_2$}] {};
    \end{scope}
    \begin{scope}[xshift=4cm,yshift=-5mm]
      \node[vertex] (s21) at (0,0) {};
      \node[vertex] (s22) at (0,1) {};
      \node[vertex] (s23) at (0,2) {};
      \node[set,fit=(s21) (s23),label={[name=S3]$P_3$}] {};
    \end{scope}
    \begin{scope}[xshift=6cm,yshift=0mm]
      \node[vertex] (s41) at (0,0) {};
      \node[vertex] (s42) at (0,1) {};
      \node[set,fit=(s41) (s42),label={[name=S4]$P_4$}] {};
    \end{scope}
    \path[myedge] (si1) to (s01) ;
    \path[myedge] (si1) to (s02) ;
    \path[myedge] (s01) to (s11) ;
    \path[myedge] (s01) to (s12) ;
    \path[myedge] (s02) to node {$a$} (s13) ;
    \path[myedge] (s11) to node {$a$} (s21) ;
    \path[myedge] (s12) to (s22) ;
    \path[myedge] (s12) to (s23) ;
    \path[myedge] (s13) to node[pos=1]{$\bot$} +(8mm,0) ;
    \path[myedge] (s23) to node[pos=1]{$\bot$} +(8mm,0) ;
    \path[myedge] (s21) to node[pos=1]{$\bot$} +(8mm,0) ;
    \path[myedge] (s22) to (s41) ;
    \path[myedge] (s22) to (s42) ;
    \path[every label]
      (S0) edge[draw=none] node {$p_0$} (S1)
      (S1) edge[draw=none] node {$p_1$} (S2)
      (S2) edge[draw=none] node {$p_2$} (S3)
      (S3) edge[draw=none] node {$p_3$} (S4)
      ;
  \end{tikzpicture}
  \caption{
    A path of length $4$ for 
  $FX = \{a\}\times X + X\times X +\{\bot\}$ with $I=\{ \ast \}$.
  }
  \label{figPathExample}
  \vspace{-5mm}
\end{figure}
For simplicity, we write $\vec{X}_n$ for finite families $(X_k)_{0\le k < n}$.
\begin{definition}
  The category $\Path(I,F)$ consists of the following.
  An object is $(\vec P_{n+1},\vec p_n)$ for an $n\in \N$ with
    $P_0 = I$ and $\vec p_n$ a family of $F$-precise maps $(p_k\colon P_k\to
    FP_{k+1})_{k<n}$. We say that $(\vec P_{n+1},\vec p_n)$ is a \emph{path
      of length} $n$.
  A morphism
    $\vec \phi_{n+1}\colon (\vec P_{n+1},\vec p_{n}) \to (\vec Q_{m+1},\vec
    q_{m})$, $m\ge n$, is a family $(\phi_k\colon P_k \to Q_k)_{k\le n}$ with $\phi_0=\id_I$ and
    $q_k\cdot \phi_k = F\phi_{k+1}\cdot p_k$ for all $0 \le k \le n$.
    %\footnote{Note
%    that this is the obvious definition of morphism: Since a path consists of objects
%    $\vec P_{n+1}$ with a certain structure, a morphism between paths is a
%    morphism between the objects that preserves the structure.}
    %\[
    %  \begin{tikzcd}
    %    P_k
    %    \arrow{d}[swap]{\phi_k}
    %    \arrow{r}{p_k}
    %    & FP_{k+1} + 1
    %    \arrow{d}{F\phi_{k+1}+1}
    %    \\
    %    Q_k
    %    \arrow{r}{q_k}
    %    & FQ_{k+1} + 1
    %  \end{tikzcd}
    %  \quad\text{for all }k \le n.
    %\]
\end{definition}

\begin{example}
Paths for $FX = A\times X + 1$ and $I = \{*\}$ singleton are as follows. First, a 
map $\map{f}{I}{FX}$ is precise iff (up-to isomorphism) either $X = I$ and $f(\ast) = (a,
\ast)$ for some $a \in A$; or $X = \varnothing$ and $f(\ast) = \bot$. Then a path is 
isomorphic to an object of the form: $P_i = I$ for $i \leq k$, $P_i = \varnothing$ for $i > k$, 
$p_i(\ast) = (a_i,\ast)$ for $i < k$, and $p_k(\ast) = \bot$. A path is the same as a 
word, plus some ``junk'', concretely, a word in $A^\star.\bot^\star$. For LTSs,
an object in $\Path(I,F)$ with $FX=A\times X$ is simply a word in $A^\star$.
For a more complicated functor,
\autoref{figPathExample} depicts a path of length 4, which is a tree for the
signature with one unary, one binary
symbol, and a constant. The layers of the
tree are the sets $\vec{P}_4$. Also note that since every $p_i$ is $F$-precise,
there is precisely one route to go from every element of a $P_k$ to~$\ast$.
\end{example}

\begin{remark} \label{remarkPathCategory} The inductive continuation of
  \autoref{remPreciseFactor} is as follows. Given a morphism $\vec{\phi}_{n+1}$
  in $\Path(I,F)$, since $\phi_0$ is an isomorphism, then $\phi_k$ is an isomorphism
  for all $0\le k \le n$. If $F$ admits precise factorizations and if $I\in \S$,
  then for every path $(\vec
  P_{n+1},\vec p_{n})$, all $P_k$, $0\le k\le n$, are in $\S$.
  %For the main theorem we will assume that $I\in \S$.
\end{remark}
\begin{remark} \label{uniqueFactorization} If in \autoref{defPreciseFactor}, the
  connecting morphism $h\colon Y'\to Y$ uniquely exists, then it follows by
  induction that the hom-sets of
  $\Path(I,F)$ are at most singleton. This is the case for all polynomial
  functors, but not the case for the bag functor on sets (discussed
  in \autoref{sec:treeAutomata}).
\end{remark}

\vspace{-2mm}
\noindent
\begin{minipage}{.78\textwidth}%
\begin{definition}
  The \emph{path poset} $\PathOrd(I,F)$ is the set $ \coprod_{0 \le
    n}\C(I,F^n1)$ equipped with the order: for $u\colon I\to F^n1$ and
  $v\colon I\to F^m1$, we define $u\le v$ if $n \le m$ and 
  $F^n(!)\cdot v = u$.
\end{definition}%
\end{minipage}\hfill%
\begin{minipage}{.210\textwidth}%
\hfill%
\begin{tikzcd}[column sep=3mm,row sep=6mm]
      & \!F^{n}F^{m-n} 1
      \arrow{d}{F^n !}
      \\
      I
  \arrow[bend left=10]{ur}{v}
      \arrow{r}[yshift=1pt]{u}
      & F^n 1
\end{tikzcd}\hspace*{-2mm}%
\end{minipage}\\
So $u\le v$ if $u$ is the truncation of $v$ to $n$ levels. This matches the
morphisms in $\Path(I,F)$ that witnesses that one path is prefix of another:

\begin{proposition}
\label{prop:PathToPathOrd}
\noindent 1.~The functor $\Comp\colon \Path(I,F)\to \PathOrd(I,F)$ defined by
  \(
    I= P_0 \overset{p_0}{\rightarrow} FP_1\cdots\rightarrow
    F^{n}P_{n}\overset{F^{n}!}\rightarrow F^{n}1 \text{ on
      }(\vec P_{n+1},\vec p_{n})
  \)
  is full, and reflects isos.\\
\noindent 2.~If $F$ admits precise factorizations w.r.t.~$\S$ and $I\in \S$, then $\Comp$ is
  sujective.
\noindent 3.~If additionally $h$ in Def.~\ref{defPreciseFactor} is
    unique, then $\Comp$ has a right-inverse.%, by \autoref{uniqueFactorization}.
\end{proposition}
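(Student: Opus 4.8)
The plan is to prove the three statements in order, unwinding the definitions of $\Comp$, of morphisms in $\Path(I,F)$, and of the order on $\PathOrd(I,F)$. For part~1, I would first check that $\Comp$ is well-defined on morphisms: given $\vec\phi_{n+1}\colon(\vec P_{n+1},\vec p_n)\to(\vec Q_{m+1},\vec q_m)$ with $m\ge n$, the naturality squares $q_k\cdot\phi_k=F\phi_{k+1}\cdot p_k$ together with $\phi_0=\id_I$ let me show by induction on $k\le n$ that $F^k\phi_k\cdot(p_{k-1}\cdots p_0)=q_{k-1}\cdots q_0$, and then post-composing with $F^n(!)$ and using $F^n(!)\cdot F^n\phi_n=F^n(!)$ (since $!$ is unique to $1$) yields $\Comp(\vec q_m)\le\Comp(\vec p_n)$'s image — more precisely it yields exactly the inequality $\Comp(\vec P_{n+1},\vec p_n)\le\Comp(\vec Q_{m+1},\vec q_m)$ in $\PathOrd(I,F)$, where on the $\PathOrd$ side there is at most one morphism between any two objects, so there is nothing further to define. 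For \emph{fullness}: given paths $(\vec P_{n+1},\vec p_n)$, $(\vec Q_{m+1},\vec q_m)$ with $\Comp$-images $u\le v$, i.e. $F^n(!)\cdot(q_{n-1}\cdots q_0)=F^n(!)\cdot(p_{n-1}\cdots p_0)$ after truncation, I must produce $\vec\phi_{n+1}$. I would build $\phi_k$ by induction using $F$-preciseness of $p_k$: the preciseness square for $p_k$ with the appropriate $f,g,h$ assembled from $q_k$ and $\phi_k$ (and the truncation maps) gives a diagonal, which is $\phi_{k+1}$; the two resulting triangles are exactly the commutativity conditions $q_k\cdot\phi_k=F\phi_{k+1}\cdot p_k$ and the compatibility needed to continue the induction. \emph{Reflecting isos}: if $\Comp\vec\phi_{n+1}$ is an iso in $\PathOrd$ then $n=m$ and $u=v$; by Remark~\ref{remarkPathCategory} every $\phi_k$ is already an iso (since $\phi_0=\id_I$ is), so $\vec\phi_{n+1}$ is an iso in $\Path(I,F)$.

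For part~2, surjectivity of $\Comp$ on objects: given $u\colon I\to F^n1$, I construct a path of length $n$ by repeatedly applying that $F$ admits precise factorizations w.r.t.~$\S$. Starting from $P_0=I\in\S$, factor $u$ through $F(-)$ one layer at a time: writing $u$ as $I\xrightarrow{u_0}F(F^{n-1}1)$, precise-factorize $u_0=Fh_1\cdot p_0$ with $p_0\colon P_0\to FP_1$ precise and $P_1\in\S$ (using Remark~\ref{remPreciseFactor} for $P_1\in\S$); then $h_1\colon P_1\to F^{n-1}1$ is the next map to factorize, and iterate. After $n$ steps I obtain $(\vec P_{n+1},\vec p_n)$ with a final comparison map $h_n\colon P_n\to 1$ which is necessarily $!$, and by construction $F^n(!)\cdot(p_{n-1}\cdots p_0)=u$, i.e. $\Comp(\vec P_{n+1},\vec p_n)=u$. (Here I use Remark~\ref{remarkPathCategory} to know all $P_k\in\S$, so each factorization step is legitimate.)

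For part~3, if additionally the connecting morphism $h$ in Definition~\ref{defPreciseFactor} is unique, then by Remark~\ref{uniqueFactorization} the hom-sets of $\Path(I,F)$ are at most singletons, so $\Path(I,F)$ is a preorder; combined with part~1 (fullness, and faithfulness is automatic in a preorder target) and part~2, $\Comp$ is essentially an equivalence of preorders. The right-inverse is obtained by choosing, for each $u\in\PathOrd(I,F)$, the path produced in part~2; functoriality of this assignment follows because any two parallel morphisms in $\Path(I,F)$ coincide, so the choices are automatically compatible with the (unique) comparison morphisms, and $\Comp$ applied to the chosen path returns $u$ on the nose. I expect the main obstacle to be part~1's fullness: carefully setting up the three-morphism input to the preciseness diagram for each $p_k$ — in particular identifying which map plays the role of $g$ (the truncation $F^{n-k}(!)$, appearing because the two paths have different lengths) and checking the outer square commutes so that preciseness applies — is the one genuinely delicate bookkeeping step; everything else is diagram-chasing that the definitions make routine.
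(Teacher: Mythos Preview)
Your proposal is correct and follows essentially the same approach as the paper. The paper proves functoriality via a diagram that uses the inverses $\phi_k^{-1}$ (available by Remark~\ref{remarkPathCategory}), whereas your inductive argument $F^k\phi_k\cdot(F^{k-1}p_{k-1}\cdots p_0)=F^{k-1}q_{k-1}\cdots q_0$ avoids inverses entirely; both are fine. For fullness and surjectivity you do exactly what the paper does: build $\phi_{k+1}$ inductively by applying $F$-preciseness of $p_k$ to the square whose outer boundary is the truncation condition, and for part~2 iterate precise factorizations layer by layer starting from $u\colon I\to F^n1$. The paper's proof of part~3 is literally the one line ``By \autoref{uniqueFactorization}'', so your expanded explanation via $\Path(I,F)$ being a preorder is if anything more detailed than the original.
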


%\begin{proposition}
%  If $F$ admits precise factorizations w.r.t.~$S$ and $I\in \S$, then $\Comp$ is
%  surjective on objects. 
%\end{proposition}
%\begin{corollary}
%  If additionally the connecting morphism $h$ in \autoref{defPreciseFactor} is
%  unique, then $\Comp$ has a right-inverse, by \autoref{uniqueFactorization}.
%\end{corollary}

In particular, $\PathOrd(I,F)$ is $\Path(I,F)$ up to isomorphism. In the
instances, it is often easier to characterize $\PathOrd(I,F)$. This also shows
that $\Path(I,F)$ contains the elements -- understood as morphisms from $I$ --
of the finite start of the final chain of $F$: $
  1 \xleftarrow{!}
  F1 \xleftarrow{F!}
  F^21 \xleftarrow{F^2!}
  F^31 \xleftarrow{}\cdots.
$

\begin{example}
When $FX = A\times X + 1$, $F^n1$ is isomorphic to the set of words in $A^\star.\bot^\star$ of length $n$. 
Consequently, $\PathOrd(I,F)$ is the set of words in $A^\star.\bot^\star$, equipped with the prefix order. 
In this case, $\Comp$ is an equivalence of categories.
\end{example}

\subsection{Embedding paths into pointed coalgebras}
The paths $(\vec P_{n+1},\vec p_n)$ embed into $\LCoalg(I,TF)$ as one 
expects it for examples like \autoref{figPathExample}: one takes the disjoint
union of the $P_k$, one has the pointing $I=P_0$ and the linear structure of $F$
is embedded into the branching type $T$.

During the presentation of the results, we require $T$, $F$, and $I$ to have
certain properties, which will be introduced one after the other. The full list
of assumptions is summarized in \autoref{table:axioms}:\\
\indent \eqref{axFactor} -- The main theorem will show that coalgebra 
homomorphisms in $\LCoalg(I,TF)$ are the
open maps for the path category $\Path(I, F+1)$. So from now on, we assume that
$\C$ has finite coproducts and to use the results from the previous sections, we fix
a class $\S\subseteq \obj \C$ such that $F+1$ admits precise
factorizations w.r.t.~$\S$ and that $I\in \S$.\\
\indent \eqref{axJointlyepic} -- Recall, that a family of morphisms 
$(e_i\colon X_i \to Y)_{i\in I}$ with common
codomain is called jointly epic if for $f,g\colon Y\to Z$ we have that $f\cdot
e_i = g\cdot e_i~\forall i\in I$ implies $f=g$. For \Set, this means, that every
element $y\in Y$ is in the image of some $e_i$. Since we work with
partial orders on $T$-homsets, we also need the generalization of this property if 
$f\sqsubseteq g$ are of the form $Y\to TZ'$.\\
\indent \eqref{axUnitBot} -- In this section, we encode paths as a pointed coalgebra by 
constructing a functor $J\colon \Path(I,F+1)\hookrightarrow \LCoalg(I,TF)$.
For that we need to embed the linear behaviour $FX+1$ into $TFX$. This is done
by a natural transformation $[\eta,\bot]\colon \Id+1\to T$, and we require that
$\bot\colon 1\to T$ is a bottom element for $\sqsubseteq$.

\begin{example}
For the case where $T$ is the powerset functor $\pow$, $\eta$ is given by the unit 
$\eta_X(x) = \{x\}$, and $\bot$ is given by empty sets $\bot_X(\ast) = \varnothing$.
\end{example}

\begin{definition} \label{def:inclusionFunctor}
  We have an inclusion functor $J\colon \Path(I,F+1)\hookrightarrow
  \LCoalg(I,TF)$ that maps a path $(\vec P_{n+1},\vec p_n)$ to an $I$-pointed 
  $TF$-coalgebra on $\coprod\vec P_{n+1} := \coprod_{0\le k\le n} P_k$. The
  pointing is given by $\inj_0\colon I =P_0\to \coprod\vec P_{n+1} $ and the
  structure by:
  \[
    \smash{\coprod_{\mathclap{0 \le k < n}}} P_k + P_{n}
    \xrightarrow{[(F\inj_{k+1}+1)\cdot p_k]_{0\le k< n}+ !}
    F\coprod\vec P_{n+1}+ 1
    \xrightarrow{[\eta,\bot]}
    TF\coprod\vec P_{n+1}.
  \]
\end{definition}

\begin{example}
In the case of LTSs, a path, or equivalently a word $a_1...a_k.\bot...\bot \in 
A^\star.\bot^\star$, is mapped to the finite linear system over
$a_1...a_k$ (see Section \ref{subsec:openmaps}), seen as a coalgebra 
(see Section \ref{subsec:coalgebras}).
\end{example}

\begin{proposition}\label{lem:stepwiseFromPath}
  Given a morphism $[x_k]_{k\le n}\colon \coprod\vec P_{n+1} \to X$ for
  some system $(X,\xi,x_0)$ and a path $(\vec P_{n+1},\vec p_{n})$, we have
  \vspace{-1mm}
  \[
    \begin{array}{c}
    J(\vec P_{n+1},\vec p_{n}) \xrightarrow{[x_k]_{k\le n}} (X,\xi,x_0)
      \\[1mm]
      \text{a run in }\LCoalg(I,TF)
    \end{array}
    ~\Longleftrightarrow
    \forall k < n\colon\hspace{-4mm}
    \begin{tikzcd}[column sep=5mm,row sep=3mm]
      P_k
      \arrow{rr}{x_k}
      \arrow{d}[swap]{p_k}
      & & X
      \arrow{d}{\xi}
      \\
      FP_{k+1}+1
      \arrow{r}[yshift=1.5mm]{Fx_{k+1}+1}
      \descto[xshift=-2mm]{urr}{$\sqsubseteq$}
      & FX+1
      \arrow{r}[yshift=1.5mm]{[\eta,\bot]_X}
      & TFX.
    \end{tikzcd}
  \]
\end{proposition}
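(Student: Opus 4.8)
The plan is to unfold the definition of a lax $TF$-coalgebra homomorphism for the specific coalgebra $J(\vec P_{n+1},\vec p_n)$ constructed in Definition \ref{def:inclusionFunctor}, and show that the single commuting (lax) square over the coproduct $\coprod \vec P_{n+1}$ decomposes componentwise into the $n$ squares on the right-hand side. First I would note that since the pointing of $J(\vec P_{n+1},\vec p_n)$ is $\inj_0 \colon I = P_0 \to \coprod \vec P_{n+1}$ and the morphism $[x_k]_{k\le n}$ satisfies $[x_k]_{k\le n}\cdot \inj_0 = x_0$ by assumption (it is a morphism of pointed objects, $x_0$ being the pointing of $X$), the pointing condition for a lax homomorphism is automatic; so the content is entirely in the lax square
\[
  \begin{tikzcd}[column sep=6mm]
    \coprod\vec P_{n+1}
    \arrow{r}{[x_k]_{k\le n}}
    \arrow{d}[swap]{\alpha}
    & X
    \arrow{d}{\xi}
    \\
    TF\coprod\vec P_{n+1}
    \arrow{r}{TF[x_k]_{k\le n}}
    \descto[xshift=-1mm]{ur}{$\sqsubseteq$}
    & TFX
  \end{tikzcd}
\]
where $\alpha = [\eta,\bot]\cdot\bigl([(F\inj_{k+1}+1)\cdot p_k]_{k<n} + {!}\bigr)$ is the structure from Definition \ref{def:inclusionFunctor}.

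The key step is to precompose this square with the coproduct injections $\inj_k \colon P_k \to \coprod\vec P_{n+1}$, using assumption \eqref{axJointlyepic}: the injections are jointly epic, and this property is assumed to extend to the ordered setting, so $\xi\cdot[x_k]_{k\le n} \sqsupseteq TF[x_k]_{k\le n}\cdot\alpha$ holds iff it holds after precomposing with each $\inj_k$. For $k < n$ I would compute $\alpha\cdot\inj_k = [\eta,\bot]_{F\coprod\vec P}\cdot(F\inj_{k+1}+1)\cdot p_k$, and, using naturality of $[\eta,\bot]\colon \Id + 1 \to T$ in the form $[\eta,\bot]_{F\coprod\vec P}\cdot(F\inj_{k+1}+1) = TF\inj_{k+1}\cdot[\eta,\bot]_{FP_{k+1}}$, rewrite this as $TF\inj_{k+1}\cdot[\eta,\bot]_{FP_{k+1}}\cdot p_k$. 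Then the functoriality of $\sqsubseteq$ (Definition \ref{def:homsetOrder}: $f_1\sqsubseteq f_2$ implies $Gh\cdot f_1\cdot g \sqsubseteq Gh\cdot f_2\cdot g$) lets me cancel the $TF\inj_{k+1}$ on the left and turn the instance of the big square precomposed with $\inj_k$ exactly into the $k$-th square on the right-hand side, after noting $x_{k+1} = [x_j]_{j\le n}\cdot\inj_{k+1}$ and $x_k = [x_j]_{j\le n}\cdot\inj_k$. For the remaining injection $\inj_n$ one has $\alpha\cdot\inj_n = [\eta,\bot]\cdot(\inr\cdot{!}) = \bot\cdot{!}\colon P_n \to TF\coprod\vec P$, which by assumption \eqref{axUnitBot} is the bottom element of the homset, so the lax inequality precomposed with $\inj_n$ holds trivially and imposes no condition — consistent with the right-hand side quantifying only over $k < n$.

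The converse direction reverses this: given the $n$ componentwise lax squares, re-assemble them via the jointly-epic injections (using that $\sqsubseteq$ respects cotupling in the ordered sense, again from \eqref{axJointlyepic}) together with the trivial $\inj_n$-component to recover the single lax square, and add back the automatic pointing condition. I expect the main obstacle to be purely bookkeeping: carefully matching the $F\inj_{k+1}+1$ twist inside $\alpha$ against the $Fx_{k+1}+1$ appearing on the right-hand side, and being precise about where the $+1$ summand (the $\bot$ branch) is discharged — it is crucial that the right-hand square is stated with $FP_{k+1}+1$ and $[\eta,\bot]_X$ rather than with $TFP_{k+1}$, so that the naturality square for $[\eta,\bot]$ is exactly what bridges the two formulations. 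No deep argument is needed; the proposition is essentially the statement that $J$ is defined "pointwise in $k$", and the proof is the verification of that slogan.
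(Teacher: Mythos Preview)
Your proposal is correct and follows essentially the same approach as the paper's proof: precompose the single lax square with the coproduct injections $\inj_k$, use naturality of $[\eta,\bot]$ to identify the $k$-th component with the $k$-th right-hand square, discharge the $\inj_n$ component via \eqref{axUnitBot}, and invoke \eqref{axJointlyepic} together with functoriality of $\sqsubseteq$ for the equivalence. The only cosmetic difference is that the paper applies naturality of $[\eta,\bot]$ once along $F[x_j]_{j\le n}$ (directly yielding $[\eta,\bot]_{FX}\cdot(Fx_{k+1}+1)\cdot p_k$), whereas you first rewrite along $F\inj_{k+1}$ and then absorb $TF\inj_{k+1}$ into $TF[x_j]$; your word ``cancel'' is slightly misleading for this absorption step, but the computation is sound.
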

\vspace{-2mm}
Also note that the pointing $x_0$ of the coalgebra is necessarily the first
component of any run in it. In a run $[x_k]_{k\le n}$, $p_k$
corresponds to an edge from $x_k$ to $x_{k+1}$.

\begin{example}
  For LTSs, since the $P_k$ are singletons, $x_k$ just picks the $k$th state of
  the run. The right-hand side of this lemma describes that this is a run iff
  there is a transition from the $k$th
  state and the $(k+1)-$th state.
\end{example}

\subsection{Open morphisms are exactly coalgebra homomorphisms}
In this section, we prove our main contribution, namely that $\Path(I,F+1)$-open maps in 
$\LCoalg(I,TF)$ are exactly coalgebra homomorphisms. For the first direction of the main 
theorem, that is, that coalgebra homomorphisms are open, we need two extra axioms:\\
\indent \eqref{axJoin} -- describing that the order on $\C(X,TY)$ is point-wise. This 
holds for the powerset because every set is the union of its singleton subsets.\\
\indent  \eqref{axChoice} -- describing that $\C(X,TY)$ admits a choice-principle. This holds 
for the powerset because whenever $y \in h[x]$ for a map $h\colon X\to Y$ and 
$x\subseteq X$, then there is some $\{x'\}\subseteq x$ with $h(x') = y$.

\begin{theorem}
  \label{thm:homIsOpen}
  Under the assumptions of \autoref{table:axioms}, a coalgebra homomorphism in
  $\LCoalg(I,TF)$ is $\Path(I,F+1)$-open.
\end{theorem}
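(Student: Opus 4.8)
The plan is to verify the lifting property of \autoref{def:openmap} directly, after reducing it to a single-layer extension. Fix a strict homomorphism $h\colon(S,\alpha,i)\to(S',\alpha',i')$ and call a $\Path(I,F+1)$-morphism $\Phi$ \emph{$h$-liftable} if every commuting square with $J\Phi$ on the left and $h$ on the right admits the diagonal of \autoref{def:openmap}. A routine diagram chase shows that $h$-liftable morphisms are closed under composition, and every isomorphism $\Phi$ is $h$-liftable because then $J\Phi$ is an isomorphism and $x':=x\cdot(J\Phi)^{-1}$ works. By \autoref{remarkPathCategory} all components of any $\Path(I,F+1)$-morphism $\Phi\colon w\to w'$ are isomorphisms, so $\Phi$ factors as an isomorphism from $w$ onto the length-$n$ truncation of $w'$ followed by finitely many one-layer prefix inclusions; hence it suffices to show that each one-layer prefix inclusion $\iota\colon w\hookrightarrow w'$ is $h$-liftable, where $w=(\vec P_{n+1},\vec p_n)$ has length $n$ and $w'$ has length $n+1$, agreeing with $w$ on the first $n$ layers and adjoining one $(F+1)$-precise map $q_n\colon P_n\to FQ_{n+1}+1$.

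Since $J\iota$ is the corresponding coproduct inclusion, any diagonal $x'=[x'_k]_{k\le n+1}$ is forced to restrict to $x=[x_k]_{k\le n}$ on the first $n$ layers; writing $y'=[y'_k]_{k\le n+1}$, commutativity of the square gives $h\cdot x_k=y'_k$ for $k\le n$, and by \autoref{lem:stepwiseFromPath} the lax-homomorphism conditions for $x'$ at layers $k<n$ are exactly those already satisfied by $x$. So everything reduces to producing one morphism $x'_{n+1}\colon Q_{n+1}\to S$ with $h\cdot x'_{n+1}=y'_{n+1}$ and $[\eta,\bot]_S\cdot(Fx'_{n+1}+1)\cdot q_n\sqsubseteq\alpha\cdot x_n$.

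The available data is that $h$ is strict, so $TFh\cdot\alpha=\alpha'\cdot h$, and that $y'$ is a run, so at layer $n$ (\autoref{lem:stepwiseFromPath}) $[\eta,\bot]_{S'}\cdot(Fy'_{n+1}+1)\cdot q_n\sqsubseteq\alpha'\cdot y'_n=TFh\cdot\alpha\cdot x_n$. The heart of the proof is a ``layerwise choice'' step: using that the order on the $T$-homset $\C(P_n,TFS)$ is pointwise \eqref{axJoin} and admits choice \eqref{axChoice}, one selects, along $Fh$ and inside $\alpha\cdot x_n$, preimages of the elements named by $(Fy'_{n+1}+1)\cdot q_n$, the $\bot$-branch being handled by \eqref{axUnitBot}; this yields $e\colon P_n\to FS+1$ with $(Fh+1)\cdot e=(Fy'_{n+1}+1)\cdot q_n$ and $[\eta,\bot]_S\cdot e\sqsubseteq\alpha\cdot x_n$. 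Now the equation $(Fh+1)\cdot e=(Fy'_{n+1}+1)\cdot q_n$ is an instance of the defining square for precision of $q_n$, taken with $F+1$ as the outer functor, so precision of $q_n$ produces $d\colon Q_{n+1}\to S$ with $(Fd+1)\cdot q_n=e$ and $h\cdot d=y'_{n+1}$. Setting $x'_{n+1}:=d$ gives both required equations, and the inequality follows from $[\eta,\bot]_S\cdot(Fx'_{n+1}+1)\cdot q_n=[\eta,\bot]_S\cdot e\sqsubseteq\alpha\cdot x_n$.

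The main obstacle is the ``layerwise choice'' step: extracting $e$ from the single inequality $[\eta,\bot]_{S'}\cdot(Fy'_{n+1}+1)\cdot q_n\sqsubseteq TFh\cdot\alpha\cdot x_n$ while keeping it compatible both with the $[\eta,\bot]$-factorization (so that $\bot$-components stay $\bot$) and with the precise shape of $q_n$. This is the abstract form of the classical zig-zag argument for functional bisimulations, and it is precisely what \eqref{axJoin}, \eqref{axChoice} and \eqref{axUnitBot} are designed to make possible; the assumption \eqref{axJointlyepic} is needed only implicitly, in the bookkeeping with the cotupled coalgebra structures on $\coprod\vec P_{n+1}$.
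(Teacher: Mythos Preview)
Your proposal is correct and follows essentially the same route as the paper: reduce to a length-difference-one extension, apply \eqref{axChoice} at level $n$ to pull the $(F+1)$-shaped datum $(Fy'_{n+1}+1)\cdot q_n$ back along $Fh$ inside $\alpha\cdot x_n$, then invoke $(F+1)$-precision of $q_n$ to extract the missing component $x'_{n+1}$, and finally check the run condition via \autoref{lem:stepwiseFromPath}. Your normalisation to literal prefix inclusions (making the $\phi_k$ identities) is a cosmetic variant of the paper's approach, which instead keeps the isomorphisms $\phi_k$ and sets $d_k:=x_k\cdot\phi_k^{-1}$; both are fine.

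One small inaccuracy: the ``layerwise choice'' step needs only \eqref{axChoice} (plus $P_n\in\S$, which you have from \autoref{remarkPathCategory}). The axiom \eqref{axJoin} plays no role in this direction; it is used only in the converse, \autoref{thm:openIsHom}. Your description of the step as also relying on \eqref{axJoin} and on special handling of the $\bot$-branch via \eqref{axUnitBot} is therefore slightly misleading---a single instantiation of \eqref{axChoice} with $X:=FS$, $Y:=FS'$, $h:=Fh$, $x:=\alpha\cdot x_n$, $y:=(Fy'_{n+1}+1)\cdot q_n$ already delivers your $e$ with both required properties.
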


\begin{table}[t]
  \centering
  \def\arraystretch{1.2}
  \begin{tabular}{@{}l@{\hspace{2mm}}l@{\hspace{4mm}}l@{}}
    \toprule
    $F$ & \axlabel{axFactor} & $F+1$ admits precise factorizations, w.r.t.~$\S$ and $I\in \S$  \\
    \midrule
    $T$ & \axlabel{axJointlyepic} & If $(e_i\colon X_i\to Y)_{i\in I}$ jointly epic, then $f\cdot e_i\sqsubseteq g\cdot e_i$ for all $i\in I$ $\Rightarrow$ $f\sqsubseteq g$.\\
        & \axlabel{axUnitBot} & $[\eta,\bot]\colon \Id+1\to T$, with $\bot_Y\cdot !_X \sqsubseteq f$ for all $f\colon X\to TY$
                  \\
        & \axlabel{axJoin} & For every $f\colon X\to TY$, $X\in \S$, \\
        &  & $f = \bigsqcup\{[\eta,\bot]_Y\cdot f'\sqsubseteq f \mid f'\colon X\to Y+1\}$\\
        & \axlabel{axChoice} &     $~~\forall A \in \S~~$
        \begin{tikzcd}[ampersand replacement=\&]
      A
      \arrow{r}{x}
      \arrow{d}[swap]{y}
      \descto[sloped,rotate=90]{dr}{$\sqsubseteq$}
      \& T X
      \arrow{d}{T h}
      \\
      Y+1
      \arrow{r}{[\eta,\bot]_Y}
      \& T Y
    \end{tikzcd}
    $~~\overset{\exists x'}\Longrightarrow~~$
    \tikzset{external/export=false}
    \begin{tikzcd}[ampersand replacement=\&,row sep=3mm]
      |[yshift=0mm]|
      A
      \arrow[bend left=10,overlay]{rr}{\smash{x}}
      \arrow[dashed]{r}{x'}
      \arrow[bend right=10]{dr}[swap]{y}
      \descto[sloped,rotate=90,xshift=-0.5mm]{rr}{$\sqsubseteq$}
      \& |[yshift=-4mm]| X+1
      \arrow{r}[swap,near start]{[\eta,\bot]_X}
      \arrow{d}{h+1}
      \& T X
      \arrow{d}{T h}
      \\
      \& Y+1
      \arrow{r}{[\eta,\bot]_Y}
      \& T Y
    \end{tikzcd}
    \\
    \bottomrule
  \end{tabular}\\[1mm]
  \caption{Main assumptions on $F,T\colon \C\to \C$, $\sqsubseteq^T$,
    $\S\subseteq \obj \C$}
  \label{table:axioms}
  \vspace{-7mm}
\end{table}

The converse is not true in general, because intuitively, open maps reflect
runs, and thus only reflect edges of reachable states, as we have seen in 
Section \ref{subsec:openmaps}. The notion of
a state being reached by a path is the following:
\begin{definition} \label{defPathReachable}
  A system $(X,\xi,x_0)$ is \emph{path-reachable} if the family of runs $[x_k]_{k\le
    n}\colon J(\vec P_{n+1},\vec p_n)\to (X,\xi,x_0)$ (of paths from
  $\Path(I,F+1)$) is jointly epic.
\end{definition}
\begin{example}
For LTSs, this means that every state in $X$ is reached by a run, that is,
there is a path from the initial state to every state of $X$. 
\end{example}
\begin{remark} \label{whyPlus1}
  In \autoref{defPathReachable}, it is crucial that we consider $\Path(I,F+1)$
  and not $\Path(I,F)$ for functors incorporating `arities $\ge 2$'. This does
  not affect the example of LTSs, but for $I=1$, $FX=X\times X$ and $T=\pow$ in
  \Set, the coalgebra $(X,\xi,x_0)$ on $X=\{x_0,y_1,y_2,z_1,z_2\}$ given by
  \(
    \xi(x_0) = \{ (y_1,y_2) \},
    ~~
    \xi(y_1) = \{ (z_1,z_2) \},
    ~~
    \xi(y_2) = \xi(z_1) = \xi(z_2) = \emptyset
  \)
  is path-reachable for $\Path(I,F+1)$. There is no run of a length 2 path from
  $\Path(I,F)$, because $y_2$ has no successors, and so there is no path to
  $z_1$ or to $z_2$.
\end{remark}
\begin{theorem} \label{thm:openIsHom}
  Under the assumptions of \autoref{table:axioms}, if
$(X,\xi,x_0)$ is path-reachable, then an open morphism $h\colon (X,\xi,x_0)
\to (Y,\zeta,y_0)$ is a coalgebra homomorphism.
\end{theorem}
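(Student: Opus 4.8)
The plan is to derive the only part of the lax-homomorphism diagram that is still missing. Since $h$ is already a morphism of $\LCoalg(I,TF)$ we have $TFh\cdot\xi\sqsubseteq\zeta\cdot h$, so I would prove the reverse inequality $\zeta\cdot h\sqsubseteq TFh\cdot\xi$ and conclude $\zeta\cdot h=TFh\cdot\xi$ by antisymmetry of the poset order on $\C(X,TFY)$; together with the fact that $h$ preserves the point, this makes $h$ a (strict) coalgebra homomorphism.

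First I would peel off the quantifiers using \eqref{axJointlyepic}. Path-reachability says the runs $r=[x_k]_{k\le n}\colon J(\vec P_{n+1},\vec p_n)\to(X,\xi,x_0)$ are jointly epic, and the coproduct injections $\inj_k\colon P_k\to\coprod\vec P_{n+1}$ are jointly epic, so it suffices to prove, for every path $(\vec P_{n+1},\vec p_n)$, every run $r$, and every index $k\le n$, that
\[
  \zeta\cdot h\cdot x_k \;\sqsubseteq\; TFh\cdot\xi\cdot x_k
  \quad\text{in $\C(P_k,TFY)$.}
\]
Since $P_k\in\S$ (apply \autoref{remarkPathCategory} to $F+1$, using \eqref{axFactor}), \eqref{axJoin} expresses $\zeta\cdot h\cdot x_k$ as the join of all $[\eta,\bot]_{FY}\cdot g$ with $g\colon P_k\to FY+1$ and $[\eta,\bot]_{FY}\cdot g\sqsubseteq\zeta\cdot h\cdot x_k$, so by the universal property of the join it is enough to show $[\eta,\bot]_{FY}\cdot g\sqsubseteq TFh\cdot\xi\cdot x_k$ for each such $g$.

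The heart of the argument is to realise this $g$ inside a genuine path and then invoke openness. Using \eqref{axFactor} I would factor $g=(Fh'+1)\cdot g'$ with $g'\colon P_k\to(F+1)P'$ an $(F+1)$-precise map and $h'\colon P'\to Y$ (and $P'\in\S$ by \autoref{remPreciseFactor}). Then $Q:=\bigl(P_0,\dots,P_k,P';\,p_0,\dots,p_{k-1},g'\bigr)$ is a path of length $k+1$ in $\Path(I,F+1)$, with the evident prefix morphism $\iota\colon(\vec P_{k+1},\vec p_k)\to Q$. By \autoref{lem:stepwiseFromPath} the family $[\,h(x_0),h(x_1),\dots,h(x_k),h'\,]$ is a run $\hat r\colon JQ\to(Y,\zeta,y_0)$: the squares for indices ${<}\,k$ hold because restricting the run $h\cdot r$ along a prefix morphism is again a run ($J$ being a functor), and the square for index $k$ is exactly the assumption $[\eta,\bot]_{FY}\cdot g\sqsubseteq\zeta\cdot h(x_k)$ rewritten via $g=(Fh'+1)\cdot g'$. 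Since $h\cdot(r\barbind{\le k})=\hat r\cdot J\iota$, openness of $h$ yields $s\colon JQ\to(X,\xi,x_0)$ with $s\cdot J\iota=r\barbind{\le k}$ and $h\cdot s=\hat r$; writing $s=[x_0,\dots,x_k,s_{k+1}]$ the latter forces $h\cdot s_{k+1}=h'$. Finally $s$ is a run, so the index-$k$ instance of \autoref{lem:stepwiseFromPath} gives $[\eta,\bot]_{FX}\cdot(Fs_{k+1}+1)\cdot g'\sqsubseteq\xi\cdot x_k$; postcomposing with $TFh$ (functoriality of $\sqsubseteq$) and using naturality of $[\eta,\bot]$ together with $h\cdot s_{k+1}=h'$ rewrites the left-hand side to $[\eta,\bot]_{FY}\cdot(Fh'+1)\cdot g'=[\eta,\bot]_{FY}\cdot g$, whence $[\eta,\bot]_{FY}\cdot g\sqsubseteq TFh\cdot\xi\cdot x_k$, as required.

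The main obstacle is this last step: openness only ever speaks about (precise) paths, so one must break the arbitrary branching datum $\zeta\cdot h(x_k)\in TFY$ into precise pieces — which is exactly what \eqref{axJoin} and \eqref{axFactor} supply — and then set up $Q$, the prefix morphism $\iota$, and the run $\hat r$ carefully enough that the morphism $s$ produced by openness is itself a run and feeds back into \autoref{lem:stepwiseFromPath}. Note that the construction of $Q$ works uniformly at the last vertex $k=n$ of the original path, because we attach the fresh vertex $P'$ coming from the summand $+1$ of $F+1$ rather than demanding a successor inside the original path; this is precisely the role of the $+1$ highlighted in \autoref{whyPlus1}. Antisymmetry of the homset order then finishes the proof.
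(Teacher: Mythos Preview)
Your proof is correct and follows essentially the same route as the paper's: reduce $\zeta\cdot h\sqsubseteq TFh\cdot\xi$ via path-reachability and \eqref{axJointlyepic} to a statement at individual layers $P_k$, use \eqref{axJoin} to break $\zeta\cdot h\cdot x_k$ into pieces $[\eta,\bot]\cdot g$, factor $g$ precisely via \eqref{axFactor}, extend the path by one step, construct the corresponding run into $Y$, invoke openness to lift it to $X$, and read off the desired inequality from \autoref{lem:stepwiseFromPath} and naturality of $[\eta,\bot]$. The only cosmetic difference is that the paper first observes (``by induction'') that every $P_k$ is the last layer of the truncated path $(\vec P_{k+1},\vec p_k)$, so it suffices to treat the case $k=n$ and extend the \emph{full} path by one; you instead truncate first and then extend, which amounts to the same thing.
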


\subsection{Connection to other notions of reachability}
\label{subsec:reachability}
There is another concise notion for reachability in the coalgebraic literature \cite{adamek13}.
\begin{definition}
  A \emph{subcoalgebra} of $(X,\xi,x_0)$ is a coalgebra homomorphism $h\colon
  (Y,\zeta,y_0) \to (X,\xi,x_0)$ that is carried by a monomorphism $h\colon
  X\rightarrowtail Y$. Furthermore $(X,\xi,x_0)$ is called \emph{reachable} if
  it has no proper subcoalgebra, i.e.~if any subcoalgebra $h$ is an isomorphism.
\end{definition}
Under the following assumptions, this notion coincides with the path-based
definition of reachability (\autoref{defPathReachable}).
%We assume the following for the present \autoref{subsec:reachability}:
\begin{assumption} \label{ass:reachability} For the present
  \autoref{subsec:reachability}, let $\C$ be cocomplete, have
  (epi,mono)-factorizations and wide pullbacks of monomorphisms.
\end{assumption}
The first direction follows directly from \autoref{thm:homIsOpen}:
\begin{proposition}
\label{prop:JointlyEpicToNoProper}
  Every path-reachable $(X,\xi,x_0)$ has no proper subcoalgebra.
\end{proposition}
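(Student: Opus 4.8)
The plan is to derive everything from \autoref{thm:homIsOpen}. A subcoalgebra $h\colon (Y,\zeta,y_0)\to (X,\xi,x_0)$ is in particular a coalgebra homomorphism, hence by \autoref{thm:homIsOpen} it is $\Path(I,F+1)$-open. So it suffices to prove: an open morphism carried by a monomorphism, whose codomain is path-reachable, is an isomorphism.

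First I would show that \emph{every run of $(X,\xi,x_0)$ factors through $h$}: for each path $(\vec P_{n+1},\vec p_n)$ and each run $r\colon J(\vec P_{n+1},\vec p_n)\to (X,\xi,x_0)$ there is a run $\tilde r$ into $(Y,\zeta,y_0)$ with $h\cdot\tilde r=r$. This goes by induction on the length $n$. For $n=0$ the path is just $(I)$, and since the pointing is the first component of any run (see \autoref{lem:stepwiseFromPath} and the remark following it), the unique run of $(I)$ into any system is its pointing; thus $r=x_0=h\cdot y_0$ and $\tilde r:=y_0$ works. For the induction step, let $\Phi$ be the prefix embedding of the length-$(n-1)$ truncation of $(\vec P_{n+1},\vec p_n)$ into $(\vec P_{n+1},\vec p_n)$ (the family of identities on $P_0,\dots,P_{n-1}$); then $r\cdot J\Phi$ is a run of the truncation, so by the induction hypothesis $r\cdot J\Phi=h\cdot s$ for some run $s$. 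Feeding the commuting square with top $s$, left $J\Phi$, bottom $r$, and right $h$ into openness of $h$ produces an $x'=:\tilde r$ with $h\cdot\tilde r=r$ (the second triangle $\tilde r\cdot J\Phi=s$ is not needed), and $\tilde r$ is a run because it is a morphism of $\LCoalg(I,TF)$.

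Once every run factors through $h$, path-reachability of $(X,\xi,x_0)$ — i.e.\ joint epicness of the family of all runs into it (\autoref{defPathReachable}) — immediately makes $h$ epic: from $f\cdot h=g\cdot h$ and $r=h\cdot\tilde r$ we get $f\cdot r=g\cdot r$ for every run $r$, hence $f=g$. So $h$ is both monic and epic in $\C$, and since $\C$ has (epi,mono)-factorizations (\autoref{ass:reachability}), a bimorphism is an isomorphism; being a coalgebra homomorphism, $h$ is then an isomorphism of pointed $TF$-coalgebras. Hence every subcoalgebra of $(X,\xi,x_0)$ is an isomorphism, which is the claim.

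The genuinely load-bearing ingredient is \autoref{thm:homIsOpen}; the remainder is the familiar Joyal--Nielsen--Winskel lifting argument plus the elementary fact that a bimorphism is an isomorphism in the presence of an (epi,mono)-factorization system. I expect the only fiddly points to be the base case of the induction (reading off that runs of the trivial path $(I)$ are exactly pointings) and checking that the prefix embeddings $\Phi$ really are morphisms of $\Path(I,F+1)$ — both of which are immediate from the definitions.
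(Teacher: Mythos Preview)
Your argument is correct and follows the same overall strategy as the paper: use \autoref{thm:homIsOpen} to make $h$ open, lift every run of $(X,\xi,x_0)$ through $h$, conclude that $h$ is epic by joint epicness of runs, and finish with epi $+$ mono $\Rightarrow$ iso under \autoref{ass:reachability}. The only difference is in how the lifting is obtained: you induct on the length of the path using prefix embeddings, whereas the paper does it in a single step by taking the trivial path $(I)$ of length~$0$ at the top of the open-map square. Since there is always a (unique) $\Path(I,F+1)$-morphism from the length-$0$ path to any path, one application of openness already produces a factorisation of the whole run through $h$, making the induction unnecessary. Your version is a perfectly valid variant---just slightly longer.
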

For the other direction it is needed that $TF$ preserves arbitrary
intersections, that is, wide pullbacks of monomorphisms. In \Set, this means
that for a family $(X_i\subseteq Y)_{i\in I}$ of subsets we have
$\bigcap_{i\in I} TFX_i = TF\bigcap_{i\in I} X_i$ as subsets of $TFY$.

\begin{proposition}
  \label{prop:noProperReachable}
  If, furthermore, for every monomorphism $m\colon Y\to Z$, the function 
  $\C(-,Tm)\colon$ $\C(X,TY)\to \C(X,TZ)$ reflects joins and if $TF$ preserves
  arbitrary intersections, then a reachable 
  coalgebra $(X,\xi,x_0)$ is also path-reachable.
\end{proposition}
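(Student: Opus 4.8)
The plan is to exhibit the joint image of all runs as a subcoalgebra of $(X,\xi,x_0)$: once we know that the smallest subobject of $(X,\xi,x_0)$ through which every run factors carries a $TF$-coalgebra structure making it a subcoalgebra, reachability (no proper subcoalgebra) forces that subobject to be all of $X$, which is exactly path-reachability. So first I would form, using cocompleteness of $\C$, the cotupling $\coprod_r\dom(r)\xrightarrow{\,[r]\,}X$ of all runs $r\colon J(\vec P^{(r)}_{n_r+1},\vec p^{(r)}_{n_r})\to(X,\xi,x_0)$ of paths in $\Path(I,F+1)$ — legitimate because, by \autoref{prop:PathToPathOrd}, paths form a set up to isomorphism and runs of a fixed path form a set — and take its $(\text{strong epi},\text{mono})$-factorization $\coprod_r\dom(r)\twoheadrightarrow R\xrightarrow{\,m\,}X$. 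Composing the epi with the coproduct injections yields a jointly epic family $\bar x^{(r)}_k\colon P^{(r)}_k\to R$ with $m\cdot\bar x^{(r)}_k=x^{(r)}_k$, and the pointing $x_0$ factors through $m$ as some $\bar x_0$.

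Since $TF$ preserves arbitrary intersections it preserves monomorphisms, so $TFm$ is monic; hence, by diagonal fill-in against the strong epi, it suffices to construct $\bar\xi\colon R\to TFR$ with $TFm\cdot\bar\xi=\xi\cdot m$, and for that it is enough to show that each $\xi\cdot x^{(r)}_k\colon P^{(r)}_k\to TFX$ factors through $TFm$. This is where runs get extended. Fix $r$ and $k$. By \autoref{remarkPathCategory} we have $P^{(r)}_k\in\S$, so axiom \eqref{axJoin} writes $\xi\cdot x^{(r)}_k$ as the join $\bigsqcup\{[\eta,\bot]_{FX}\cdot f'\mid f'\colon P^{(r)}_k\to FX+1,\ [\eta,\bot]_{FX}\cdot f'\sqsubseteq\xi\cdot x^{(r)}_k\}$. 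For each such $f'$, axiom \eqref{axFactor} gives a precise factorization $f'=(Fh+1)\cdot f''$ with $f''\colon P^{(r)}_k\to FQ+1$ an $(F+1)$-precise map, $Q\in\S$, and $h\colon Q\to X$. Then $\bigl((P^{(r)}_0,\dots,P^{(r)}_k,Q),(p^{(r)}_0,\dots,p^{(r)}_{k-1},f'')\bigr)$ is a path in $\Path(I,F+1)$, and by \autoref{lem:stepwiseFromPath} the family $[x^{(r)}_0,\dots,x^{(r)}_k,h]$ is a run of it in $(X,\xi,x_0)$, the only new side condition being exactly $[\eta,\bot]_{FX}\cdot f'\sqsubseteq\xi\cdot x^{(r)}_k$. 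Consequently $h$ factors through $m$, say $h=m\cdot\bar h$, whence $[\eta,\bot]_{FX}\cdot f'=[\eta,\bot]_{FX}\cdot(Fm+1)\cdot(F\bar h+1)\cdot f''=TFm\cdot[\eta,\bot]_{FR}\cdot(F\bar h+1)\cdot f''$ by naturality of $[\eta,\bot]$. So every term of the join factors through $TFm$, and since $Fm$ is monic and $\C(P^{(r)}_k,T(Fm))$ reflects joins, the join $\xi\cdot x^{(r)}_k$ itself factors through $TFm$. This produces $\bar\xi$ and a subcoalgebra $m\colon(R,\bar\xi,\bar x_0)\rightarrowtail(X,\xi,x_0)$; reachability then makes $m$ an isomorphism, so the runs are jointly epic and $(X,\xi,x_0)$ is path-reachable.

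The step I expect to be the main obstacle is not any single calculation but making the bookkeeping around the run-extension argument watertight: one must ensure the relevant collections — paths up to isomorphism, and the witnesses $f'$ appearing under the join in \eqref{axJoin} — are genuinely small, so that the coproduct and the join exist; one must check that the extended path built from $f''$ is (isomorphic to) one of the paths already collected into $R$, so that its new component $h$ really does factor through $m$; and one must apply the ``reflects joins'' hypothesis to the monomorphism $Fm$, which uses that $F$ preserves monomorphisms (here a consequence of $TF$ preserving intersections). The remaining verifications — that $[x^{(r)}_0,\dots,h]$ satisfies the run condition of \autoref{lem:stepwiseFromPath}, that the diagonal fill-in yields a genuine coalgebra homomorphism, and that $\bar x_0$ is a legitimate pointing — are routine.
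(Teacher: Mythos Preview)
Your argument is correct and takes a genuinely different route from the paper's. The paper proceeds layer by layer: it first proves a breadth-first-search lemma producing subobjects $m_k\colon X_k\rightarrowtail X$ of states reachable in exactly $k$ steps, then inductively builds families $\mathcal{F}_k$ of morphisms $e\colon E\to X_k$ with $E\in\S$ (each coming from a run ending at level $k$), and uses the axioms to show each $\mathcal{F}_k$ is jointly epic onto $X_k$; reachability then makes the union of the $m_k$ all of $X$. Your approach is top-down: you form the image $R$ of \emph{all} runs at once and show directly that $R$ is a subcoalgebra, so reachability forces $R\cong X$. The core calculation is the same in both---write $\xi\cdot(\text{run component})$ as a join via \eqref{axJoin}, factor each approximant precisely via \eqref{axFactor} to obtain an extended run, and pull the resulting factorization through $TFm$ using join reflection---but you avoid the auxiliary BFS lemma and the layer-wise induction. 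Your version is conceptually cleaner; the paper's is more explicitly constructive.

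One small point: you claim that $F$ preserving monomorphisms is ``a consequence of $TF$ preserving intersections''. That inference is not valid in general; $TF$ preserving intersections yields only that $TFm$ is monic, not that $Fm$ is. You need $Fm$ monic in order to invoke the hypothesis ``$\C(-,Tn)$ reflects joins for monos $n$'' with $n=Fm$. The paper's own proof makes exactly the same tacit use of $F$ preserving monos (it applies join reflection along $TFs_{k+1}$ with $s_{k+1}$ monic), so this is a shared gap rather than a defect peculiar to your argument; in all the intended instances $F$ does preserve monos, and in fact whenever $T$ reflects monos (as $\pow$ and $\powufs$ do) the implication $TFm$ mono $\Rightarrow Fm$ mono is automatic. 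Your cautionary remarks about size, about the extended path being only isomorphic to a chosen representative, and about the diagonal fill-in are all on point and routinely resolvable.
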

\noindent
All those technical assumptions are satisfied in the case of LTSs, and will also 
be satisfied in all our instances in \autoref{sec:instances}. 

\vspace{-2mm}
\subsection{Trace semantics for pointed coalgebras}
\label{subsec:trace}
The characterization from \autoref{thm:homIsOpen} and \autoref{thm:openIsHom} 
points out a natural way of defining a trace semantics for pointed coalgebras. 
Indeed, the paths category $\Path(I,F+1)$ provides a natural way of defining the 
runs of a system. A possible way to go from runs to trace semantics is to describe 
accepting runs as the subcategory $J'\colon \Path(I,F) \hookrightarrow 
\Path(I,F+1)$. We can define the \emph{trace semantics} of a system $(X,\xi,x_o)$ as the set:
\begin{align*} 
\trace(X,\xi,x_0) = \{\Comp(\vec P_{n+1},\vec p_n) \mid &\exists \text{ run } [x_k]_{k\le
    n}\colon JJ'(\vec P_{n+1},\vec p_n)\to (X,\xi,x_0)\\
    ~ & \text{ with } (\vec P_{n+1},\vec p_n) \in \Path(I,F)\}
\end{align*}
Since $\Path(I,F)$-open maps preserve and reflect runs, we have the following:

\begin{corollary}
  $\trace\colon \LCoalg(I,TF)\to (\pow(\PathOrd(I,F)), \subseteq)$ is a functor
and if $\map{f}{(X,\xi,x_0)}{(Y,\zeta,y_0)}$ is $\Path(I,F+1)$-open, then $\trace(X,\xi,x_0) = 
\trace(Y,\zeta,y_0)$.
\end{corollary}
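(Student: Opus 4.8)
The plan is to treat the two assertions of the corollary separately. Since the codomain $(\pow(\PathOrd(I,F)),\subseteq)$ is a poset regarded as a thin category, "$\trace$ is a functor" reduces to monotonicity: every lax homomorphism $f\colon (X,\xi,x_0)\to(Y,\zeta,y_0)$ must yield $\trace(X,\xi,x_0)\subseteq\trace(Y,\zeta,y_0)$, and then the identity and composition laws hold automatically because there is at most one morphism between any two objects of the codomain. The second assertion adds that openness of $f$ upgrades this inclusion to an equality. So I would prove (a) monotonicity along arbitrary lax homomorphisms, and (b) the reverse inclusion along open maps.

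For (a): recall that a run of a path $(\vec P_{n+1},\vec p_n)$ in a system is by definition a morphism $JJ'(\vec P_{n+1},\vec p_n)\to(X,\xi,x_0)$ in $\LCoalg(I,TF)$. If $\Comp(\vec P_{n+1},\vec p_n)\in\trace(X,\xi,x_0)$, witnessed by such a run $[x_k]_{k\le n}$ with $(\vec P_{n+1},\vec p_n)\in\Path(I,F)$, then $f\cdot[x_k]_{k\le n}$ is again a morphism of $\LCoalg(I,TF)$ out of $JJ'(\vec P_{n+1},\vec p_n)$ — lax homomorphisms compose, and both $[x_k]_{k\le n}$ and $f$ preserve the pointing — hence a run of the same path in $(Y,\zeta,y_0)$. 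Thus $\Comp(\vec P_{n+1},\vec p_n)\in\trace(Y,\zeta,y_0)$, and (a) follows.

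For (b): by (a) it remains to show $\trace(Y,\zeta,y_0)\subseteq\trace(X,\xi,x_0)$, and this is where openness enters. Take $\Comp(\vec P_{n+1},\vec p_n)\in\trace(Y,\zeta,y_0)$ with witnessing run $y'\colon JJ'(\vec P_{n+1},\vec p_n)\to(Y,\zeta,y_0)$ and $w':=(\vec P_{n+1},\vec p_n)\in\Path(I,F)$. Let $w:=(I)$ be the path of length $0$ and $\Phi\colon J'w\to J'w'$ the prefix inclusion in $\Path(I,F+1)$, so $J\Phi\colon Jw\to Jw'$ is the $0$-th coproduct injection. The coalgebra $Jw$ has carrier $I$, pointing $\id_I$, and structure $\bot_{FI}\cdot{!}_I$; using axiom~\eqref{axUnitBot} ($\bot$ is the bottom element), the pointing $x_0\colon Jw\to X$ is a (indeed the unique) run of $w$ in $(X,\xi,x_0)$. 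Since $y'\cdot J\Phi=y_0=f\cdot x_0$ by preservation of pointings, the open-map square for $f$ with top $x_0$, left $J\Phi$, right $f$, bottom $y'$ commutes, so openness produces $x'\colon Jw'\to X$ with $f\cdot x'=y'$. This $x'$ is a run of $w'=JJ'(\vec P_{n+1},\vec p_n)\to(X,\xi,x_0)$, and since $w'\in\Path(I,F)$ we get $\Comp(\vec P_{n+1},\vec p_n)\in\trace(X,\xi,x_0)$, as required.

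I expect no serious obstacle, since the computations are short. The one point deserving care is the setup of the open-map lifting square for the trivial path $w=(I)$: one must check that $Jw$ is a legitimate object whose unique run into any system is its pointing (this is exactly where \eqref{axUnitBot} is used) and that the required commutativity $f\cdot x_0=y'\cdot J\Phi$ is nothing but preservation of the initial state. One should also read "$w'\in\Path(I,F)$" consistently through the embedding $J'\colon\Path(I,F)\hookrightarrow\Path(I,F+1)$, so that the run obtained by lifting indeed witnesses a trace element in the domain; this is routine bookkeeping.
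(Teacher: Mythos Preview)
Your proposal is correct and is precisely the spelling-out of the paper's one-line justification ``open maps preserve and reflect runs'': part~(a) is preservation (post-composition with a lax homomorphism sends runs to runs), and part~(b) is reflection via the lifting property applied to the trivial length-$0$ path. The paper itself gives no further detail for this corollary, but the exact trick you use in~(b)---that $J(I)=(I,\bot_{FI}\cdot{!},\id_I)$ admits a unique run $x_0$ into any pointed coalgebra, so every run in the codomain fits into a liftable square---is the same argument the paper deploys in the proof of \autoref{prop:JointlyEpicToNoProper}.
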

Let us look at two LTS-related examples (we will describe some others in the next 
section). First, for $FX = A\times X$. The usual trace semantics is given by all the 
words in $A^\star$ that are labelled of a run of a system. This trace semantics
is obtained because $\PathOrd(I,F) = \coprod_{n\ge 0} A^n$ and because $\Comp$
maps every path to its underlying word. Another example is given for 
$FX = A\times X + \{\checkmark\}$, where $\checkmark$ marks final states. 
In this case, a path in $\Path(I,F)$ of length $n$ is either a path that can
still be extended or encodes less than $n$ steps to an accepting state $\checkmark$.
This obtains the trace semantics containing the
set of accepted words, as in automata theory, plus the set of possibly infinite runs.

\section{Instances} 
\label{sec:instances}
\subsection{Analytic functors and tree automata}
\label{sec:treeAutomata}
In Example \ref{powersetNotPrecise}, we have seen that every polynomial 
\Set-functors, in particular the functor $X \mapsto A\times X$, 
has precise factorizations with respect to all sets. This allowed us to see LTSs, 
modelled as $\{\ast\}$-pointed $\PP(A\times\arg)$-coalgebra, as an 
instance of our theory. This allowed us in particular to describe their trace semantics 
using our path category in Section \ref{subsec:trace}. This can be extended to tree 
automata as follows. Assume given a signature $\Sigma$, that is, a collection 
$(\Sigma_n)_{n\in \N}$ of disjoint sets. When $\sigma$ belongs to $\Sigma_n$, 
we say that $n$ is the \emph{arity of $\sigma$} or that $\sigma$ is a 
\emph{symbol of arity $n$}. A top-down non-deterministic tree automata as 
defined in \cite{comon07} is then the same as a $\{\ast\}$-pointed 
$\PP F$-coalgebra where $F$ is the polynomial functor 
$X \mapsto \coprod_{\sigma\in\Sigma_n} X^n$.  For this functor, $F^n(1)$ is the 
set of trees over $\Sigma\sqcup\{\ast(0)\}$ of depth at most $n+1$ such that a 
leaf is labelled by $\ast$ if and only if it is at depth $n+1$. Intuitively, elements of
$F^n(1)$ are partial runs of length $n$ that can possibly be extended.
Then, the trace semantics of a tree automata, seen as 
a pointed coalgebra, is given by the set of partial runs of the automata. In particular,
this contains the set of accepted finite trees as those partial runs without any $\ast$, 
and the set of accepted infinite trees, encoded as the sequence of their truncations 
of depth $n$, for every $n$.

%\begin{example}\label{treeAutomataSem}
%We consider the following tree automata over $\Sigma = \{a(2), b(0)\}$, given as the following pointed coalgebra:
%$$\{\ast\} \rightarrow \{q_0, q_1\} ~~ \ast \mapsto q_0 ~~~~ \{q_0, q_1\} \rightarrow \PP F\{q_0, q_1\} ~~ q_0 \mapsto \{a(q_0,q_1), b()\}, ~ q_1 \mapsto \{b()\}$$
%Then its trace semantics are given by the following trees:\\
%DRAWING\\
%where $\ast$ is a hole.
%\end{example}

In the following, we would like to extend this to other kinds of tree automata by 
allowing some symmetries. For example, in a tree, we may not care about the 
order of the children. This boils down to quotient the set $X^n$ of $n$-tuples,
 by some permutations of the indices. This can be done generally given a subgroup
  $G$ of the permutation group $\mathfrak{S}_n$ on $n$ elements by defining 
  $X^n/G$ as the quotient of $X^n$ under the equivalence relation: 
  $(x_1, \ldots, x_n) \equiv_G (y_1, \ldots, y_n)$ iff there is $\pi \in G$ such that for 
  all $i$, $x_i = y_{\pi(i)}$. Concretely, this means that we replace the polynomial 
  functor $F$ by a so-called \emph{analytic functor}:

%\todo{rewrite example 3.7; preliminaries about quotients, signatures and notations; $\Sigma = N$; itemizing proposition 4.3}
%\referee[2]{the part in 4.1 on analytic functors is very condensed and vague. I could not fully appreciate this. Section 4.2 is suddenly *much* better written.}
%%Works for non-deterministic automata recognizing trees (possibly even unordered
%%trees F=Bag) and generally for $F$ being an analytic functor \cite{joyal81,joyal86}
%In Section \ref{subsec:preciseMorph}, \autoref{propFactorizationClosure}, we have 
%seen that when the category $\C$ has enough well-behaved coproducts, then 
%\eqref{axFactor} is automatically satisfied for all polynomial functors. That is the 
%case when $\C = \S = \Set$, which already allows us to consider many models like 
%LTSs, or top-down non-deterministic tree automata (as we will see soon). However, in 
%$\Set$, we can do much better by proving that \eqref{axFactor} is satisfied for a 
%larger class of functors, namely the analytic functors:

\begin{notheorembrackets}
\begin{definition}[{\cite{joyal81,joyal86}}] An \emph{analytic} \Set-functor is
  a functor of the form $FX = \coprod_{\sigma\in \Sigma_n} X^{n}/G_\sigma$
  where for every $\sigma\in \Sigma_n$, we have a
  subgroup $G_\sigma$ of the permutation group $\mathfrak{S}_n$ on
  $n$ elements.
  %\referee[3]{many unexplained, albeit easy to guess, notations}
\end{definition}
\end{notheorembrackets}
\begin{example}
  Every polynomial functor is analytic. The bag-functor is analytic, with
  $\Sigma = (\{\ast\})_{n \in \N}$ has one operation symbol per arity and 
  $G_\sigma =
  \mathfrak{S}_{\ar(\sigma)}$ is the full permutation group on
  $\mathsf{ar}(\sigma)$ elements. It is the archetype of an analytic functor, in the sense 
  that for every analytic functor $F\colon \Set\to\Set$, there is a natural
  transformation into the bag functor $\alpha\colon F\to \Bag$. If $F$ is given
  by $\Sigma$ and $G_\sigma$ as above, then $\alpha_X$ is given by
  \[
    \textstyle
    FX = \coprod_{\sigma\in \Sigma_n} X^n/G_\sigma
    ~\twoheadrightarrow~
    \coprod_{\sigma\in \Sigma_n} X^n/\mathfrak{S}_n
    ~\rightarrow~
    \coprod_{n\in \N} X^n/\mathfrak{S}_n
    = \Bag X.
  \]
\end{example}
\begin{proposition}
\label{prop:analyticPrecise}
For an analytic \Set-functor $F$, the following are equivalent
~~(1) a map $f\colon X\to FY$ is $F$-precise,
~~(2) $\alpha_Y\cdot f$ is $\Bag$-precise,
~~(3) every element of $Y$ appears precisely once in the definition of $f$, i.e.~for every $y \in Y$, there is exactly one $x$ in $X$, such that $f(x)$ is the equivalence class of a tuple $(y_1, \ldots, y_n)$ where there is an index $i$, such that $y_i = y$; and furthermore this index is unique. 
	So every analytic functor 
  has precise factorizations w.r.t. $\Set$.
  %\referee[3]{the condition `every element of Y appears exactly once in the definition of f' should perhaps be explained.}
\end{proposition}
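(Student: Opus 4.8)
The plan is to prove the equivalence $(1)\Leftrightarrow(3)$ directly for an arbitrary analytic functor, to deduce $(2)\Leftrightarrow(3)$ from it via the natural transformation $\alpha\colon F\to\Bag$, and finally to read off precise factorizations from an explicit construction. Since analytic functors preserve weak pullbacks, \autoref{rem:PreciseWeakPullbacks} lets us test $F$-preciseness of $f\colon X\to FY$ only against squares with $g=\id$: that is, $f$ is $F$-precise iff for every $h\colon C\to Y$ and $f'\colon X\to FC$ with $Fh\cdot f'=f$ there is $d\colon Y\to C$ with $h\cdot d=\id_Y$ and $Fd\cdot f=f'$. Throughout I use that an element of $FY$ lies in a unique $\sigma$-summand $Y^n/G_\sigma$ and so is a class $[\sigma,(y_1,\dots,y_n)]$, and that $Fh$ acts by applying $h$ entrywise to any representative tuple.

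For $(1)\Rightarrow(3)$, I show $F$-preciseness forces every $y\in Y$ to occur exactly once in $f$. If some $y_0$ does not occur, then $f$ corestricts to $F(Y\setminus\{y_0\})$, so with $C=Y\setminus\{y_0\}$, $h$ the inclusion and $f'$ this corestriction, preciseness would supply a section of $h$, which is impossible. If $y_0$ occurs at least twice (either in two distinct $f(x_1),f(x_2)$, or twice inside one $f(x_1)$), I take $C=Y\sqcup\{y_0'\}$, let $h\colon C\to Y$ restrict to the identity on $Y$ and send $y_0'$ to $y_0$, and build $f'\colon X\to FC$ by relabelling one of the two occurrences of $y_0$ by $y_0'$ (and setting $f'=F\iota\cdot f$ elsewhere, $\iota\colon Y\hookrightarrow C$); then $Fh\cdot f'=f$. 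Preciseness yields $d\colon Y\to C$ with $h\cdot d=\id$ and $Fd\cdot f=f'$; but $h^{-1}(y_0)=\{y_0,y_0'\}$, so $d(y_0)$ is a single element of $\{y_0,y_0'\}$, whence $Fd\cdot f$ uses only one of $y_0,y_0'$ while $f'$ uses both — a contradiction.

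For $(3)\Rightarrow(1)$, suppose every $y\in Y$ occurs exactly once in $f$ and let $Fh\cdot f'=f$ with $h\colon C\to Y$. Fix for each $x$ a representative tuple of $f(x)$; by $(3)$ its entries are pairwise distinct and, ranging over all $x$, each $y\in Y$ sits at exactly one position of exactly one such tuple, say at position $i$ of the tuple chosen for $f(x)$. Since $Fh(f'(x))=f(x)$, after replacing our representative of $f'(x)$ by a $G_{\sigma_x}$-translate we may pick a representative $(c_1,\dots,c_n)$ of $f'(x)$ with $h(c_k)$ equal to the $k$-th entry of the chosen tuple of $f(x)$ for all $k$; set $d(y)=c_i$. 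Then $h\cdot d=\id_Y$ by construction, and $Fd\cdot f=f'$ because the entry at position $k$ of $f(x)$ is the unique occurrence of that element, forcing $d$ to send it to $c_k$. This gives $(1)\Leftrightarrow(3)$, and applying it to the analytic functor $\Bag$ yields the same equivalence for $\Bag$. Since $\alpha_Y$ maps $[\sigma,(y_1,\dots,y_n)]$ to the finite multiset $\{y_1,\dots,y_n\}$ — i.e.\ retains exactly the multiset of occurring elements — condition $(3)$ for $f$ w.r.t.\ $F$ is verbatim condition $(3)$ for $\alpha_Y\cdot f$ w.r.t.\ $\Bag$, hence equivalent to $\alpha_Y\cdot f$ being $\Bag$-precise, which is $(2)$.

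Finally, for precise factorizations over $\Set$: given $f\colon X\to FY$, choose a representative tuple $(y^x_1,\dots,y^x_{n_x})$ of each $f(x)$, put $Y'=\{(x,i)\mid x\in X,\ 1\le i\le n_x\}\in\Set$, define $h(x,i)=y^x_i$, and let $f'(x)$ be the class of $((x,1),\dots,(x,n_x))$ in the $\sigma_x$-summand. Then $Fh\cdot f'=f$, and $f'$ satisfies $(3)$ — each $(x,i)$ occurs exactly once, at position $i$ of $f'(x)$ — so $f'$ is $F$-precise by the equivalence just shown, and $Y'\in\Set$ as required. I expect the main obstacle to be the ``occurs at most once'' half of $(1)\Rightarrow(3)$: one must set up the duplicated codomain $C=Y\sqcup\{y_0'\}$ and the relabelled $f'$ correctly, and keep the bookkeeping of $G_\sigma$-equivalence-class representatives consistent — and this representative-juggling is exactly the point at which, for base categories other than $\Set$, one is forced to restrict the factorization to a choice-admitting class $\S$ as in \autoref{defPreciseFactor}.
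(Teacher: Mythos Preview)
Your proof is correct, but it takes a genuinely different route from the paper's. The paper factors the argument through two auxiliary lemmas: first it characterizes $\Bag$-preciseness concretely (a map $s\colon X\to\Bag Y$ is $\Bag$-precise iff $\sum_{x}s(x)(y)=1$ for all $y$), using somewhat indirect constructions involving $\Bag(Y\times X)$ and $\Bag(\N\times Y)$; second it proves a general transfer lemma stating that for any weak-pullback-preserving $F$ and any natural transformation $\alpha\colon F\to G$ whose naturality squares are weak pullbacks, a map $f$ is $F$-precise iff $\alpha\cdot f$ is $G$-precise (and precise factorizations transfer from $G$ to $F$). The proposition then follows by checking that $\alpha\colon F\to\Bag$ has weak-pullback naturality squares.

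You instead prove $(1)\Leftrightarrow(3)$ directly for an arbitrary analytic $F$ by elementary constructions with $C=Y\setminus\{y_0\}$ and $C=Y\sqcup\{y_0'\}$, derive the $\Bag$ case as a special instance, and observe that condition $(3)$ is manifestly invariant under $\alpha$. Your explicit factorization through $Y'=\{(x,i)\}$ is also more hands-on than the paper's, which obtains factorizations abstractly from the transfer lemma. The paper's approach isolates a reusable categorical lemma (the transfer along cartesian natural transformations) that could apply in other situations; your approach is more self-contained and avoids having to verify that the naturality squares of $\alpha$ are weak pullbacks. Both are valid; yours is arguably the more transparent argument for this specific proposition.
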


\subsection{Nominal Sets: Regular 
Nondeterministic Nominal Automata} \label{sec:RNNA}
We derive an open map situation from the coalgebraic situation for
\emph{regular nondeterministic nominal automata} (\emph{RNNAs})
\cite{schroder17}. They are an extension of
automata to accept \emph{words with binders}, consisting of literals
$a\in \A$ and binders $\barbind{a}$ for $a \in \A$; the latter is
counted as  
length 1.  An example of such
a word of length 4 is $a\barbind{c}bc$, where the last $c$ is bound by
$\barbind{c}$. The order of binders makes difference:
$\barbind{a}\barbind{b}ab \neq \barbind{a}\barbind{b}ba$.
RNNAs are coalgebraically represented in
the category of nominal sets \cite{gabbay99}, a formalism about atoms 
(e.g. variables) that sit in more
complex structures (e.g. lambda terms), and gives a notion of \textit{binding}.
Because the choice principles \eqref{axJoin} and \eqref{axChoice} are not satisfied by every nominal sets, we instead use the class of \emph{strong nominal sets} for the precise factorization (Definition \ref{defPreciseFactor}).

\begin{notheorembrackets}%
\begin{definition}[\cite{gabbay99,pitts13}] Fix a countably infinite set $\A$, called the set of \emph{atoms}.
For the group $\perms(\A)$ of finite permutations on the set $\A$, a
  \emph{group action} $(X,\cdot)$ is a set $X$ together with a group homomorphism
  $\cdot\colon\perms(\A)\to \perms(X)$, written in infix notation. An element $x\in
  X$ is \emph{supported by} $S\subseteq \A$, if for all $\pi\in\perms(\A)$ with
  $\pi(a) =a \ \forall a\in S$ we have $\pi\cdot x = x$.
   A \emph{nominal set} is a group action for $\perms(\A)$ such that every $x\in
  X$ is finitely supported, i.e.~supported by a finite $S\subseteq \A$. A map
  $f\colon (X,\cdot)\to (Y,\star)$ is \emph{equivariant} if for all $x\in X$ and
  $\pi\in \perms(\A)$ we have $f(\pi\cdot x) = \pi\star f(x)$. The category of
  nominal sets and equivariant maps is denoted by $\Nom$. 
  A nominal set $(X,\cdot)$ is called \emph{strong} if for all $x\in X$ and
  $\pi\in \perms(\A)$ with $\pi\cdot x = x$ we have $\pi(a) = a$ for all $a\in \supp(x)$.
\end{definition}
\end{notheorembrackets}
Intuitively, the support of an element is the set of free literals. An
equivariant map can forget some of the support of an element, but can never
introduce new atoms, i.e.~$\supp(f(x)) \subseteq \supp(x)$. The intuition behind
strong nominal sets is that all atoms appear in a fixed order, that is, $\A^n$ is
strong, but $\powf(\A)$ (the finite powerset) is not. We set $\S$
to be the class of strong nominal sets:
\begin{example} \label{ex:whyClassS}
  The $\Nom$-functor of unordered pairs admits precise factorizations w.r.t.~strong
  nominal sets, but not w.r.t.~all nominal sets.
\end{example}
In the application, we fix the set $I = \A^{\#n}$ of distinct $n$-tuples of atoms
($n\ge 0$) as the pointing. The hom-sets $\Nom(X,\powufs Y)$ are ordered
point-wise.
\begin{proposition}
\label{prop:powersetNom}
Uniformly finitely supported powerset $\powufs(X) = \{Y\subseteq X \mid  \bigcup_{y\in
  Y}\supp(y)\text{ finite}\}$ satisfies \textup{(Ax2-5)} w.r.t. $\S$ the class of
strong nominal sets.\footnote{There are two variants of powersets discussed
  in \cite{schroder17}. The finite powerset
  \powf\xspace also fulfils the axioms. However, \emph{finitely supported}
  powerset $\mathcal{P}_{\mathsf{fs}}$ does not fulfil \eqref{axChoice}.}
\end{proposition}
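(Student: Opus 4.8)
The plan is to verify the four conditions \eqref{axJointlyepic}--\eqref{axChoice} of \autoref{table:axioms} for $T=\powufs$ on $\Nom$, with $\eta_X(x)=\{x\}$, $\bot_X(\ast)=\varnothing$, and the stated pointwise order ($f\sqsubseteq g$ iff $f(x)\subseteq g(x)$ for all $x$). Two of them are immediate. For \eqref{axUnitBot}: each $\{x\}$ lies in $\powufs X$ since $\supp(\{x\})=\supp(x)$ is finite, so $[\eta,\bot]\colon\Id+1\to\powufs$ is a well-defined equivariant natural transformation, and $\bot_Y\cdot{!_X}\sqsubseteq f$ holds because $\varnothing\subseteq f(x)$ always. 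For \eqref{axJointlyepic}: in $\Nom$ a family $(e_i)_{i\in I}$ is jointly epic iff it is jointly surjective (if $\bigcup_i e_i[X_i]$ were a proper equivariant subset of $Y$, its characteristic map into a two-element set and the constant map would separate the $e_i$), so $f\cdot e_i\sqsubseteq g\cdot e_i$ for all $i$ forces $f(y)\subseteq g(y)$ for every $y\in Y$. The real content is in \eqref{axJoin} and \eqref{axChoice}, both of which I would derive from a single \emph{support lemma}.

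\emph{Support lemma.} If $f\colon X\to\powufs Y$ is equivariant, $x\in X$ and $z\in f(x)$, then $\supp(z)\subseteq\supp(x)$. I would prove this by contradiction: suppose $a\in\supp(z)\setminus\supp(x)$ and pick a fresh atom $b$ outside the finite set $\supp(x)\cup\supp(z)\cup\bigcup_{w\in f(x)}\supp(w)$. The transposition $(a\,b)$ then fixes $\supp(x)$ pointwise, so $(a\,b)\cdot f(x)=f((a\,b)\cdot x)=f(x)$, hence $(a\,b)\cdot z\in f(x)$; but $\supp((a\,b)\cdot z)=(a\,b)\cdot\supp(z)\ni b$, contradicting $b\notin\bigcup_{w\in f(x)}\supp(w)$. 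This is exactly the step that uses \emph{uniform} finiteness of support: for the merely finitely supported powerset $\mathcal{P}_{\mathsf{fs}}$ the lemma — and, as indicated in the footnote, condition \eqref{axChoice} — fails.

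\emph{Condition \eqref{axJoin}.} Let $f\colon X\to\powufs Y$ with $X$ strong. The inequality $\bigsqcup\{[\eta,\bot]_Y\cdot f'\mid f'\colon X\to Y+1,\ [\eta,\bot]_Y\cdot f'\sqsubseteq f\}\sqsubseteq f$ is trivial, so it suffices to exhibit, for every $x\in X$ and every $y\in f(x)$, a witness $f'$ with $f'(x)=y$. I would put $f'(\pi\cdot x):=\pi\cdot y$ on the orbit of $x$ and $f'(x'):=\ast$ elsewhere. This is well defined precisely because $X$ is strong and the support lemma applies: if $\pi\cdot x=\rho\cdot x$ then $\rho^{-1}\pi$ fixes $\supp(x)$ pointwise by strongness, hence fixes $\supp(y)\subseteq\supp(x)$ pointwise by the lemma, so $\rho^{-1}\pi\cdot y=y$. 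Equivariance of $f'$ is then clear, $[\eta,\bot]_Y\cdot f'\sqsubseteq f$ follows from $\pi\cdot y\in\pi\cdot f(x)=f(\pi\cdot x)$, and since every witness is equivariant the pointwise union is again an equivariant map which, lying below $f$, is fibrewise uniformly finitely supported and so genuinely an element of $\Nom(X,\powufs Y)$.

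\emph{Condition \eqref{axChoice}, and the main obstacle.} Given $A\in\S$ strong, $x\colon A\to\powufs X$, $y\colon A\to Y+1$ and $h\colon X\to Y$ with $[\eta,\bot]_Y\cdot y\sqsubseteq Th\cdot x$, I would construct $x'\colon A\to X+1$ orbit by orbit. Choose a representative $a_0$ of each orbit of $A$; if $y(a_0)=\ast$, set $x'\equiv\ast$ on that orbit; otherwise the hypothesis gives $y(a_0)\in h[x(a_0)]$, so pick some $z_0\in x(a_0)$ with $h(z_0)=y(a_0)$ and set $x'(\pi\cdot a_0):=\pi\cdot z_0$. As in \eqref{axJoin}, this is well defined because $A$ is strong and $\supp(z_0)\subseteq\supp(a_0)$ by the support lemma applied to $x$; then equivariance of $x'$, the inclusion $[\eta,\bot]_X\cdot x'\sqsubseteq x$ (from $\pi\cdot z_0\in x(\pi\cdot a_0)$), and $(h+1)\cdot x'=y$ (from $h(\pi\cdot z_0)=\pi\cdot h(z_0)=\pi\cdot y(a_0)=y(\pi\cdot a_0)$) are all direct. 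I expect the only real difficulty here to be conceptual rather than computational, namely checking that these independent per-orbit choices glue into a single globally equivariant morphism; and the two ingredients making the gluing work — strongness of the domain and the support-containment lemma — are exactly the two places where the restriction to the class $\S$ of strong nominal sets and to the uniformly finitely supported powerset cannot be dropped.
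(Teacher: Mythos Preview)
Your proof is correct and follows essentially the same approach as the paper. The paper packages your orbit-by-orbit extension as an appeal to a standard lemma on strong nominal sets (extending a support-decreasing map from a system of orbit representatives to a unique equivariant map), and it obtains your support lemma in one line as $\supp(z)\subseteq\supp(f(x))\subseteq\supp(x)$, using that for a uniformly finitely supported set the support is exactly $\bigcup_{w\in f(x)}\supp(w)$ together with the general fact that equivariant maps shrink support; your argument by contradiction is just an unpacking of the first inclusion.
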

As for $F$, we study an LTS-like functor, extended with the \emph{binding functor} \cite{gabbay99}\rlap{:}
\begin{definition}
  For a nominal set $X$, define the $\alpha$-equivalence relation
  $\sim_\alpha$ on $\A\times X$ by:
  \(
    (a,x) \sim_\alpha (b,y)
    ~\Leftrightarrow~
    \exists c\in \A\setminus\supp(x)\setminus\supp(y)\text{ with }(a\,c)\cdot x = (b\,c)\cdot y.
  \)
  Denote the quotient by $[\A]X := \A\times X / \mathord{\sim}_\alpha$. The
  assignment $X\mapsto [\A]X$ extends to a functor, called the
  \emph{binding functor} $[\A]\colon \Nom\to\Nom$.
\end{definition}
RNNA are precisely $\powufs F$-coalgebras
for $FX = \{\checkmark\}+[\A]X+\A\times X$ \cite{schroder17}.
In this paper we additionally consider initial states for RNNAs.
\begin{proposition}
\label{prop:nomPrecise}
The binding functor $[\A]$ admits precise factorizations w.r.t. strong nominal
sets and so does $FX=\{\checkmark\}+[\A]X+\A\times X$.
\end{proposition}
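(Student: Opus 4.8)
The plan is to deduce the first assertion --- that $[\A]$ admits precise factorizations with respect to the class $\S$ of strong nominal sets --- from the right-adjoint clause of \autoref{propFactorizationClosure}, and then to obtain the assertion for $FX=\{\checkmark\}+[\A]X+\A\times X$ by closing off under the constructions of \autoref{propFactorizationClosure}.

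For $[\A]$, I would first invoke the standard fact that the binding functor is a right adjoint, its left adjoint being the separated product with the atoms $\A\mathbin{*}(\arg)$, where $\A\mathbin{*}X=\{(a,x)\in\A\times X\mid a\notin\supp(x)\}$ carries the diagonal action and $\Nom(\A\mathbin{*}X,Y)\cong\Nom(X,[\A]Y)$ naturally in $X$ and $Y$ (see e.g.~\cite{pitts13}). By the right-adjoint clause of \autoref{propFactorizationClosure} it then suffices to check that $\A\mathbin{*}(\arg)$ preserves $\S$-objects, i.e.~that $\A\mathbin{*}X\in\S$ whenever $X\in\S$. This reduces to a one-line support computation: as a subobject of $\A\times X$ the set $\A\mathbin{*}X$ has $\supp(a,x)=\{a\}\cup\supp(x)$, so if some $\pi$ fixes $(a,x)$ then $\pi(a)=a$ and $\pi\cdot x=x$; strongness of $X$ upgrades the latter to ``$\pi$ fixes $\supp(x)$ pointwise'', whence $\pi$ fixes $\{a\}\cup\supp(x)=\supp(a,x)$ pointwise. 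Hence $\A\mathbin{*}X$ is strong and $[\A]$ admits precise factorizations w.r.t.~$\S$. (Alternatively one can build the factorization by hand: $[\A]$ preserves weak pullbacks, so by \autoref{rem:PreciseWeakPullbacks} precision need only be tested at $g=\id$, and for $f\colon S\to[\A]Y$ with $S\in\S$ one chooses, uniformly in $s\in S$, a representative $(a_s,y_s)$ of $f(s)$ with $a_s$ fresh --- strongness of $S$ is exactly what makes this choice equivariant and the resulting $Y'$ strong.)

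For $F$ I would write it as the coproduct $F=\con{\{\checkmark\}}+[\A]+(\con{\A}\times\Id)$ of three functors. The constant functors $\con{\{\checkmark\}}$ and $\con{\A}$ admit precise factorizations by the constant-functor clause of \autoref{propFactorizationClosure}, since $\Nom$ has an initial object, the empty nominal set, which is vacuously strong; the identity functor admits them by the right-adjoint clause (it is self-adjoint and its left adjoint $\Id$ preserves $\S$); and $[\A]$ does so by the previous paragraph. The class $\S$ is closed under arbitrary coproducts --- an element $(i,x)$ of $\coprod_i X_i$ has support $\supp(x)$ and is fixed by $\pi$ precisely when $x$ is, so strongness passes from the summands to the coproduct --- so the product clause of \autoref{propFactorizationClosure} applies to $\con{\A}\times\Id$, and the coproduct clause then applies to $F$, using that $\Nom$ is infinitary extensive.

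I expect the only real obstacle to be pinning down the adjunction $\A\mathbin{*}(\arg)\dashv[\A]$ and the explicit description of the left adjoint (or, on the alternative route, verifying that the uniform choice of fresh representatives is genuinely equivariant and produces a strong nominal set). Everything after that is routine bookkeeping with \autoref{propFactorizationClosure} and the elementary behaviour of supports under subobjects, products, and coproducts in $\Nom$.
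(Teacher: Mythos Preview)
Your proposal is correct and follows essentially the same route as the paper: invoke the adjunction $\A\mathbin{*}(\arg)\dashv[\A]$, check that the left adjoint preserves strong nominal sets (the paper asserts this without the support computation you supply), and then close off under the clauses of \autoref{propFactorizationClosure} to reach $F$. The only difference is that you spell out the decomposition of $F$ and the strongness arguments explicitly, whereas the paper compresses all of this into a single appeal to \autoref{propFactorizationClosure}.4.
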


An element in $\PathOrd(\A^{\#n}, F)$ may be regarded as a word with
binders under a context $\vec a\vdash w$, where
$\vec a\in\A^{\#n}$, all literals in $w$ are bound or in $\vec a$, and $w$ may end with $\checkmark$.
Moreover, two word-in-contexts $\vec a\vdash w$ and $\vec a'\vdash w'$ are
identified if their closures are $\alpha$-equivalent, that is,
$\barbind{a_1}\cdots\barbind{a_n}w=\barbind{a'_1}\cdots\barbind{a'_n}w'$.  The trace semantics of a
RNNA $T$ contains all the word-in-contexts corresponding to runs in
$T$. This trace semantics distinguishes whether words are concluded by
$\checkmark$.

\subsection{Subsuming arbitrary open morphism situations}
\label{subsec:Lasota}
%First some notes on Presheaf-categories/toposes or toposes with 1 as a
%generator? Lasota~\cite{lasota02} is an instance.
Lasota~\cite{lasota02} provides a translation of a small path-category
$\bbP\hookrightarrow \M$ into a functor $\mathbb{F}\colon \Set^{\obj\bbP} \to
\Set^{\obj\bbP}$ defined by $\mathbb{F}\big(X_P\big)_{P}=(\prod_{Q\in\bbP}\big(\pow(X_Q))^{\bbP(P,Q)}\big)_{P\in \bbP}$.
% \[
%   \mathbb{F}\colon \big(X_P\big)_{P\in \bbP}
%   \mapsto \big(
%  \smash{\prod_{Q\in \bbP}}
%   \pow(X_Q)^{\bbP(P,Q)}
%   \big)_{P\in \bbP}
% \]
So the hom-sets $\Set^{\obj\bbP}(X,\mathbb{F}Y)$ have a canonical order, namely
the point-wise inclusion. This admits a functor $\oper{Beh}$ from $\M$ to
$\mathbb{F}$-coalgebras and lax coalgebra homomorphisms, and Lasota shows that 
$f\in \M(X,Y)$ is $\bbP$-open iff $\oper{Beh}(f)$ is a coalgebra homomorphism.
In the following, we show that we can apply our framework to $\mathbb{F}$ by a
suitable decomposition $\mathbb{F} = TF$ and a suitable object $I$ for the
initial state pointing. As usual in open map papers, we require that
$\bbP$ and $\mathbb{M}$ have a common initial object $0_\bbP$.
Observe that we have $ \mathbb{F} = T\cdot F $ where
\[
  T(X_P)_{P\in \bbP}
  = \big(\pow (X_P)\big)_{P\in \bbP}
  \quad
  \text{and}
  \quad
  F(X_P)_{P\in \bbP}
  = \big(
  \smash{\textstyle\coprod_{Q\in \bbP}}
  {\bbP(P,Q)}\times X_Q\big)_{P\in \bbP}.
\]
Lasota considers coalgebras without pointing, but one indeed has a canonical
pointing as follows.
For $P\in \bbP$, define the characteristic family $\chi^P\in \Set^{\obj\bbP}$ by
$\chi^P_Q = 1$ if $P=Q$
and $\chi^P_Q = \emptyset$ if $P\neq Q$.
With this, we fix the pointing $I = \chi^{0_\bbP}$.
\begin{proposition}
\label{prop:LasotaAxioms}
$T$, $F$ and $I$ satisfy the axioms from \autoref{table:axioms}, with $\S = \Set^{\obj\bbP}$.
\end{proposition}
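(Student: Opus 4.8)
The goal is to verify the five axioms of \autoref{table:axioms} for the specific functors $T$ and $F$ and pointing $I$ arising from Lasota's construction. The plan is to check them one at a time, exploiting the fact that the whole setting is a presheaf category $\Set^{\obj\bbP}$, where everything is computed pointwise (coordinate $P$ by coordinate $P$), and where $T$ is a pointwise powerset functor. So each axiom essentially reduces to the corresponding fact about $\pow$ on $\Set$, which we already know holds (these are exactly the properties used in \autoref{ex:ltsOpen} and discussed after each axiom in the main text). Concretely: for \eqref{axJointlyepic}, a family is jointly epic in $\Set^{\obj\bbP}$ iff it is so in each coordinate, and $\sqsubseteq$ is the pointwise inclusion order, so joint-epicness plus coordinatewise inclusion of composites forces inclusion; for \eqref{axUnitBot} one takes $\eta$ to be the (pointwise) singleton natural transformation and $\bot$ to be the (pointwise) empty set, and $\bot \sqsubseteq f$ is immediate since $\emptyset$ is the bottom of each powerset; for \eqref{axJoin}, each $f_P\colon X_P \to \pow(\coprod_Q \bbP(P,Q)\times Y_Q)$ is the union of the ``singleton-or-empty'' maps below it, again a pointwise restatement of ``every set is the union of its singletons''; and for \eqref{axChoice}, the choice principle is the pointwise version of: if $y \in h[x]$ then $\{y\}$ is the image of some singleton $\{x'\}\subseteq x$.

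The only axiom that requires genuine work is \eqref{axFactor}: $F+1$ must admit precise factorizations w.r.t.\ $\S = \Set^{\obj\bbP}$ (and we must note $I = \chi^{0_\bbP}\in \S$, which is trivial since $\S$ is everything). Here I would lean on \autoref{propFactorizationClosure}. The functor $F$ sends $(X_P)_P$ to $\big(\coprod_{Q}\bbP(P,Q)\times X_Q\big)_P$, so it is built from: projections $\Set^{\obj\bbP}\to\Set$ (right adjoints, or just identity-like components), coproducts indexed by $\obj\bbP$ and by hom-sets $\bbP(P,Q)$, products indexed by $\obj\bbP$, and ``constant-set times $X_Q$'' factors. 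Since $\Set^{\obj\bbP}$ is a presheaf category it is (infinitary) extensive and $\S = \obj\,\Set^{\obj\bbP}$ is trivially closed under all (co)products and isomorphisms; constant functors land in $\S$ because $\Set^{\obj\bbP}$ has an initial object lying in $\S$. So parts 1--4 of \autoref{propFactorizationClosure} give that $F$ admits precise factorizations w.r.t.\ $\S$; part 4 (extensivity) also gives $F+1$, i.e.\ $F + C_1$, admits them. Alternatively, and perhaps more cleanly, one can observe that $F$ is a coproduct of functors each of the form ``evaluate-at-$Q$, then multiply by the constant set $\bbP(P,Q)$, then reindex'', all of which are polynomial in the generalized sense, and invoke \autoref{powersetNotPrecise} in the extensive category $\Set^{\obj\bbP}$.

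The main obstacle I anticipate is bookkeeping rather than conceptual: making the decomposition $\mathbb{F} = TF$ precise enough that the closure properties of \autoref{propFactorizationClosure} literally apply — in particular being careful that the ``constant'' factor $\bbP(P,Q)$ is a genuine set (Lasota's $\bbP$ is small, which we may assume) so that $\bbP(P,Q)\times(\arg)$ is a coproduct of copies of the identity, hence admits precise factorizations; and checking that $F+1$, not just $F$, is covered, which is where $I$-extensivity of the presheaf category is used. For the $T$-axioms the subtlety is purely notational: unwinding ``jointly epic'' and the pointwise order in $\Set^{\obj\bbP}$ and reducing each statement coordinatewise to the already-known $\pow$ facts on $\Set$; once that reduction is spelled out the verifications are routine. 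I would therefore structure the proof as: (i) record that $\S = \obj\,\Set^{\obj\bbP}$ is closed under iso and all (co)products and that the category is infinitary extensive with initial object in $\S$; (ii) apply \autoref{propFactorizationClosure}/\autoref{powersetNotPrecise} to get \eqref{axFactor}; (iii) dispatch \eqref{axJointlyepic}--\eqref{axChoice} by pointwise reduction to $\pow$ on $\Set$.
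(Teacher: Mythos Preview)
Your plan is correct and matches the paper's approach: axioms \eqref{axJointlyepic}--\eqref{axChoice} are dispatched componentwise by reduction to the known facts about $\pow$ on $\Set$, and \eqref{axFactor} is obtained by decomposing $F$ and invoking \autoref{propFactorizationClosure}.

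The one place where the paper is more concrete is the ``evaluate-at-$Q$'' piece. The paper writes $F = \coprod_{Q\in\bbP} H_Q\times \pi_Q$ where $\pi_Q\colon (X_P)_P\mapsto (X_Q)_P$ is an \emph{endofunctor} on $\Set^{\obj\bbP}$ (not a projection to $\Set$, which would leave the framework of \autoref{propFactorizationClosure}), and then gives a short direct argument that $\pi_Q$ admits precise factorizations: given $f\colon (X_P)_P\to\pi_Q(Y)$, the precise part is the family of coproduct injections $\inj_P\colon X_P\to \coprod_{P'}X_{P'}$ placed at coordinate $Q$. Your instinct that this building block is a right adjoint is in fact correct --- its left adjoint sends $(X_P)_P$ to the family that is $\coprod_P X_P$ at coordinate $Q$ and $\emptyset$ elsewhere --- so item~5 of \autoref{propFactorizationClosure} would also do the job (trivially, since $\S$ is everything); that is a slightly slicker alternative to the paper's hands-on computation. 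Two small clean-ups: there are no ``products indexed by $\obj\bbP$'' in the decomposition of $F$ (it is a pure coproduct of the $H_Q\times\pi_Q$), and be careful to treat $\pi_Q$ as an endofunctor so that the closure proposition literally applies.
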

The path category in $\LCoalg(I,TF)$ from our theory can be described as follows.

\begin{proposition}
\label{prop:LasotaPrecise}
An object of $\Path(I,F)$ is a sequence of composable $\bbP$-mor-\\phisms $0_\bbP \xrightarrow{m_1} P_1
  \xrightarrow{m_2} P_2\cdots \xrightarrow{m_n} P_n$.
\end{proposition}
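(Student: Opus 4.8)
The goal is to unwind the definition of $\Path(I,F)$ for the specific $I = \chi^{0_\bbP}$ and $F(X_P)_{P\in\bbP} = \big(\coprod_{Q\in\bbP}\bbP(P,Q)\times X_Q\big)_{P\in\bbP}$, and to show that an object — a sequence of $F$-precise maps $\chi^{0_\bbP} = P_0 \xrightarrow{p_0} FP_1 \xrightarrow{} \cdots$ — is the same as a chain of composable $\bbP$-morphisms $0_\bbP \xrightarrow{m_1} P_1 \xrightarrow{m_2} \cdots \xrightarrow{m_n} P_n$. The first step is to identify what an $F$-precise map out of a characteristic family looks like. By \autoref{prop:LasotaAxioms} we already know $F$ admits precise factorizations w.r.t.\ $\S = \Set^{\obj\bbP}$ and that $I = \chi^{0_\bbP}\in\S$; so by \autoref{remarkPathCategory} every carrier $P_k$ occurring in a path is again in $\S$, but more is true here: I would first prove by induction that each $P_k$ is in fact (isomorphic to) a \emph{characteristic family} $\chi^{R_k}$ for a single object $R_k\in\bbP$, with $R_0 = 0_\bbP$.

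**Key steps.** (1) Base case: $P_0 = I = \chi^{0_\bbP}$, so $R_0 = 0_\bbP$. (2) Inductive step: suppose $P_k = \chi^{R}$ for some $R\in\bbP$; I must analyse an $F$-precise map $p\colon \chi^{R}\to F\, Z$ for $Z = (Z_Q)_{Q\in\bbP}\in\S$. Unfolding $F$, such a map picks, at component $R$, a single element of $\coprod_{Q\in\bbP}\bbP(R,Q)\times Z_Q$, i.e.\ a pair $(m,z)$ with $m\in\bbP(R,Q)$ and $z\in Z_Q$ for exactly one $Q$; at every other component $P\neq R$ the family $\chi^R_P = \emptyset$ so nothing is specified there. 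Now the $F$-precision condition says every ``atom'' of $Z$ must be used exactly once: since only the single element $z\in Z_Q$ is mentioned by $p$, precision forces $Z_Q = \{z\}$ to be a singleton and $Z_{Q'} = \emptyset$ for all $Q'\neq Q$ — that is, $Z \cong \chi^{Q}$. (Here I would invoke \autoref{prop:analyticPrecise}-style reasoning adapted to $\Set^{\obj\bbP}$, or argue directly from the definition of $F$-precise, using the pointwise structure of $\Set^{\obj\bbP}$ and that $F$ acts ``slotwise''.) Thus $P_{k+1}\cong\chi^{R_{k+1}}$ with $R_{k+1} := Q$, and the data of $p_k$ is exactly a morphism $m_{k+1}\in\bbP(R_k, R_{k+1})$. (3) Conversely, any chain $0_\bbP\xrightarrow{m_1}R_1\xrightarrow{m_2}\cdots\xrightarrow{m_n}R_n$ of $\bbP$-morphisms yields, via $P_k := \chi^{R_k}$ and $p_k := $ ``pick $(m_{k+1},\ast)$ at slot $R_k$'', a family of $F$-precise maps, hence an object of $\Path(I,F)$; one checks precision of each $p_k$ directly. (4) Finally note these two assignments are mutually inverse up to the evident isomorphisms, and that a path morphism $\vec\phi$ in $\Path(I,F)$ restricts (by \autoref{remarkPathCategory}, all $\phi_k$ are isos between characteristic families, hence identities once we fix representatives) to an equality of the underlying $\bbP$-chains up to the $\bbP(P,Q)$-prefix relation — matching $\PathOrd$.

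**Main obstacle.** The delicate point is step (2): carefully extracting from the abstract $F$-precision diagram the conclusion that the codomain family $Z$ must itself be a characteristic family, rather than merely ``supported at one slot.'' One has to rule out, e.g., $Z_Q$ having two elements both of which happen to be identified by a later map — but precision is exactly the condition that forbids this (no element of $Z$ may be unused, and the map only reaches one element of one slot). I would handle this by using \autoref{rem:PreciseWeakPullbacks} if $F$ preserves weak pullbacks on $\Set^{\obj\bbP}$ (it is a coproduct of ``$\bbP(P,Q)\times(-)$''-functors, each a coproduct of copies of a projection-like functor, so this should hold), reducing precision to the $g = \id$ case and making the ``every atom used exactly once'' reading rigorous. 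The rest — the base case, the converse direction, and the bookkeeping of the chain — is routine unfolding of definitions.
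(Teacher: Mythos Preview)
Your proposal is correct and follows essentially the same route as the paper. The paper isolates the inductive step as a standalone lemma: a map $f\colon \chi^P\to FY$ is $F$-precise iff $Y\cong\chi^Q$ for some $Q\in\bbP$; once this is established, the description of $\Path(I,F)$-objects as composable $\bbP$-chains follows by the same induction you outline.

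The one place where the paper's argument is tighter than yours is the ``main obstacle'' in step~(2). You plan to reduce to the $g=\id$ case via weak-pullback preservation and then argue in an atom-counting style. The paper instead argues directly: given $F$-precise $f\colon\chi^P\to FY$ with $f_P=\inj_Q(m,x)$, define the evident mono $h\colon\chi^Q\rightarrowtail Y$ (picking out $x\in Y_Q$) and $g\colon\chi^P\to F\chi^Q$ with $g_P=\inj_Q(m,\ast)$, so that $Fh\cdot g=f$. Applying the precision square with bottom map $F\id_Y$ yields a section $d\colon Y\to\chi^Q$ with $h\cdot d=\id_Y$; since $h$ is already monic, it is an isomorphism. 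This avoids any appeal to weak pullbacks or analytic-style counting, and you may find it cleaner to adopt. Your converse direction (that every $f\colon\chi^P\to F\chi^Q$ is $F$-precise) matches the paper's sufficiency argument.
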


%\subsection{Topological example}
%if we have time \& space.

\section{Conclusions and Further work}
\label{sec:furtherWork}
We proved that coalgebra homomorphisms for systems with
non-deterministic branching can be seen as open maps for a canonical
path-category, constructed from the computation type $F$. This limitation to
non-deterministic systems is unsurprising: as we have proved in Section
\ref{subsec:Lasota} on Lasota's work \cite{lasota02}, every open map situation
can been encoded as a coalgebra situation with a powerset-like functor, so with
non-deterministic branching. As a future work, we would like to extend this
theory of path-categories to coalgebras for further kinds of
branching, especially probabilistic and weighted. This will require (1) to
adapt open maps to allow those kinds of branching (2) adapt the axioms from
\autoref{table:axioms}, by replacing the ``+1'' part of
\eqref{axFactor} to something depending on the branching type.
\newpage

\bibliography{refs}
\bibliographystyle{splncs04}
\newpage
\begin{appendix}
\tikzsetfigurename{appendix-}
\section{Omitted Proofs}

\subsection*{Proof of \autoref{rem:PreciseWeakPullbacks}}

\begin{lemma} \label{preciseWeakPullback}
  Assuming that $F$ preserves weak pullbacks, a morphism $s\colon S\to FR$
  is $F$-precise iff for all $f,h$ the following implication holds:
  \[
    \begin{tikzcd}
      S
      \arrow{r}{f}
      \arrow{d}[swap]{s}
      & FC
      \arrow{dl}{Fh}
      \\
      FR
    \end{tikzcd}
    \quad\overset{\exists d\,}{\Longrightarrow}\quad
    \begin{tikzcd}
      S
      \arrow{r}{f}
      \arrow{d}[swap]{s}
      & FC
      \\
      FR
      \arrow{ur}[swap]{Fd}
    \end{tikzcd}
    \&
    ~~
    h\cdot d = \id_R
  \]
\end{lemma}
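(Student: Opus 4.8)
The plan is to prove the two implications of the equivalence separately. The "only if" direction needs nothing beyond the definition of $F$-precision; the "if" direction is where the weak-pullback hypothesis does the work.

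For the "only if" direction, assume $s$ is $F$-precise and let $f\colon S\to FC$ and $h\colon C\to R$ with $Fh\cdot f = s$ be given. Instantiate the defining diagram of $F$-precision with $D := R$ and $g := \id_R$. The hypothesis square then reads $Fh\cdot f = F\id_R\cdot s = s$, which holds by assumption, so precision yields $d\colon R\to C$ with $Fd\cdot s = f$ and $h\cdot d = \id_R$, exactly the required conclusion.

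For the "if" direction, assume the restricted implication holds for all $f,h$, and let a commuting square $Fh\cdot f = Fg\cdot s$ be given, with $f\colon S\to FC$, $g\colon R\to D$, $h\colon C\to D$; we must produce $d\colon R\to C$ with $Fd\cdot s = f$ and $h\cdot d = g$. First I form the pullback $P$ of $h$ and $g$ in $\C$ (available since $\C$ has pullbacks), with projections $\pi_1\colon P\to C$ and $\pi_2\colon P\to R$ satisfying $h\cdot\pi_1 = g\cdot\pi_2$. Since $F$ preserves weak pullbacks, $(FP, F\pi_1, F\pi_2)$ is a weak pullback of $(Fh,Fg)$; applying its (weak) universal property to the cone $(f,s)$, which commutes by assumption, gives a morphism $e\colon S\to FP$ with $F\pi_1\cdot e = f$ and $F\pi_2\cdot e = s$. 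The key move is then to read the restricted implication with $C$ replaced by $P$, with $f$ replaced by $e$, and with $h$ replaced by $\pi_2$: the triangle hypothesis $F\pi_2\cdot e = s$ holds, so we obtain $d'\colon R\to P$ with $Fd'\cdot s = e$ and $\pi_2\cdot d' = \id_R$. Setting $d := \pi_1\cdot d'\colon R\to C$, we compute $Fd\cdot s = F\pi_1\cdot Fd'\cdot s = F\pi_1\cdot e = f$ and $h\cdot d = h\cdot\pi_1\cdot d' = g\cdot\pi_2\cdot d' = g\cdot\id_R = g$, establishing $F$-precision.

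The only real subtlety, and the step I would be most careful about, is the bookkeeping around the lift through $FP$: one must check that weak-pullback preservation genuinely produces $e$ with $F\pi_2\cdot e = s$ on the nose, so that the triangle hypothesis of the restricted implication is literally met, and that the pullback identity $h\cdot\pi_1 = g\cdot\pi_2$ is invoked at precisely the point where $h\cdot d$ is rewritten as $g$. Everything else is routine diagram chasing.
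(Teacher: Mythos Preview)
Your proof is correct and follows essentially the same route as the paper: the ``only if'' direction is the trivial specialization $D=R$, $g=\id_R$, and the ``if'' direction forms the pullback of $h$ and $g$, uses weak-pullback preservation to factor $(f,s)$ through $FP$, applies the restricted hypothesis to the leg $F\pi_2\cdot e = s$, and composes the resulting section with $\pi_1$. The only cosmetic difference is notation (the paper writes $f'$ for your $e$ and does not name your $d'$ separately).
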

\begin{proof}
  Sufficiency is clear, because one can fix $D=R$ and $g=\id_R$ in the
  definition of $F$-precise. For necessity, consider $s,f,g,h$ as in the definition of
  $F$-precise. The pullback of $g$ along $h$ is weakly preserved by $F$, and so
  we have the following commuting diagram:
  \[
    \begin{tikzcd}
      |[yshift=8mm]|
      S
      \arrow[dashed]{r}[swap,near end]{\exists f'}
      \arrow{d}[swap]{s}
      \arrow[bend left=10]{rr}{f}
      & FP
      \arrow{r}[swap]{F\pi_1}
      \arrow{dl}{F\pi_2}
      \pullbackangle{-45}
      & FC
      \arrow{d}{Fh}
      \\
      FR
      \arrow{rr}[near end]{Fg}
      && FD
    \end{tikzcd}
  \]
  Hence, $s$ induces some $d\colon R\to P$ with $Fd\cdot s= f'$ and $\pi_2\cdot
  d = \id_R$. The witness that $s$ is $F$-precise is $\pi_1\cdot d\colon R\to
  C$, because $F(\pi_1\cdot d) \cdot s = f$ and $h\cdot \pi_1\cdot d = g$.\qed
\end{proof}

\subsection*{Proof of \autoref{remPreciseFactor}}

  Apply that $f_2$ is $F$-precise and one obtains a lifting $d\colon Y_2\to Y_1$ with
  $Fd\cdot f_2 = f_1$ and $h\cdot d = \id_{Y_2}$, i.e.~$d$ has a left inverse, $h$.
  Additionally, since $f_1$ is $F$-precise, $Fd\cdot f_2 = f_1$ induces some $d'\colon
  Y_1\to Y_2$ with $f_2=Fd'\cdot f_1$ and $d\cdot d' = \id_{Y_1}$. Since $d$ has both a
  left and right inverse, it is an isomorphism, and so is its left-inverse $h$.
  
  For the second part, we a have a precise factorisation of $f$, that is, a $F$-precise 
  morphism $\map{f'}{S}{Y'}$ with $Y' \in \S$, and a morphism $\map{h}{Y'}{Y}$ such that 
  $f = Fh\cdot f'$. By the previous point, $h$ is an isomorphism, and since $\S$ is closed 
  under isomorphisms, $Y \in \S$.

\subsection*{Proof of \autoref{propFactorizationClosure}}
\begin{enumerate}

	\item Fix $Z$ an object of $\C$, and assume given a morphism 
	$\map{f}{S}{Z = \C_ZY}$, with $S \in \S$. Since a morphism of the form 
	$\map{f'}{S}{F0}$ is always $F$-precise, then $\map{f' = f}{S}{Z = \C_Z0}$ 
	is $\C_Z$-precise. Taking $h$ as the unique morphism from $0$ to $Y$, 
	$Fh\cdot f' = \text{id}_Z\cdot f = f$.
	
	\item Let us start with the following lemma:
	
	\begin{lemma}
	If $\map{f}{S}{FR}$ is $F$-precise and $\map{f'}{R}{F'Q}$ is $F'$-precise, then 
	$Ff'\cdot f$ is $F\cdot F'$-precise.
	\end{lemma}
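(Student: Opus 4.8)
The plan is to unfold the definition of $(F\cdot F')$-preciseness for the composite $g := Ff'\cdot f\colon S\to FF'Q$ and to discharge the resulting obligation by a two-stage diagram chase: first feed the hypothesis into the preciseness of $f$, then feed the morphism thereby produced into the preciseness of $f'$. So suppose we are given $u\colon S\to FF'C$, $h\colon C\to D$ and $v\colon Q\to D$ with $FF'h\cdot u = FF'v\cdot g$; we must exhibit $e\colon Q\to C$ with $FF'e\cdot g = u$ and $h\cdot e = v$.

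First I would rewrite the hypothesis, using $g = Ff'\cdot f$ and functoriality of $F$, in the form $F(F'h)\cdot u = F(F'v\cdot f')\cdot f$. Read as a commuting square whose lower-left leg is $f\colon S\to FR$, this is precisely the premise of the $F$-preciseness of $f$, instantiated with the upper morphism $u\colon S\to F(F'C)$ and the two comparison morphisms $F'h\colon F'C\to F'D$ and $F'v\cdot f'\colon R\to F'D$. Preciseness of $f$ then delivers $d\colon R\to F'C$ with $Fd\cdot f = u$ and $F'h\cdot d = F'v\cdot f'$.

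Next I would observe that the equation $F'h\cdot d = F'v\cdot f'$ is itself a commuting square of exactly the shape required by the $F'$-preciseness of $f'$, with upper morphism $d\colon R\to F'C$ and comparison morphisms $h\colon C\to D$ and $v\colon Q\to D$; it therefore produces $e\colon Q\to C$ with $F'e\cdot f' = d$ and $h\cdot e = v$. This $e$ is the desired witness: the equation $h\cdot e = v$ holds by construction, and $FF'e\cdot g = FF'e\cdot Ff'\cdot f = F(F'e\cdot f')\cdot f = Fd\cdot f = u$, once more by functoriality. Since $F$ is not assumed to preserve weak pullbacks, the chase uses the full three-morphism form of the definition rather than the simplified criterion of \autoref{rem:PreciseWeakPullbacks}. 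There is no genuine obstacle here; the only point that needs care is the bookkeeping of which composite plays the upper role and which the comparison role at each of the two applications, together with the repeated identification of $FF'$ with $F\cdot F'$ on morphisms and the fact that $F$ preserves composition.
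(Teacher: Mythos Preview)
Your proof is correct and follows essentially the same two-stage diagram chase as the paper: first invoke the $F$-preciseness of $f$ to produce the intermediate morphism $d\colon R\to F'C$, then invoke the $F'$-preciseness of $f'$ on the resulting equation to obtain the final witness. The only differences are cosmetic (variable names: the paper writes $k,u,v,d,d'$ where you write $u,h,v,d,e$), and your closing remark about not needing weak-pullback preservation is accurate and matches the paper's unrestricted use of the definition.
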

	
	\begin{proof}
	Assume given the following situation:
  \[
    \begin{tikzcd}
      S
      \arrow{r}{k}
      \arrow{d}[swap]{f}
      & FF'C
      \arrow{dd}{FF'u}
      \\
      FR
      \arrow{d}[swap]{Ff'}
      \\
      FF'Q
      \arrow{r}{FF'v}
      & FF'W
    \end{tikzcd}
  \]
	
	In particular, we have $(FF'v\cdot Ff')\cdot f = FF'u\cdot k$. Since $f$ is $F$-precise,
	there is $\map{d}{R}{F'C}$ such that 
	$$k = Fd\cdot f$$
	$$F'v\cdot f' = F'u\cdot d$$
	Since $f'$ is $F'$-precise, there is $\map{d'}{Q}{C}$ such that
	$$d = F'd'\cdot f'$$
	$$v = u \cdot d'$$
	This implies that $k = FF'd'\cdot(Ff'\cdot f)$.\qed
	\end{proof}
	
	Let us prove this point now. We start with a morphism $\map{f}{S}{FF'Y}$, with 
	$S \in \S$. Since $F$ admits precise factorisations w.r.t. $\S$, there are 
	$\map{f'}{S}{FY'}$ $F$-precise and $\map{h}{Y'}{F'Y}$ with $Y'\in \S$, and 
	$Fh\cdot f' = f$. Now, since $F'$ admits precise factorisations w.r.t. $\S$, there are 
	$\map{f''}{Y'}{F'Y''}$ $F'$-precise and $\map{h'}{Y''}{Y}$ such that $F'h'\cdot f'' = h$.
	Consequently, $f = FF'h'\cdot(Ff''\cdot f')$ and $Ff''\cdot f'$ is $FF'$-precise by the 
	previous lemma.
	
	\item Let us start with the following lemma:
	
	\begin{lemma}
  \label{FxGprecise}
  For a family of functors $F_i\colon \C\to \C$, $i \in I$, and $F_i$-precise morphisms
  $f_i\colon X\to F_iY_I$, $g\colon X\to GY_G$, then we have a 
  $\prod\limits_{i\in I} F_i$-precise morphism
  \[
    \fpair{F_i\inj_i \cdot f_i}\colon X\to \prod\limits_{i\in I} F_i(\coprod\limits_{j\in I} Y_j).
  \]
\end{lemma}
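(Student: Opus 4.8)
First I would fix notation: write $P:=\prod_{i\in I}F_i$ for the product functor, $Y:=\coprod_{j\in I}Y_j$ with coprojections $\inj_i\colon Y_i\to Y$, and $\pi_i\colon PZ\to F_iZ$ for the product projections, which are natural in $Z$, so that $\pi_i\cdot Pk=F_ik\cdot\pi_i$ for every morphism $k$. The candidate precise morphism is $s:=\langle F_i\inj_i\cdot f_i\rangle_{i\in I}\colon X\to P Y$, characterised by $\pi_i\cdot s=F_i\inj_i\cdot f_i$ for all $i$. The plan is to verify the defining implication of $P$-preciseness for $s$ directly: I would check a test square for $P$ componentwise, observe that its $i$-th component is exactly a test square for the $F_i$-precise map $f_i$, extract a lifting from each $f_i$, and glue these liftings together via the universal property of the coproduct $Y$.

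Concretely, I would take $u\colon X\to PC$, $g\colon Y\to D$, $h\colon C\to D$ with $Ph\cdot u=Pg\cdot s$, post-compose with $\pi_i$, and use naturality of the projections to obtain, for each $i\in I$, the identity $F_ih\cdot(\pi_i\cdot u)=F_i(g\cdot\inj_i)\cdot f_i$ --- precisely the hypothesis of $F_i$-preciseness for $f_i$ with right leg $h$ and bottom leg $g\cdot\inj_i$. This yields $d_i\colon Y_i\to C$ with $F_id_i\cdot f_i=\pi_i\cdot u$ and $h\cdot d_i=g\cdot\inj_i$. Setting $d:=[d_i]_{i\in I}\colon Y\to C$, I would then verify $h\cdot d=g$ (since $h\cdot d\cdot\inj_i=h\cdot d_i=g\cdot\inj_i$ for all $i$) and $Pd\cdot s=u$ (since $\pi_i\cdot Pd\cdot s=F_id\cdot F_i\inj_i\cdot f_i=F_i(d\cdot\inj_i)\cdot f_i=F_id_i\cdot f_i=\pi_i\cdot u$ for all $i$). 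Hence $d$ witnesses that $s$ is $P$-precise, which is the claim.

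The argument is essentially bookkeeping, so I do not expect a genuine obstacle; the one point that needs care is the indexing, namely that the $i$-th component of the product test square is fed to the $i$-th precise map $f_i$ and that its bottom leg $g\cdot\inj_i$ is exactly the coproduct injection used to assemble $d$ --- this alignment is what forces $h\cdot d=g$. The cotupling $[d_i]_{i\in I}$ is legitimate even for infinite $I$ since $Y=\coprod_jY_j$ is assumed to exist, and for the intended application in Proposition~\ref{propFactorizationClosure} it moreover lies in $\S$ because $\S$ is closed under $I$-coproducts. The stray datum $g\colon X\to GY_G$ appearing in the statement is vestigial and would not be used.
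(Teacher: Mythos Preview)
Your proposal is correct and follows essentially the same approach as the paper's proof: both project the test square for $\prod_i F_i$-preciseness componentwise to obtain test squares for each $F_i$-precise $f_i$, extract diagonals $d_i\colon Y_i\to C$, cotuple them to $d=[d_i]_{i\in I}$, and verify the two triangles. Your observation that the datum $g\colon X\to GY_G$ in the statement is vestigial is also correct; it is a leftover from a binary version of the lemma and plays no role in the argument.
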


\begin{proof}
  Consider a square
  \[
    \begin{tikzcd}
      X
      \arrow{r}{\fpair{v_i}_{i\in I}}
      \arrow{d}[swap]{\fpair{f_i}_{i\in I}}
      &
      \prod\limits_{i\in I} F_iW
      \arrow{dd}{[F_iu]_{i\in I}}
      \\
      \prod\limits_{i\in I} F_iY_i
      \arrow{d}[swap]{[F_i\inj_i]_{i\in I}}
      &
      \\
      \prod\limits_{i\in I} F_i(\coprod\limits_{j\in I} Y_j)
      \arrow{r}{[F_ih]_{i\in I}}
      & 
      \prod\limits_{i\in I} F_iZ
    \end{tikzcd}
    \Longrightarrow
    \begin{tikzcd}
      X
      \arrow{r}{v_i}
      \arrow{d}[swap]{f_i}
      &
      FW
      \arrow{dd}{F_iu}
      \\
      F_iY_i
      \arrow{d}[swap]{F_i\inj_i}
      \\
      F_i(\coprod\limits_{j\in I} Y_j)
      \arrow{r}{F_ih}
      & 
      F_iZ
    \end{tikzcd}
  \]
  Since $f_i$ is $F_i$-precise, we obtain some $d_i\colon Y_i\to W$ with $F_id_i\cdot
  f_i = v_i$ and $u\cdot d_i = h\cdot \inj_i$. We have 
  $[d_j]_{j\in I}\colon \coprod\limits_{j\in I} Y_j\to W$ with
  \[
    (\prod\limits_{i\in I}F_i[d_j]_{j\in I}) \cdot \fpair{F_i\inj_i\cdot f_i}_{i\in I}
    = \fpair{Fd_i\cdot f_i}_{i\in I}
    = \fpair{v_i}_{i \in I}
  \]
  and
  \[
    u\cdot [d_i]_{i\in I} = [u\cdot d_i]_{i\in I} = [h\cdot \inj_i]_{i\in I} = h.
    %\tag*{\qedhere}
  \]
  \qed
\end{proof}
	
	Let us prove this point now. Consider $\fpair{f_i}_{i\in I}\colon X\to \prod\limits_{i\in I} F_iY$ 
	and consider the $F_i$-precise factorizations:
  \[
    \begin{tikzcd}
      X \arrow{r}{f_i}
      \arrow[dashed]{dr}[swap]{f'_i}
      & F_iY
      \\
      & F_iY_i
      \arrow{u}[swap]{F_ih_i}
    \end{tikzcd}
  \]
  By \autoref{FxGprecise} we have that $\fpair{F_i\inj_i\cdot f'_i}\colon 
  X\to \prod\limits_{i\in I} F_i(\coprod\limits_{j\in I} Y_j)$ is 
  $\prod\limits_{i\in I} F_i$-precise and it is the
  $\prod\limits_{i\in I} F_i$-precise factorization of $\fpair{f_i}_{i\in I}$:
  \[
    (\prod\limits_{i\in I} F_i)[h_i]_{i\in I}\cdot \fpair{F_i\inj_i\cdot f'_i}_{i\in I}
    = \fpair{F_ih_i\cdot f'_i}_{i\in I} = \fpair{f_i}_{i\in I}.
    %\tag*{\qedhere}
  \]
	
	\item By $I$-extensive we mean `extensive' if $I$ is finite and `infinitary
    extensive' if $I$ is infinite. In any case, $\C$ is $I$-extensive if for
    for all families $(X_i)_{i\in I},(Y_i)_{i\in I},(g_i)_{i\in I},(x_i)_{i\in
      I}$ and $h$ with
    \[
      \begin{tikzcd}
        X_i
        \arrow{r}{x_i}
        \arrow{d}[swap]{g_i}
        & P
        \arrow{d}{h}
        \\
        Y_i\arrow{r}{\inj_i}
        & \coprod_{j\in I} Y_j
      \end{tikzcd}
      \qquad\forall{i\in I}
    \]
    the following equivalence holds:
    \[
      \begin{array}{c}
      (x_i\colon X_i\to P)_{i\in I}
      \text{ is a coproduct}
      \\\Longleftrightarrow\\
      \text{ for all }i\in I\text{ the above square is a pullback}
      \end{array}
    \]
    Let us start with the following lemma:
	
	\begin{lemma}
  \label{coproductPrecise}
  Given functors $F_i\colon \C\to\C$, $i\in I$, on an $I$-extensive category, 
  and morphisms
  $f_i\colon A_i\to F_iX_i$, $i\in I$. Then \( \coprod_{i\in I}(F_i\inj_i\cdot
  f_i) \colon \coprod_{i\in I}A_i\to \big(\coprod_{i\in
    I}F_i\big)\big(\coprod_{i\in I}X_i\big) \) is $\coprod_{i\in I}F_i$-precise
  if $f_i$ is $F_i$-precise for every $i\in I$.
\end{lemma}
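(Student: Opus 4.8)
The plan is to verify the universal property in the definition of $\coprod_{i\in I}F_i$-precise directly, reducing it componentwise to the preciseness of each $f_i$ by means of $I$-extensivity. Write $G:=\coprod_{i\in I}F_i$, $A:=\coprod_{i\in I}A_i$, $X:=\coprod_{i\in I}X_i$, and $s:=\coprod_{i\in I}(F_i\inj_i\cdot f_i)\colon A\to GX$, so that $s\cdot\inj_i=\inj_i\cdot F_i\inj_i\cdot f_i$ for every $i\in I$ (here the outermost $\inj_i$ is the coproduct injection $F_iX\to\coprod_{j}F_jX=GX$ and the middle one is $\inj_i\colon X_i\to X$). Given $k\colon A\to GC$, $g\colon X\to D$ and $h\colon C\to D$ with $Gh\cdot k=Gg\cdot s$, the goal is to produce $d\colon X\to C$ with $Gd\cdot s=k$ and $h\cdot d=g$.

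First I would use $I$-extensivity to decompose $k$ fibrewise. Since $G=\coprod_j F_j$, the family $(\inj_i\colon F_iC\to GC)_{i\in I}$ is a coproduct and, because $Gh=\coprod_j F_jh$, each square with top arrow $\inj_i\colon F_iC\to GC$, left arrow $F_ih$, bottom arrow $\inj_i\colon F_iD\to GD$ and right arrow $Gh$ commutes; hence by $I$-extensivity every such square is a pullback. Precomposing the hypothesis with $\inj_i\colon A_i\to A$ and using $Gg\cdot\inj_i=\inj_i\cdot F_ig$ gives $Gh\cdot(k\cdot\inj_i)=\inj_i\cdot\bigl(F_i(g\cdot\inj_i)\cdot f_i\bigr)$, so by the pullback property there is a unique $k_i'\colon A_i\to F_iC$ with $\inj_i\cdot k_i'=k\cdot\inj_i$ and $F_ih\cdot k_i'=F_i(g\cdot\inj_i)\cdot f_i$.

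Then I would feed, for each $i$, the commuting square with top $k_i'$, left $f_i$, right $F_ih$ and bottom $F_i(g\cdot\inj_i)$ into the preciseness of $f_i$, obtaining $d_i\colon X_i\to C$ with $F_id_i\cdot f_i=k_i'$ and $h\cdot d_i=g\cdot\inj_i$. Setting $d:=[d_i]_{i\in I}\colon X\to C$ then gives $h\cdot d=[h\cdot d_i]_i=[g\cdot\inj_i]_i=g$, and precomposing $Gd\cdot s$ with $\inj_i\colon A_i\to A$ yields $\inj_i\cdot F_i(d\cdot\inj_i)\cdot f_i=\inj_i\cdot F_id_i\cdot f_i=\inj_i\cdot k_i'=k\cdot\inj_i$, so $Gd\cdot s=k$ because the injections $\inj_i\colon A_i\to A$ are jointly epic. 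The only place where the hypotheses on $\C$ really enter, and hence the main obstacle, is the extensivity step decomposing $GC$ over the coproduct $GD$; once that pullback is available the rest is bookkeeping with coproduct universal properties, entirely parallel to the proof of \autoref{FxGprecise}.
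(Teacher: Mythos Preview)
Your proof is correct and follows essentially the same approach as the paper's. The only difference is presentational: the paper simply asserts that ``by the extensivity of $\C$ we can assume that the top morphism $\coprod_{i\in I}A_i\to \coprod_{i\in I}F_iC$ is indeed a coproduct of morphisms'', whereas you spell this step out explicitly via the pullback characterisation of extensivity, which is arguably cleaner since it makes transparent why the decomposition of $k$ respects the \emph{given} coproduct structure $A=\coprod_i A_i$.
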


\begin{proof}
  Consider a commutative square
  \[
    \begin{tikzcd}[column sep=5mm]
      \coprod_{i\in I}A_i
      \arrow{d}[swap]{\coprod_{i\in I}f_i}
      \arrow{r}{\coprod_{i\in I} a_i}
      &[30mm] \coprod_{i\in I}F_iC
      \arrow{dd}{\coprod_{i\in I}F_ic}
      \\
      \coprod_{i\in I}(F_iX_i)
      \arrow{d}[swap]{\coprod_{i\in I}F_i\inj_i}
      \arrow{dr}{\coprod_{i\in I}F_ig_i}
      \\
      \big(\coprod_{i\in I}F_i\big)\big(\coprod_{i\in I}X_i\big)
      \arrow{r}[swap]{\big(\coprod_{i\in I}F_i\big)[F_ig_i]_{i\in I}}
      & \coprod_{i\in I}F_iD
      \\
      \coprod_{i\in I}F_i(\coprod_{j\in I} X_i)
      \descto[sloped]{u}{=}
    \end{tikzcd}
  \]
  Note that by the extensivity of $\C$ we can assume that the top morphism
  $\coprod_{i\in I}A_i\to \coprod_{i\in I}F_iC$ is indeed a coproduct of
  morphisms. Hence we have $F_i g_i\cdot f_i = F_ic\cdot
  a_i$, for every $i\in I$. Since $f_i$ is $F_i$-precise, this induces some
  $d_i\colon X_i\to C$ with $F_id_i\cdot f_i = a_i$ and $c\cdot g_i = d_i$.
  In total, we have $[d_i]_{i\in I}\colon \coprod_{i\in I} X_i\to C$ with
  $c\cdot [d_i]_{i\in I} = [g_i]_{i\in I}$ and
  \[
    \big(\coprod_{i\in I}F_i\big)[d_i]_{i\in I}
    \cdot \coprod_{i\in I}(F_i\inj_i\cdot f_i)
    = \coprod_{i\in I}(F_id_i\cdot f_i)
    = \coprod_{i\in I} a_i
    \tag*{\qed}
  \]
\end{proof}
	
	Let us prove this point now. Given a morphism $h\colon X\to (\coprod_{i\in I}F_i)(Y)$, construct the
  pullbacks in the $I$-extensive category $\C$:
  \[
    \begin{tikzcd}
      X_i
      \arrow{r}{\inj_i}
      \arrow{d}[swap]{f_i}
      \pullbackangle{-45}
      & X
      \arrow{d}{h}
      \\
      F_iY
      \arrow{r}{\inj_i}
      & (\coprod_{i\in I}F_i)(Y)
    \end{tikzcd}
    \text{ for every }i\in I.
  \]
  So we have $X = \coprod_{i\in I}X_i$ with the coproduct injections as in the
  top row of the above pullback diagram, and in particular $h = [\inj_i\cdot
  f_i]_{i\in I} = \coprod_{i\in I} f_i$.
  Let $f_i'$ be the $F_i$-precise morphisms through
  which $f_i$ factors for every $i\in I$:
  \[
    \forall i\in I\quad
    \begin{tikzcd}
      X_i
      \arrow[dashed]{r}{\exists f_i'}
      \arrow{dr}[swap]{f_i}
      & FY_i
      \arrow{d}{Fy_i}
      \\
      & FY
    \end{tikzcd}
    ~\Longrightarrow
    \hspace{-4mm}
    \begin{tikzcd}[column sep=18mm]
      \coprod_{i\in I} X_i
      \arrow{r}[yshift=2mm]{\coprod_{i\in I}F\inj_i\cdot f_i'}
      \arrow{dr}[swap]{h = \coprod_{i\in I}f_i}
      & (\coprod_{i\in I} F_i)(\coprod_{i\in I}Y_i)
      \arrow{d}{(\coprod_{i\in I} F_i)(\coprod_{i\in I}y_i)}
      \\
      & (\coprod_{i\in I} F_i)(Y)
    \end{tikzcd}
  \]
  By \autoref{coproductPrecise}, $\coprod_{i\in I}F\inj_i\cdot f_i'$ is
  $\coprod_{i\in I}F_i$-precise.
  \\[2mm]
	
	\item We first show that $\eta_X\colon X\to R(LX)$ is $R$-precise for
  every $X\in \C$. Since right adjoints preserve limits, $R$ preserves (weak)
  pullbacks and it suffices to check \autoref{preciseWeakPullback}/\autoref{rem:PreciseWeakPullbacks}. For any
  suitable commutative triangle we have
  \[
    \begin{tikzcd}
      X
      \arrow{r}{g}
      \arrow{d}[swap]{\eta_X}
      & RZ
      \arrow{dl}{Rh}
      \\
      RLX
    \end{tikzcd}
  \]
  Since $L\dashv R$, there exists a unique $g'$ with
  \[
    \begin{tikzcd}
      X
      \arrow{r}{g}
      \arrow{d}[swap]{\eta_{X}}
      & RZ
      \\
      RLX
      \arrow{ur}[swap]{Rg'}
    \end{tikzcd}
    \text{ and }
    ~
    \begin{tikzcd}
      LX
      \arrow{r}{g'}
      \arrow{d}[swap]{\id_{LX}}
      & Z
      \arrow{dl}{h}
      \\
      LX
    \end{tikzcd}
  \]
  by the universal mapping property of $L\dashv R$ and by the naturality of the
  isomorphism $\C(LX,Z)\cong \C(X,RZ)$ respectively. Now, we can prove this point:
  \begin{itemize}
  \item For necessity and $X\in \S$, we have that $\eta_X\colon X\to RLX$ is
    $R$-precise by the previous argument, and hence $LX\in \S$ by
    \autoref{remPreciseFactor}.
  \item For sufficiency, consider a morphism $f\colon X\to RY$. The adjunction
    induces a unique $f'\colon LX\to Y$ with $Rf'\cdot \eta_X = f$, where
    $\eta_X\colon X\to RLX$ is $R$-precise as shown above. Since $L$ preserves
    objects in $\S$, we have that $LX\in \S$.
    %\qedhere
  \end{itemize}
\end{enumerate}

\subsection*{Proof of \autoref{powersetNotPrecise}}
Given an $\pow$-precise $f\colon X\to \pow Y$, define $f'\colon X\to \pow (Y+Y)$
by $f'(x) = \{\inl(y), \inr(y)\mid y\in f(x)\}$, and so $\pow[\id_Y,\id_Y]\cdot
f' = f$. Hence, we have some $d\colon Y\to Y+Y$ with $[\id_Y,\id_Y]\cdot d =
\id _Y$ and $f' = \pow d\cdot f$. The first equation implies that for every $y\in
Y$, $d(y)$ is $\inl(y)$ or $\inr(y)$. So for $x\in X$ and for every $y\in Y$, either $\inl(y)$
or $\inr(y)$ is in $d[f(x)] = (\pow d\cdot f)(x) = f'(x)$, which is a
contradiction unless $X$ or $Y$ is empty. Hence, $f(x) = \emptyset$ for all
$x\in X$.

\subsection*{Proof of \autoref{remarkPathCategory}}
  The proof is by induction on the length.
  Since $P_0 = I = Q_0$, $\phi_0\colon P_0\to Q_0$ is an isomorphism.
  Assume that $\phi_k\colon P_k\to Q_k$ is isomorphic for $k < n$.
  Since $q_k\cdot \phi_k$ and $p_k$ are $F+1$-precise, we have that $\phi_{k+1}$
  is isomorphic by \autoref{remPreciseFactor}.
  
  The second part of the remark is a consequence of the second part of 
  \autoref{remPreciseFactor}.

\subsection*{Proof of \autoref{prop:PathToPathOrd}}
\begin{enumerate}
	\item
  \begin{itemize}
  \item \textbf{Functoriality.} consider a path morphism $\vec\phi_{n+1}\colon (\vec
  P_{n+1}, \vec p_n) \to (\vec Q_{m+1}, \vec q_m)$ and use that all components
  are isos~(\autoref{remPreciseFactor})
  \[
    \begin{tikzcd}
      I= Q_0
      \arrow[equals]{d}
      \arrow{r}{q_{0}}
      & FQ_1
      \arrow{d}[swap]{F\phi_1^{-1}}
      \arrow[dotted,-]{r}
      \arrow[shorten <= 4mm]{r}
      & F^{n}Q_{n}
      \arrow{d}[swap]{F^{n}\phi_{n+1}^{-1}}
      \arrow[dotted,-]{r}
      \arrow[shorten <= 4mm]{r}
      & F^{m}Q_{m}
      \arrow{r}{F^{m}!}
      & F^{m}1
      \arrow{dl}{F^{n}!}
      \\
      I= P_0
      \arrow{r}{p_{0}}
      & FP_1
      \arrow[dotted,-]{r}
      \arrow[shorten <= 4mm]{r}
      & F^{n}P_{n}
      \arrow{r}{F^{n}!}
      & F^{n}1
    \end{tikzcd}
  \]
  Hence $\Comp(\vec P_{n+1}, \vec p_n) \le \Comp(\vec Q_{m+1}, \vec q_m)$.
  \item \textbf{Fullness.} Since $\PathOrd(I,F)$ is partially ordered, it
    sufficies to show that whenenver $\PathOrd(I,F)(\Comp(\vec P_{n+1}, \vec
    p_n),\Comp(\vec Q_{m+1}, \vec q_m))$ is non-empty, then so is
    $\Path(I,F)(\vec P_{n+1}, \vec p_n),(\vec Q_{m+1}, \vec q_m)$. We have $n\le
    m$ and construct a morphism $\vec\phi_{n+1}\colon (\vec P_{n+1},\vec p_n) \to (\vec
    Q_{m+1},\vec q_m)$ by induction. There is nothing to do in the base case
    $\phi_0 := \id_I$. For the step, assume some $\phi_k\colon P_k\to Q_k$
    making the following diagram commute:
    \[
      \begin{tikzcd}
        P_k
        \arrow{r}{\phi_k}
        \arrow{d}{p_k}
        & Q_k
        \arrow{r}{q_k}
        & FQ_{k+1}
        \arrow[dotted,-]{r}
        \arrow[shorten <= 4mm]{r}
        & F^{n-k} Q_n
        \arrow{d}{F^{n-k}!}
        \\
        FP_{k+1}
        \arrow[dotted,-]{r}
        \arrow[shorten <= 4mm]{r}
        & F^{n-k} P_n
        \arrow{rr}{F^{n-k}!}
        && F^{n-k} 1
      \end{tikzcd}
    \]
    Since $p_k$ is $F$-precise, we obtain a morphism $\phi_{k+1}\colon
    P_{k+1}\to Q_{k+1}$ with
    \[
      \begin{tikzcd}
        P_k
        \arrow{r}{\phi_k}
        \arrow{d}[swap]{p_k}
        & Q_k
        \arrow{d}{q_k}
        \\
        FP_{k+1}
        \arrow{r}{F\phi_{k+1}}
        & FQ_{k+1}
      \end{tikzcd}
      \text{and}
      \begin{tikzcd}
        P_{k+1}
        \arrow{r}{\phi_{k+1}}
        \arrow[dotted,-]{d}
        \arrow[shorten <= 2mm]{d}
        & Q_{k+1}
        \arrow[dotted,-]{r}
        \arrow[shorten <= 4mm]{r}
        & F^{n-k-1} Q_n
        \arrow{d}{F^{n-k-1}!}
        \\
        F^{n-k-1} P_n
        \arrow{rr}{F^{n-k-1}!}
        && F^{n-k-1} 1.
      \end{tikzcd}
    \]
    So, $\vec\phi_{k+1}$ is indeed a morphism in $\Path(I,F)$.
  \end{itemize}
  \item For every $u\colon I\to F^n 1$ we can
  define a path $(\vec P_{n+1},\vec p_n)$ inductively, starting with $P_0 := I
  \in \S$. Given some $\C$-morphism $u'\colon P_k\to F^{n-k} 1$, $k\le n$,
  $P_k\in \S$, we are done with $r = \id_1$ if $k=n$. If $k < n$ consider the
  $F$-precise factorization
    \[
      \begin{tikzcd}
        P_k \arrow{dr}[swap]{r}
        \arrow[dashed]{r}{p_k}
        & FP_{k+1}
        \arrow{d}{Fr'}
        \\
        & FF^{n-k-1}1
      \end{tikzcd}
    \]
    providing $P_{k+1}\in \S$ and $p_k\colon P_k\to FP_{k+1}$ and some $u''\colon
    P_{k+1}\to F^{n-k-1}1$. This defines the families $\vec P_{n+1}$ and $\vec
    p_n$ with $\Comp(\vec P_{n+1},\vec p_n) = u$.
  \item By \autoref{uniqueFactorization}.
  \end{enumerate}

\subsection*{Proof of \autoref{lem:stepwiseFromPath}}
For all $k < n$, the following diagram commutes:
  \[
    \begin{tikzcd}[column sep=8mm]
      P_k
      \arrow{r}{\inj_k}
      \arrow{d}[swap]{p_k}
      &[7mm] \coprod_{j\le n} P_j
      \arrow{d}[pos=0.4,description]{\smash{\big[[\inl\cdot F\inj_{j+1}\cdot
          p_j]_{j<n},\inr\cdot!\big]}
        \mathrlap{\phantom{p}}% for the right bounding box
      }
      \arrow[bend left=20]{dr}[above right]{=:\, \rho}
      \\[5mm]
      FP_{k+1} + 1
      \arrow{d}[swap]{Fx_{k+1}+1}
      \arrow{r}{F\inj_{k+1}+1}
      & F\coprod_{j\le n}P_{j} + 1
      \arrow{r}{[\eta,\bot]}
      \arrow{d}[swap]{F[x_j]_{j\le n}+1}
      & TF\coprod_{j\le n}P_{j+1}
      \arrow{d}{TF[x_j]_{j\le n}}
      \\
      FX+1
      \arrow[equals]{r}
      & FX+1
      \arrow{r}[swap]{[\eta,\bot]}
      & TFX
    \end{tikzcd}
  \]
  For sufficiency, assume that $[x_k]_{k\le n}$ is a morphism in $\LCoalg(I,TF)$. Hence
  for all $k < n$:
  \[
    [\eta,\bot]\cdot (Fx_{k+1}+1)\cdot p_k
    = TF[x_j]_{j\le n}\cdot \rho\cdot \inj_k
    \sqsubseteq \xi\cdot [x_j]_{j\le n}\cdot \inj_k
    = \xi\cdot x_k.
  \]
  For necessity, we have for all $k< n$:
  \[
    \begin{tikzcd}
      P_k
      \arrow[shift right=1]{d}[swap]{p_k}
      \arrow{r}{\inj_k}
      \arrow[rounded corners,to path={
        ([xshift=2pt]\tikztostart.south)
        -- ([xshift=2pt,yshift=-6mm]\tikztostart.base)
        -- ([xshift=-2pt,yshift=-6mm]\tikztotarget.base) \tikztonodes
        -- ([xshift=-2pt]\tikztotarget.south)
      }]{rr}{x_k}[alias=arrowcenter,anchor=center]{}
      \arrow[rounded corners,to path={
        -- ([xshift=-8mm]\tikztostart.west)
        |- ([yshift=-4mm]\tikztotarget.base) \tikztonodes
        -- (\tikztotarget)
      }]{drr}[pos=0.8,below]{TF[x_j]_{j\le n}\cdot \rho \cdot \inj_k}
      &
      \coprod_{j\le n}P_j
      \arrow{r}{[x_j]_{j\le n}}
      & X
      \arrow[shift left=1]{d}{\xi}
      \\[4mm]
      FP_{k+1}+1
      \arrow{r}{Fp_{k+1}+1}
      & FX+1
      \arrow{r}{[\eta,\bot]}
      \arrow[from=arrowcenter,draw=none]{}[anchor=center]{\sqsubseteq}
      & TFX
    \end{tikzcd}
  \]
  and
  \[
    \begin{tikzcd}[column sep=3mm]
      P_n
      \arrow{d}[swap]{\inr\cdot !}
      \arrow{dr}{!}
      \arrow{r}{\inj_n}
      \arrow[shiftarr={xshift=-15mm}]{dd}[sloped,above,rotate=180]{\rho\cdot \inj_n}
      & \coprod_{j\le n} P_j
      \arrow{r}{[x_j]_{j\le n}}
      & X
      \arrow{dd}{\xi}
      \\
      F\coprod_{j\le n}P_{j} + 1
      \arrow{d}[swap]{[\eta,\bot]}
      & 1
      \arrow{l}[pos=0.6,swap]{\inr}
      \arrow{dr}[swap]{\bot}
      \descto[pos=0.4,sloped]{ur}{$\sqsubseteq$}
      \arrow{dl}{\bot}
      \\
      TF\coprod_{j\le n}P_{j}
      \arrow{rr}[swap]{TF[x_j]_{j\le n}}
      &
      & TFX
    \end{tikzcd}
  \]
  Since the family of coproduct injections $\inj_k$ is jointly epic we have by
 \eqref{axJointlyepic} that $[x_k]_{k\le n}$ is indeed a
  weak homomorphism.

\subsection*{Proof of \autoref{thm:homIsOpen}}
As usual in open map proofs, this direction of the characterization theorem
is shown by considering only \Path-morphisms of length difference 1.
Indeed every \Path-morphism $\phi$ from a path of length $n$ to a path of length $m
\ge n+2$ can be expressed as a composition of
\Path-morphisms of smaller length difference $\phi = \psi\cdot \theta$.
When applying the openness of a morphism $h$ to $\theta$, we obtain a diagonal $d\colon
\dom(\psi)\to \dom(h)$ which allows applying the openness of $h$ to $\psi$,
yielding the desired lifting for $\phi$.
\[
  \begin{tikzcd}
    \bullet
    \arrow[shiftarr={xshift=-6mm}]{dd}[swap]{\phi}
    \arrow{d}[swap]{\theta}
    \arrow{r}{x}
    & \bullet
    \arrow{dd}{h}
    \\
    \bullet
    \arrow{d}[swap]{\psi}
    \arrow[dashed]{ur}[description]{1.}
    &
    \\
    \bullet
    \arrow{r}{y}
    \arrow[dashed]{uur}[description]{2.}
    & \bullet{}
  \end{tikzcd}
\]
Consider a $\Path(I,F+1)$-morphism
  $\vec \phi_{n+1}\colon (\vec P_{n+1}, \vec p_n) \to (\vec Q_{n+2},\vec
  q_{n+1})$ and a commuting square in $\LCoalg(I,TF)$ where $h$ is a homomorphism:
  \[
    \begin{tikzcd}[column sep=12mm]
      J(\vec P_{n+1}, \vec p_n)
      \arrow{r}{[x_k]_{k\le n}}
      \arrow{d}[swap]{J\vec \phi_{n+1}}
      &
      (X,\xi,x_0)
      \arrow{d}{h}
      \\
      J(\vec Q_{n+2},\vec q_{n+1})
      \arrow{r}{[y_k]_{k\le n+1}}
      &
      (Y,\zeta,y_0)
    \end{tikzcd}
  \]
  By precomposition with $\inj_n$ we obtain the following commuting diagram:
  \[
    \begin{tikzcd}[column sep=15mm]
      P_n
      \arrow{d}[swap]{\phi_n}{\cong}
      \arrow{r}{x_n}
      &
      X
      \arrow{d}{h}
      \arrow{r}{\xi}
      & TFX
      \arrow{dd}{TFh}
      \\
      Q_n
      \arrow{r}{y_n}
      \arrow{d}[swap]{q_n}
      \descto{dr}{$\sqsubseteq$}
      & Y
      \arrow{dr}{\zeta}
      \\
      FQ_{n+1}+1
      \arrow{r}[swap]{Fy_{n+1}+1}
      & FY+1
      \arrow{r}[swap]{[\eta,\bot]}
      & TFY
    \end{tikzcd}
  \]
  Applying \eqref{axChoice} to the outer part of the diagram yields
  some $f \colon P_n\to FX+1$ and the commuting diagram:
  \[
    \begin{tikzcd}[column sep=14mm,row sep=0mm]
      & X
      \arrow[bend left=10]{dr}{\xi}
      \\
      |[yshift=4mm]|
      P_n
      \arrow[bend left=10]{ur}{ x_n}
      \arrow[dashed]{r}{f}
      \arrow{d}[swap]{q_n\cdot \phi_n}
      \descto[sloped,rotate=90]{rr}{$\sqsubseteq$}
      & |[yshift=-4mm]| FX + 1
      \arrow{r}[above]{[\eta_{FX},\bot]}
      \arrow{d}{Fh+1}
      & TFX
      \arrow{d}{TFh}
      \\[5mm]
      FQ_{n+1}+1
      \arrow{r}{Fy_{n+1}+1}
      & FY+1
      \arrow{r}{[\eta_{FY},\bot]}
      & TFY
    \end{tikzcd}
    \]
   Since $q_n$ is $F+1$-precise, we obtain the following:
    \[
    \begin{tikzcd}
      P_n
      \arrow{d}[swap]{q_n\cdot \phi_n}
      \arrow{r}{f}
      & FX+1
      \descto{d}{\&}
      & X
      \arrow{d}{h}
      \\
      FQ_{n+1} +1
      \arrow{ur}[sloped,below]{Fd_{n+1}+1}
      & Q_{n+1}
      \arrow[dashed]{ur}[sloped,above]{\exists d_{n+1}}
      \arrow{r}{y_{n+1}}
      & Y
    \end{tikzcd}
  \]
  Define $d_k\colon Q_k\to X$ by $x_k\cdot \phi_k^{-1}$ for $k\le n$.
  It remains to show that
  \[
    [d_k]_{k\le n+1}\colon J(\vec Q_{n+1},\vec Q_{n})\to (X,\xi,x_0)
  \]
  is indeed a morphism in $\LCoalg(I,TF)$ and that it makes
  the desired diagram in $\LCoalg(I,TF)$ commute.
  \begin{enumerate}
  \item That $[d_k]_{k\le n+1}$ is a weak homomorphism follows from
    \autoref{lem:stepwiseFromPath} because we have:
      \[
        \begin{tikzcd}% [baseline=(top.base)]
          %|[alias=top]|
          Q_n
          \arrow[shiftarr={yshift=6mm}]{rr}{d_{n}}
          \arrow{d}[swap]{q_n}
          \arrow{r}{\phi_n^{-1}}
          \descto{dr}{Def.~$d_{n+1}$}
          & P_n
          \arrow{r}{x_n}
          \arrow{d}{f}
          \descto{dr}{$\sqsubseteq$}
          & X
          \arrow{d}{\xi}
          \\
          FQ_{n+1} + 1
          \arrow{r}{Fd_{n+1}+1}
          & FX+1
          \arrow{r}{[\eta,\bot]}
          & TFX
        \end{tikzcd}
      \]
      and for all $k < n$:
      \[
  \tikzset{external/export next=false}
        \begin{tikzcd}
          Q_k
          \arrow[shiftarr={yshift=6mm}]{rrr}{d_{k}}
          \arrow{d}[swap]{q_n}
          \arrow{r}{\phi_k^{-1}}
          &[3mm] P_k
          \arrow{rr}{x_k}
          \arrow{d}{f}
          \descto{drr}{$\sqsubseteq$ by \autoref{lem:stepwiseFromPath}}
          & & X
          \arrow{d}{\xi}
          \\
          FQ_{k+1} + 1
          \arrow{r}{F\phi_{n+1}^{-1}+1}
          \arrow[shiftarr={yshift=-6mm}]{rr}[swap]{Fd_{k+1}}
          & FP_{k+1}+1
          \arrow{r}{Fx_{k+1}}
          & FX+1
          \arrow{r}{[\eta,\bot]}
          & TFX
        \end{tikzcd}
      \]

    \item The first desired commutativity $[d_k]_{k\le n+1}\cdot J\mathord{\vec \phi_{n+1}} =
      [x_k]_{k\le n}$ is clear by the definition of $d_k$. The second commutativity $[y_k]_{k\le
        n+1} = h\cdot [d_k]_{k\le n+1}$ is proven because the coproduct
      injections are jointly epic. For $k \le n$ we clearly have
      $y_k = h\cdot x_k\cdot \phi_k^{-1} = h\cdot d_k$. For $k=n+1$ we have
      $y_{n+1} = h\cdot d_{n+1}$ by the definition of $d_{n+1}$.

  \end{enumerate}

\subsection*{Proof of \autoref{thm:openIsHom}}
Having $\zeta\cdot h \sqsupseteq TFh\cdot \xi$ already, we only need to show:
  \[
    \begin{tikzcd}
      X
      \arrow{r}{\xi}
      \arrow{d}[swap]{h}
      & TFX
      \arrow{d}{TFh}
      \\
      Y
      \descto[sloped]{ur}{$\sqsubseteq$}
      \arrow{r}[swap]{\zeta}
      & TFY
    \end{tikzcd}
  \]
  By path-reachability, it suffices to show that for all runs
  $[x_k]_{k\le n}\colon J(\vec P_{n+1},\vec p_n)\to (X,\xi,x_0)$ we have:
  \[
    \begin{tikzcd}
      \coprod_{k\le n}P_k
      \arrow{r}{[x_k]_{k\le n}}
      \arrow{d}[swap]{[x_k]_{k\le n}}
      &
      X
      \arrow{r}{\xi}
      & TFX
      \arrow{d}{TFh}
      \\
      X
      \arrow{r}[swap]{h}
      \descto[sloped]{urr}{$\sqsubseteq$}
      &
      Y
      \arrow{r}[swap]{\zeta}
      & TFY
    \end{tikzcd}
  \]
  By \eqref{axJointlyepic}, coproduct injections are jointly
  epic in an ordered-enriched sense, so by induction, it suffices to prove that:
  \[
    \begin{tikzcd}
      P_n
      \arrow{r}{x_n}
      \arrow{d}[swap]{x_n}
      &
      X
      \arrow{r}{\xi}
      & TFX
      \arrow{d}{TFh}
      \\
      X
      \arrow{r}[swap]{h}
      \descto[sloped]{urr}{$\sqsubseteq$}
      &
      Y
      \arrow{r}[swap]{\zeta}
      & TFY
    \end{tikzcd}
  \]
  By \eqref{axJoin}, we can prove this
  by showing that for all $p_n'\colon P_n\to TFY$ we have.
  We will prove the implication
  \[
    \begin{tikzcd}
      P_n
      \arrow{r}{x_n}
      \arrow[bend right=10]{dr}[swap]{p_n'}
      &
      X
      \arrow{r}{h}
      & Y
      \arrow{d}{\zeta}
      \\
      &
      |[xshift=-10mm]|
      FY+1
      \arrow{r}{[\eta_{FY},\bot_{FY}]}
      \descto[sloped]{ur}{$\sqsubseteq$}
      & TFY
    \end{tikzcd}
    \Rightarrow
    \begin{tikzcd}
      P_n
      \arrow{r}{x_n}
      \arrow[bend right=10]{dr}[swap]{p_n'}
      &
      X
      \arrow{r}{\xi}
      & TFX
      \arrow{d}{TFh}
      \\
      &
      |[xshift=-10mm]|
      FY+1
      \arrow{r}{[\eta_{FY},\bot_{FY}]}
      \descto[yshift=2mm,sloped]{ur}{$\sqsubseteq$}
      & TFY
    \end{tikzcd}
  \]
  In the following, this implication is proved for all $p_n'$. For such a
  $p_n'\colon P_n \to FY+1$, since $P_n\in \S$, 
  \eqref{axFactor} yields an object $P_{n+1}\in \S$
  and morphisms $p_n\colon P_n\to FP_{n+1}+1$ and $y_{n+1}\colon P_{n+1}\to Y$
  with $(Fy_{n+1}+1)\cdot p_n = p_n'$. So we have a new path $(\vec P_{n+2},\vec
  p_{n+1})$ and
  a morphism defined by
  \[
    [y_k]_{k\le n+1}\colon \coprod_{k\le n+1} P_k\to Y
    \qquad
    \text{with }
    y_{k} = x_k \text{ for }k \le n,
    \text{ and }
    y_{n+1}\text{ as above}.
  \]
  This is indeed a lax homomorphism $J(\vec P_{n+2},\vec p_{n+1})\to (Y,\zeta,y_0)$
  because of \autoref{lem:stepwiseFromPath} and because for all $k<n$:
  \[
    \begin{tikzcd}
      |[xshift=-6mm]|
      P_k
      \arrow[shiftarr={yshift=6mm}]{rr}{y_k}
      \arrow{r}{x_k}
      \arrow[rounded corners,to path={
        -- ([xshift=-4mm]\tikztotarget.west |- \tikztostart)
        -- ([xshift=-4mm]\tikztotarget.west) \tikztonodes
        -- (\tikztotarget)
      }]{dd}[swap]{p_k}
      & X
      \arrow{r}{h}
      \arrow{d}[swap]{\xi}
      & Y
      \arrow{d}{\zeta}
      \\
      FX+1
      \arrow{r}{[\eta,\bot]}
      \arrow{dr}[sloped,above]{Fh+1}
      \descto[sloped]{ur}{$\sqsubseteq$}
      & TFX
      \arrow{r}{TFh}
      \descto[sloped]{ur}{$\sqsubseteq$}
      & TFY
      \\
      FP_{k+1} +1
      \arrow{u}{Fx_{k+1}+1}
      \arrow{r}[sloped,below]{Fy_{k+1}+1}
      &  FY+1
      \arrow{ur}[swap]{[\eta,\bot]}
      \descto{u}{Nat.}
    \end{tikzcd}
  \]
  and
  \[
    \begin{tikzcd}
      P_n
      \arrow{r}{x_n}
      \arrow{dr}{p_n'}
      \arrow{d}[swap]{p_n}
      &
      X
      \arrow{r}{h}
      & Y
      \arrow{d}{\zeta}
      \\
      FP_{n+1}+1
      \arrow{r}[swap]{Fy_{n+1}+1}
      &
      |[xshift=-10mm]|
      FY+1
      \arrow{r}[swap]{[\eta_{FY},\bot_{FY}]}
      \descto[sloped]{ur}{$\sqsubseteq$}
      & TFY
    \end{tikzcd}
  \]
  Since $y_k = x_k$, for $k\le n$,
  the constructed lax homomorphism makes the following
  square commute
  \[
    \begin{tikzcd}
      J(\vec P_{n+1},\vec p_n)
      \arrow{r}{[x_k]_{k\le n}}
      \arrow{d}[swap]{J(\id_{P_k})_{k\le {n+1}}}
      &[8mm] (X,\xi,x_0)
      \arrow{d}{h}
      \\
      J(\vec P_{n+2},\vec p_{n+1})
      \arrow{r}{[y_k]_{k\le n+1}}
      & (Y, \zeta)
    \end{tikzcd}
    \text{ in }\LCoalg(I,TF).
  \]
  Since $h$ is open, we obtain a diagonal lifting, that is, a lax homomorphism
  $[d_k]_{k\le n+1}\colon J\vec P_{n+1}\to (X,\xi,x_0)$ with $h\cdot d_k = y_k$
  for $k\le n+1$ and $d_k = x_k$ for $k\le n$.
  This finally proves the desired implication:
  \[
    \tikzset{external/export next=false}
    \begin{tikzcd}[column sep=12mm, baseline=(TFY.base)]
      P_n
      \arrow{rr}{d_n = x_n}
      \arrow{d}[swap]{p_n}
      \descto{drr}{$\sqsubseteq$ by \autoref{lem:stepwiseFromPath}}
      \arrow[rounded corners,to path={
        -- ([xshift=-8mm]\tikztostart.west)
        -- ([xshift=-8mm]\tikztostart.west |- \tikztotarget) \tikztonodes
        -- (\tikztotarget)
      }]{ddr}[swap]{p_n'}
      & & X
      \arrow{d}{\xi}
      \\[4mm]
      FP_{n+1} + 1
      \arrow{r}{Fd_{n+1}+1}
      \arrow{dr}[swap]{Fy_{n+1}+1}
      & FX + 1
      \arrow{d}[swap]{Fh+1}
      \arrow{r}{[\eta,\bot]}
      \descto{dr}{Nat.}
      & TFX
      \arrow{d}{TFh}
      \\
      & FY+1
      \arrow{r}{[\eta,\bot]}
      & |[alias=TFY]| TFY
    \end{tikzcd}
    %\qedhere
  \]

\subsection*{Remarks on \autoref{ass:reachability}}
The category of sets and the category of nominal sets have
(epi,mono)-factorizations. However, the assumption is rather unusual, because in
most categories, a morphism that is both epi and mono is not necessarily an
isomorphism. By the assumption of (epi,mono)-factorizations together with the
cocompleteness of $\C$, the following are equivalent for any family $(e_i\colon
X_i\to Y)_{i\in I}$:
\begin{enumerate}
  \item $(e_i)_{i\in I}$ are jointly epic.
  \item $[e_i]_{i\in I}\colon \coprod_{i\in I}X_i \to Y$ is an epimorphism.
  \item $[e_i]_{i\in I}\colon \coprod_{i\in I}X_i \to Y$ is a strong
    epimorphism.
  \item $(e_i)_{i\in I}$ are jointly strong epic.
\end{enumerate}
All the results in \autoref{subsec:reachability} work when replacing `jointly
epic' by `jointly strong epic' and by replacing `(epi,mono)-factorizations' by
the very common assumption of `(strong epi,mono)-factorizations'. Since in all
our instances (possibly sorted Sets, Nominal Sets), all epimorphisms are strong,
we phrase the results and proofs in terms of (jointly) epic families for the
sake of simplicity.

\subsection*{Proof of \autoref{prop:JointlyEpicToNoProper}}
  Consider a subcoalgebra $h\colon (Y,\zeta,y_0) \to (X,\xi,x_0)$.
  For every run $[x_k]_{k\le n}\colon$ $J(\vec P_{n+1},\vec p_n)\to (X,\xi,x_0)$,
  we have the commuting square:
  \[
    \begin{tikzcd}
      J0 = (I,\bot_{FI}\cdot!,\id_I)
      \arrow{d}[swap]{J!}
      \arrow{r}{!}
      & (Y,\zeta,y_0)
      \arrow{d}{h}
      \\
      J(\vec P_{n+1}, \vec p_n)
      \arrow{r}{[x_k]_{k\le n}}
      & (X,\xi,x_0)
    \end{tikzcd}
  \]
  and since $h$ is open by \autoref{thm:homIsOpen}, we have a run $[d_k]_{k\le
    n}$ of $(\vec P_{n+1},\vec p_n)$ in $(Y,\zeta,y_0)$. Since $[x_k]_{k\le n}$
  is epic and $h\cdot d_k = x_k$ for all $0\le k \le n$, we have
  that the mono $h$ is an epimorphism and hence an isomorphism.
  
\subsection*{Proof of \autoref{prop:noProperReachable}}
For functors $H\colon \C\to \C$ preserving intersections, we have breadth-first-search:
\begin{lemma}
  \label{bfsCoalgebra}
  Let $H\colon\C\to \C$ preserve arbitrary intersections and let $\C$ have
  $(\E,mono)$-factorizations, countable coproducts, and arbitrary intersections.
  For an $I$-pointed $H$-coalgebra $I\overset{i}{\to} X\overset{\xi}{\to} HX$,
  there are monomorphisms $(m_k\colon X_k\rightarrowtail X)$ with the property
  that:
  \begin{enumerate}
  \item $m_0$ is the mono-part of the ($\mathcal{E}$,mono)-factorization of $i$.
  \item $m_{k+1}$ is the least subobject of $X$ such that $\xi\cdot m_k$ factors
    through $Hm_{k+1}$.
  \item The union of the $m_k$, i.e.~the image of $[m_k]_{k\ge 0}\colon
    \coprod_{k\ge 0} X_k\to X$ is the carrier of a reachable subcoalgebra of
    $(X,\xi,x_0)$.
    In particular, if $(X,\xi,x_0)$ is reachable, then $[m_k]_{k\ge 0} \in \mathcal{E}$.
  \end{enumerate}
\end{lemma}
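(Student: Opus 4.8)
\textbf{Proof plan for Lemma~\ref{bfsCoalgebra}.}
The plan is to build the subobjects $m_k$ by induction on $k$, take their union $m_\infty$, equip it with a subcoalgebra structure, and verify minimality by a second induction on subobjects. For the base case, $m_0\colon X_0\rightarrowtail X$ is by fiat the mono part of the $(\E,\text{mono})$-factorization of the pointing $x_0\colon I\to X$. For the step, given $m_k\colon X_k\rightarrowtail X$ I would consider $\xi\cdot m_k\colon X_k\to HX$ and let $m_{k+1}$ be the intersection of all subobjects $n\colon Y\rightarrowtail X$ such that $\xi\cdot m_k$ factors through $Hn$; this family is nonempty (it contains $\id_X$) and $\C$ has arbitrary intersections, so $m_{k+1}$ exists. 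The key point is that $\xi\cdot m_k$ \emph{still} factors through $Hm_{k+1}$: writing $m_{k+1}=\bigcap_j n_j$, preservation of intersections gives $Hm_{k+1}=\bigcap_j Hn_j$, and the individual factorizations of $\xi\cdot m_k$ through the monos $Hn_j$ are automatically compatible, hence induce a factorization through the wide pullback $Hm_{k+1}$. Thus $m_{k+1}$ is the least subobject with property~(2).

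Next I would assemble the union. First note that $H$ preserves monomorphisms: the pullback of a mono $m$ along itself is a pair of identities, and this is a (binary) intersection, which $H$ preserves; so $Hm$ is mono. Using countable coproducts, form $[m_k]_{k\ge 0}\colon\coprod_{k\ge 0}X_k\to X$ and take its $(\E,\text{mono})$-factorization $\coprod_k X_k\xrightarrow{e}X_\infty\xrightarrow{m_\infty}X$; this $m_\infty$ is the join $\bigvee_k m_k$. Since each $\xi\cdot m_k$ factors through $Hm_{k+1}$ and $m_{k+1}\le m_\infty$, it factors through $Hm_\infty$, say as $Hm_\infty\cdot\xi_k$; then $[\xi_k]_{k\ge 0}\colon\coprod_k X_k\to HX_\infty$ and $\xi\cdot m_\infty\colon X_\infty\to HX$ fit into a square with $e$ on top and $Hm_\infty$ on the right, and since $e\in\E$ and $Hm_\infty$ is mono the diagonal fill-in of the factorization system yields $\zeta\colon X_\infty\to HX_\infty$ with $Hm_\infty\cdot\zeta=\xi\cdot m_\infty$. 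Taking the pointing $i_\infty\colon I\to X_\infty$ induced by $m_0\le m_\infty$, this exhibits $m_\infty\colon(X_\infty,\zeta,i_\infty)\to(X,\xi,x_0)$ as a subcoalgebra, establishing clause~(3) except for reachability.

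For reachability of $(X_\infty,\zeta,i_\infty)$, I would take an arbitrary subcoalgebra $n\colon(Z,\theta,z_0)\rightarrowtail(X_\infty,\zeta,i_\infty)$ and prove $m_k\le m_\infty\cdot n$ for all $k$ by induction; then $m_\infty=\bigvee_k m_k\le m_\infty\cdot n$, which forces $m_\infty\cdot n\cong m_\infty$ and hence $n$ to be an isomorphism. The base case holds because $m_\infty\cdot n\cdot z_0=m_\infty\cdot i_\infty=x_0$, so $x_0$ factors through the mono $m_\infty\cdot n$, and $m_0$, being the mono part of the factorization of $x_0$, is the least such subobject. For the step, if $m_k=m_\infty\cdot n\cdot t_k$ then $\xi\cdot m_k=\xi\cdot(m_\infty n)\cdot t_k=H(m_\infty n)\cdot\theta\cdot t_k$ (using that $m_\infty\cdot n$, a composite of subcoalgebras, is a subcoalgebra), so $\xi\cdot m_k$ factors through $H(m_\infty\cdot n)$ and minimality of $m_{k+1}$ gives $m_{k+1}\le m_\infty\cdot n$. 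Finally, for the last sentence: if $(X,\xi,x_0)$ is reachable then the subcoalgebra $m_\infty$ is an isomorphism, so $[m_k]_{k\ge 0}=m_\infty\cdot e$ is a composite of $\E$-morphisms, hence in $\E$.

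The main obstacle I anticipate is the induction step of the construction together with the companion point in paragraph two: verifying that the least subobject $m_{k+1}$ of clause~(2) exists and retains its factorization property, and that the union $X_\infty$ carries a coalgebra structure. These are precisely the places where preservation of arbitrary intersections by $H$ (hence preservation of monos) and the orthogonality of $\E$ against monos are used; once they are in place, the reachability argument is a routine induction on subobjects.
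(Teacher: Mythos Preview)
Your proposal is correct and follows essentially the same approach as the paper: both construct $m_{k+1}$ as the intersection of all subobjects through whose $H$-image $\xi\cdot m_k$ factors, use preservation of intersections to get the factorization through $Hm_{k+1}$, build the coalgebra structure on the union via orthogonality of $e\in\E$ against the mono $Hm_\infty$, and prove reachability by showing inductively that every $m_k$ lies below any given subcoalgebra of $X_\infty$. The only cosmetic difference is that the paper phrases the final reachability argument by constructing explicit maps $d_k\colon X_k\to Z$ into the subcoalgebra, whereas you phrase it as subobject inequalities $m_k\le m_\infty\cdot n$; these are the same argument.
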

Intuitively, $X_k\rightarrowtail X$ contains precisely the states that are $k$
steps away from the initial state $X_0$, and so the union $\bigcup_{k\ge 0}X_k$
contains all reachable states.
\begin{proof}
  \textbf{1. and 2.}
  Let $I\overset{i'}{\twoheadrightarrow} X_0\overset{m_0}\rightarrowtail X$ be
  the ($\mathcal{E}$,mono)-factorization of $\map{i}{I}{X}$. For the inductive
  step, assume $m_k\colon X_k\rightarrowtail X$. Consider the family of
  subobjects $m'\colon X'\rightarrowtail X$ with the property that there exists
  some $\xi'\colon X_k\to HX'$ with $Hm'\cdot \xi' = \xi\cdot m_k$.
  Denote the intersection of these subobjects by
  $m_{k+1}\colon X_{k+1}\rightarrowtail X$, i.e.~we have the wide pullback:
  \[
    \begin{tikzcd}[column sep=4mm,row sep=8mm]
      & X
      \\
      X'
      \arrow[>->,bend left=20]{ur}{m'}
      & \cdots
      \arrow[>->,bend left=20]{u}{}
      & X''
      \arrow[>->,bend right=20]{ul}[swap]{m''}
      \\
      & X_{k+1}
      \arrow[>->,bend left=20]{ul}{\pi_{m'}}
      \arrow[>->,bend left=20,start anchor={[yshift=3mm]north}]{u}{}
      \arrow[>->,bend right=20]{ur}[swap]{\pi_{m''}}
      \pullbackangle{90}
      \arrow[>->,
      rounded corners,
      to path={
        |- ([xshift=3cm,yshift=-4mm]\tikztostart.south)
        -- ([xshift=3cm,yshift=4mm]\tikztotarget.north) \tikztonodes
        -| (\tikztotarget.north)
      }]{uu}[swap]{m_{k+1}}
    \end{tikzcd}
  \]
  Note that since $H$ preserves intersections, i.e.~pullbacks of monomorphisms,
  it preserves monomorphisms, because a morphism $m$ is mono. So the above wide
  pullback of monos is mapped again to a pullback of monos, and we have the
  commutative diagram:
  \[
    \begin{tikzcd}[column sep=4mm,row sep=8mm]
      X
      \arrow{rr}{\xi}
      &[3mm] & HX
      \\
      &HX'
      \arrow[>->,bend left=20]{ur}{Hm'}
      & \cdots
      \arrow[>->,bend left=20]{u}{}
      & HX''
      \arrow[>->,bend right=20]{ul}[swap]{Hm''}
      \\
      |[yshift=-4mm]|
      X_k
      \arrow[>->,bend left=10]{uu}{m_k}
      \arrow[bend left=5]{ur}{\xi'}
      \arrow[bend right=50,end anchor={[xshift=2mm]south}]{urrr}[swap]{\xi''}
      &
      & HX_{k+1}
      \arrow[>->,bend left=20]{ul}{H\pi_{m'}}
      \arrow[>->,bend left=20,start anchor={[yshift=3mm]north}]{u}{}
      \arrow[>->,bend right=20]{ur}{H\pi_{m''}}
      \pullbackangle{90}
    \end{tikzcd}
  \]
  By the universal property of the wide pullback, there is a unique map
  $\xi_{k}\colon X_k\to HX_{k+1}$ with $Hm_{k+1}\cdot \xi_k = \xi\cdot m_k$.
  Indeed, $m_{k+1}\colon X_{k+1}\rightarrowtail X$ is the least subobject of $X$
  with this property, because any other $m'\colon X'\rightarrowtail X$ with this
  property is itself included in the diagram for the intersection and hence
  $X_{k+1}\rightarrowtail X'$.

  \textbf{3.}
  Consider the ($\mathcal{E}$,mono)-factorization $\coprod_{k\ge 0} X_k
  \overset{e}\twoheadrightarrow Y \overset{h}\rightarrowtail X$ of $[m_k]_{k\ge
    0}$, then with the $\xi_k$ as above, we have the commutative diagram:
  \[
    \begin{tikzcd}
      &X
      \arrow{rr}{\xi}
      &[5mm]
      &[5mm] HX
      \\
      I
      \arrow[bend left]{ur}{i}
      \arrow[bend right]{dr}[swap]{\inj_0\cdot i'}
      \arrow{r}{e\cdot i'\cdot \inj_0}
      & |[xshift=-1cm]| Y
      \arrow[>->]{u}[swap]{h}
      &
      & HY
      \arrow[>->]{u}[swap]{Hh}
      \\
      & \displaystyle\coprod_{k\ge 0} X_k
      \arrow{uu}[swap]{[m_k]_{k\ge 0}}
      \arrow[->>]{u}[swap]{e}
      \arrow{r}{\coprod_{k\ge 0}\xi_k}
      & \displaystyle\coprod_{k\ge 0} HX_{k+1}
      \arrow{r}{[H\inj_{k+1}]_{k\ge 0}}
      & H\displaystyle\coprod_{k\ge 0} X_k
      \arrow[->]{u}[swap]{He}
      \arrow[shiftarr={xshift=1cm}]{uu}[swap]{H[m_k]_{k\ge 0}}
    \end{tikzcd}
  \]
  and since $e\in \mathcal{E}$ is left-orthogonal to the mono $Hh$, there is a
  unique morphism $y \colon Y\to HY$ making $e$ and $h$ coalgebra homomorphisms.
  It remains to show that $(Y,y,e\cdot \inj_0\cdot i')$ does not have a proper subcoalgebra.
  Assume a pointed subcoalgebra $f\colon (Z,\zeta,z_0) \to (Y,y,e\cdot i')$,
  with $f\colon Z\rightarrowtail Y$. We show that $e$ factors through $f$, by
  constructing maps $d_k\colon X_k\to Z$ with $e\cdot \inj_k = f\cdot d_k$ inductively.
  For $k=0$, we have the following commuting square
  which uniquely induces $d_0$:
  \[
  \begin{tikzcd}
    I
    \arrow[->>]{r}{i'}
    \arrow{d}[swap]{z_0}
    & X_0
    \arrow[dotted]{dl}[description]{\exists! d_0}
    \arrow{d}{e\cdot \inj_0}
    \\
    Z
    \arrow[>->]{r}{f}
    & Y
  \end{tikzcd}
  \]
  Given $d_k\colon X_0\to Z$ with $f\cdot d_k = e\cdot \inj_k$, we have the
  commutative diagram:
  \[
    \begin{tikzcd}
      &
      |[xshift=-2cm]|
      X
      \arrow{r}{\xi}
      & HX
      \\
      X_k
      \arrow[>->]{ur}{m_k}
      \arrow{r}{\xi_k}
      \arrow{dd}[swap]{d_k}
      \arrow{dr}{e\cdot \inj_k}
      &
      HX_{k+1}
      \arrow{dr}[swap]{H(e\cdot \inj_{k+1})}
      \arrow[>->]{ur}{Hm_{k+1}}
      \\
      &
      |[xshift=-2cm]|
      Y \arrow{r}[pos=0.3]{y}
      & HY
      \arrow[>->]{uu}[swap]{Hh}
      \\
      Z
      \arrow[>->]{ur}[swap]{f}
      \arrow{r}{\zeta}
      & HZ
      \arrow[>->]{ur}{Hf}
    \end{tikzcd}
  \]
  So $h\cdot f$ is a subobject with the property that $\xi\cdot m_k$ factors
  through $H(h\cdot f)$. Since by item (2.), $X_{k+1}$ is the least subobject
  with this property, there is some $d_{k+1}\colon X_{k+1}\rightarrowtail Z$
  with $(h\cdot f)\cdot d_{k+1} = m_{k+1}$ and in particular $f\cdot d_{k+1} =
  e\cdot \inj_{k+1}$ since $h$ is monic.

  In total, we have $f\cdot [d_k]_{k\ge 0} = [e\cdot \inj_{k}]_{k\ge 0} = e \in
  \mathcal{E}$, and so $f\in \mathcal{E}$ (by $\E$-laws in factorizations) and
  hence the mono $f$ must be an isomorphism. \qed
\end{proof}
We now can continue with the proof of the main statement, and in the following
we instantiate \autoref{bfsCoalgebra} with $H = TF$, $\mathcal{E} = \text{epi}$.

\begin{proof}[of \autoref{prop:noProperReachable}]
  Let $\xi_k\colon X_k\to TFX_{k+1}$ the witness of item 2 of
  \autoref{bfsCoalgebra}, and so $TFm_{k+1}\cdot \xi_k = \xi\cdot
  m_k$ and $m_{k+1}$ is the least subobject with this property. In the
  following, we choose sets $\mathcal{F}_k$, $k\ge 0$, containing morphisms
  $(E,e)\in \mathcal{F}_k$, $e\colon E\to X_k$, $E\in \S$, inductively:
  \begin{enumerate}
    \item $\mathcal{F}_0$ contains only $x_0'\colon I\twoheadrightarrow X_0$,
      where $x_0 = m_0\cdot x_0'$ is the pointing $I\to X$, as provided by item
      1 of \autoref{bfsCoalgebra}.
    \item For every $(E,e) \in \mathcal{F}_k$, let $G_{(E,e)}\subseteq
      \C(E,FX_{k+1}+1)$ be the set of morphisms $g\colon E\to FX_{k+1}+1$ with:
      \[
        \begin{tikzcd}
          E
          \arrow{r}{e}
          \arrow{dr}[swap]{g}
          & X_k \arrow{r}{\xi_k}
          & TFX_{k+1}
          \\
          & FX_{k+1}+1
          \descto{u}{\upinclusion}
          \arrow{ur}[swap]{[\eta,\bot]_{FX_{k+1}}}
        \end{tikzcd}
      \]
      For each $g\in G_{(E,e)}$, choose some object $Y_g\in \S$, and morphisms $y_g\colon
      Y_g\to X_{k+1}$ and a $F+1$-precise $p_g\colon E\to FY_g+1$ with $g = (Fy_g+1)\cdot p_g$,
      according to \eqref{axFactor}.
      Define $\mathcal{F}_{k+1} := \{ (Y_g,y_g)\mid (E,e) \in \mathcal{F}_k, g\in G_{(E,e)}\}$.
\end{enumerate}
We need to prove some properties about the $\mathcal{F}_k$
\begin{itemize}
\item By construction, for every $(E',e') \in \mathcal{F}_{k+1}$, there is some
  $(E,e) \in \mathcal{F}_k$ and an $F+1$-precise map $p_g\colon E\to FE'+1$ with
  \[
    \begin{tikzcd}[row sep=9mm]
      E
      \arrow{dr}{g}
      \arrow{dd}[swap]{p_g} \arrow{rr}{e} &
      \descto[xshift=4mm]{ddr}{$\sqsubseteq$}
      & X_k \arrow{dd}{\xi_k}
      \\
      & FX_{k+1}+1
      \arrow{dr}[swap]{[\eta,\bot]}
      \\
      FE'+1 \arrow[swap]{r}{[\eta,\bot]}
      \arrow{ur}{Fe'+1}
      & TFE'
      \arrow[swap]{r}{TFe'} & TFX_{k+1}
    \end{tikzcd}
  \]
  Consequently by \autoref{lem:stepwiseFromPath}, for every $(E,e) \in
  \mathcal{F}_k$, $k\ge 0$, there is a run $[x_j]_{j\le k}\colon J(\vec
  P_{k+1}, \vec p_k) \to (X,\xi,x_0)$ with $P_k = E$, $x_k = \inj_k\cdot e$.

\item For every family $\mathcal{F}_k$, $k \ge 0$, the morphism $[e]_{(E,e)\in
    \mathcal{F}_k}$ is an isomorphism; this means that for the factorization:
      \[
        [e]_{(E,e)\in \mathcal{F}_{k}}
        ~\equiv~
        \big(
        \coprod_{\mathclap{(E,e)\in \mathcal{F}_{k}}} E
        \overset{q_k}\twoheadrightarrow
        \bar X_{k}
        \overset{s_k}\rightarrowtail X_{k}
        \big)
      \]
      we have that $s_k$ is an isomorphism for every $k\ge 0$.
      \allowdisplaybreaks
      \begin{align*}
        \xi_k\cdot e~~
        &\overset{\mathclap{\text{\eqref{axJoin}}}}=
          \quad\bigsqcup_{g\in G_{(E,e)}} [\eta_{FX_{k+1}},\bot_{FX_{k+1}}]\cdot g
           \\ &
          \overset{\mathclap{\text{Def. }p_g,y_g}}=
          \quad\bigsqcup_{g\in G_{(E,e)}} [\eta_{FX_{k+1}},\bot_{FX_{k+1}}]\cdot (Fy_g+1)\cdot p_g
           \\ & \overset{\mathclap{(Y_G,y_g) \in \mathcal{F}_{k+1}}}
         = \quad\bigsqcup_{g\in G_{(E,e)}} [\eta_{FX_{k+1}},\bot_{FX_{k+1}}]\cdot (F(s_{k+1}\cdot q_{k+1}\cdot \inj_{y_g})+1)\cdot p_g
                &
           \\ & \overset{\mathclap{\text{Naturality}}}
         = \quad\bigsqcup_{g\in G_{(E,e)}} TFs_{k+1}\cdot [\eta_{FX_{k+1}'},\bot_{FX_{k+1}'}]\cdot (F(q_{k+1}\cdot \inj_{y_g})+1)\cdot p_g
           \\ &\overset{\mathclap{\bigsqcup-\text{reflection}}}
         = \qquad TFs\cdot \underbrace{\bigsqcup_{g\in G_{(E,e)}} [\eta_{FX_{k+1}'},\bot_{FX_{k+1}'}]\cdot (F(q_{k+1}\cdot \inj_{y_g})+1)\cdot p_g}_{\text{short hand }v_{(E,e)}\colon E\to TF\bar X_{k+1}}
      \end{align*}
      Hence, $TFs\cdot v_{(E,e)} = \xi_k\cdot e$ for all $(E,e)\in
      \mathcal{F}_k$. Since the $e\in \mathcal{F}_k$ are jointly epic by the
      induction hypothesis, we have a unique diagonal $\xi_k'$ in the following
      square:
      \[
        \begin{tikzcd}[column sep=12mm]
          \coprod_{(E,e)\in \mathcal{F}_k} E
          \arrow[->>]{r}{[e]_{(E,e)\in \mathcal{F}_k}}
          \arrow{d}[swap]{[v_{(E,e)}]_{(E,e)\in \mathcal{F}_k}}
          & X_k
          \arrow{d}{\xi_k}
          \arrow[dashed]{dl}[description]{\exists !\xi_k'}
          \\
          TFX_{k+1}
          \arrow[>->]{r}{TFs_{k+1}}
          & TFX_{k+1}
        \end{tikzcd}
      \]
      Since $\xi_k$ was constructed to be the least morphism, $s_{k+1}$ is necessarily
      an isomorphism.
\end{itemize}
Now we have that $f_k := [e]_{(E,e)\in \mathcal{F}_k}\colon \coprod_{(E,e)\in
  \mathcal{F}_k} E\to X_k$ is epic for every $k\ge 0$. Since $(X,\xi,x_0)$ is
reachable, $[m_k]_{k\ge 0}\colon \coprod_{k\ge 0} X_k\to X$ (cf.~item 3 of \autoref{bfsCoalgebra})
is an epimorphism.
Hence, the familiy $(m_k\cdot f_k)_{k\ge 0 }$ is jointly epic; this family is
contained in the family of runs in $(X,\xi,x_0)$, and so the family of runs is
jointly epic.
  \qed
\end{proof}

\subsection*{Proof of \autoref{prop:analyticPrecise}}
  
  The first step is to describe when a map is $\B$-precise:
  \begin{lemma} \label{B-precise}
  A map $s\colon X\to \Bag Y$ is $\Bag$-precise iff
  for all $y\in Y$, $$\sum_{x\in S} s(x)(y) = 1.$$
\end{lemma}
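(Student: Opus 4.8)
The plan is to reduce to the simplified test for precise morphisms and then treat the two implications separately. Throughout I identify a finite multiset over a set $Z$ with its multiplicity function $Z\to\N$ of finite support, so that $\Bag h$ acts by $(\Bag h)(b)(z)=\sum_{c\in h^{-1}(z)}b(c)$; in particular $\Bag h$ preserves total mass. Since $\Bag$ is analytic it preserves weak pullbacks, so by \autoref{rem:PreciseWeakPullbacks} and \autoref{preciseWeakPullback} it suffices to test the defining implication with $g=\id$: that is, $s\colon X\to\Bag Y$ is $\Bag$-precise iff for every $h\colon C\to Y$ and every $f\colon X\to\Bag C$ with $\Bag h\cdot f=s$ there is $d\colon Y\to C$ with $h\cdot d=\id_Y$ and $\Bag d\cdot s=f$. (If one prefers not to use weak-pullback preservation, the same argument runs with the general definition, only with slightly longer bookkeeping.)

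For the ``if'' direction, assume $\sum_{x\in X}s(x)(y)=1$ for all $y\in Y$; then for each $y$ there is a unique index $x_y\in X$ with $s(x_y)(y)=1$, all other $s(x)(y)$ being $0$. Given $h$ and $f$ with $\Bag h\cdot f=s$, the equation $\sum_{c\in h^{-1}(y)}f(x_y)(c)=s(x_y)(y)=1$ singles out a unique $c_y\in h^{-1}(y)$ with $f(x_y)(c_y)=1$, and I put $d(y):=c_y$, which immediately gives $h\cdot d=\id_Y$. The remaining equality $\Bag d\cdot s=f$ is checked componentwise at each $x\in X$ and $c\in C$: writing $y^{\ast}:=h(c)$, one first observes that $d(y)=c$ forces $y=y^{\ast}$, so $d^{-1}(c)\subseteq\{y^{\ast}\}$, and then distinguishes whether $d(y^{\ast})=c$ and whether $x=x_{y^{\ast}}$, using that $x_{y^{\ast}}$ is the only index contributing mass over $y^{\ast}$ and that the restriction of $f(x)$ to the fibre $h^{-1}(y^{\ast})$ has total mass $s(x)(y^{\ast})$. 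This case analysis is the one genuinely fiddly step, but it is entirely elementary.

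For the ``only if'' direction I fix $y_0\in Y$ and show $\sum_{x}s(x)(y_0)=1$ by excluding the values $0$ and $\ge 2$, in the spirit of the proof of \autoref{powersetNotPrecise}. If the sum is $0$, take $C:=Y\setminus\{y_0\}$ with $h$ the inclusion and $f:=s$ (legitimate, since $y_0$ occurs in no $s(x)$); preciseness would produce $d\colon Y\to C$ with $h(d(y_0))=y_0$, which is impossible. If the sum is at least $2$, adjoin a fresh copy, $C:=Y+\{y_0^{\ast}\}$ with $h(y_0^{\ast})=y_0$ and $h$ the identity on $Y$; pick $x_1$ with $s(x_1)(y_0)\ge 1$, choosing it so that $s(x_1)(y_0)\ge 2$ whenever such an index exists and otherwise also fixing a second index $x_2\neq x_1$ with $s(x_2)(y_0)=1$; and let $f$ agree with $s$ everywhere except that one copy of $y_0$ in $s(x_1)$ is relabelled to $y_0^{\ast}$, so that $\Bag h\cdot f=s$. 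Any witness $d$ must satisfy $d(y)=y$ for $y\neq y_0$ and $d(y_0)\in\{y_0,y_0^{\ast}\}$; comparing the $y_0^{\ast}$-multiplicities of $\Bag d\cdot s$ and of $f$, at $x_1$ if $d(y_0)=y_0$ or if $s(x_1)(y_0)\ge 2$, and at $x_2$ if $d(y_0)=y_0^{\ast}$ and $s(x_1)(y_0)=1$, produces in every subcase a contradiction, so no such $d$ exists.

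The obstacle I expect is bookkeeping rather than conceptual difficulty: in the ``if'' direction one must track multiplicities over $X$ and over $C$ at once and invoke the fibre conditions $d^{-1}(c)\subseteq\{h(c)\}$ and ``$x_{y^{\ast}}$ is the unique contributor'' at precisely the right moments; and in the ``only if'' direction the point is to choose the splitting index $x_1$ (or the pair $x_1\neq x_2$) so that $\Bag d\cdot s=f$ fails no matter which of the two admissible values $d(y_0)$ takes. Once the case distinctions are organised, the write-up should be short.
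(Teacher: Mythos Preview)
Your proposal is correct. The ``if'' direction (the sum condition implies $\Bag$-preciseness) is essentially the paper's argument: both use the simplified test from \autoref{preciseWeakPullback}, locate the unique witness $c_y\in h^{-1}(y)$, and set $d(y)=c_y$. The paper's verification of $\Bag d\cdot s=f$ is a bit slicker than yours---it first records the identity $s(x)(y)=f(x)(d(y))$ and then uses injectivity of $d$ in a one-line computation---but the content is the same.

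The ``only if'' direction diverges. Your sum-equals-zero case matches the paper exactly. For the sum-at-least-two case you adjoin a single fresh copy $y_0^\ast$, move one occurrence of $y_0$ to it, and then rule out both admissible values of $d(y_0)$ by a short case analysis. The paper instead splits the inequality $\le 1$ into two sub-arguments with two different test objects: it uses the functorial strength $X\to\Bag(Y\times X)$ to show that at most one $x$ has $s(x)(y)>0$, and then a separate map $X\to\Bag(\N\times Y)$ (sending each occurrence of $y$ in $s(x)$ to a distinct $(n,y)$) to show that each individual $s(x)(y)\le 1$. Your single construction is more elementary and handles both failure modes at once, at the cost of the case split on $d(y_0)$ and on whether the excess mass sits at one index or is spread over two; the paper's approach trades that case analysis for two clean but less obvious auxiliary maps.
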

\begin{proof}
  \emph{Sufficiency} is proven by the two inequalities:
  \begin{description}
    \item[$(\le 1)$]
      Let $t_{X,Y}\colon \Bag Y\times X\to \Bag(Y\times X)$ be the functorial strength
      defined as for all $\Set$-functors by $t_{X,Y}(b,x) = \Bag(y\mapsto (y,x))(b)$.
      We thus have the commuting diagram
      \[
        \begin{tikzcd}
          X \arrow{r}{\fpair{s,\id_X}}
          \arrow{drr}[swap]{s}
          & \Bag Y\times X
          \arrow{r}{t_{X,Y}}
          & \Bag (Y\times X)
          \arrow{d}{\Bag \pi_1}
          \\
          & & \Bag Y
        \end{tikzcd}
      \]
      Since $s$ is $\Bag$-precise, we obtain a map $d\colon Y\to Y\times X$ with
      $\pi_1\cdot d = \id_Y$ and $t_{X,Y}\cdot \fpair{s,\id_X} = \Bag d\cdot s$,
      that means $\Bag(y\mapsto (y,x))(s(x)) = \Bag(y\mapsto (y,d'(y)))(s(x))$,
      with $d' = \pi_2\cdot d$.
      Hence, every $y\in Y$ appears in at most one $s(x)$, namely only in
      $s(d'(y))$ or not at all; this means for every $y\in Y$, only one summand
      of $\sum_{x\in S}s(x)(y)$ is non-zero.
      
      To see that every summand of $\sum_{x\in S}s(x)(y)$ is at most 1, define
      $f\colon X\to \Bag(\N\times Y)$ by
      \[
        f(x)(n,y) = \begin{cases}
          1 & \text{if }n < s(x)(y) \\
          0 & \text{otherwise}
        \end{cases}
      \]
      and so 
      \[
        \Bag(\pi_2)(f(x))(y) = \sum_{\mathclap{(n,y) \in \N\times Y}} f(x)(n,y)
        = \sum_{\mathclap{\substack{(n,y) \in \N\times Y\\ n < s(x)(y)}}} 1
        = s(x)(y)
      \]
      Hence, we have some map $d\colon Y\to \N\times Y$ with $f = \Bag(d) \cdot
      s$ and $d(y) = (d'(y), y)$ for some $d'\colon Y\to \N$. The first equality
      expands to
      \[
        f(x)(n,y) = \Bag(d)(s(x))(n,y) = \sum_{\mathclap{\substack{y'\in Y\\ d(y') = (n,y)}}} s(x)(y')
        = \begin{cases}
          s(x)(y)
          & \text{if }n = d'(y)
          \\
          0 &\text{otherwise}
        \end{cases}
      \]
      This implies that for every $x$ and $y$ there is at most one $n\in N$ such
      that $f(x)(n,y) > 0$, hence $s(x)(y) \le 1$.

    \item[$(\ge 1)$]
       Assume that $\sum_{x\in S}s(x)(y) = 0$ for some $y\in Y$.
       Define $s'\colon X\to \Bag (Y\setminus \{y\})$ by
       $s'(x)(y') = s(x)(y')$. With the obvious inclusion $i\colon
       Y\setminus \{y\} \hookrightarrow Y$ we have $s = \Bag i\cdot s'$ and so
       the $\Bag$-precise $s$ induces a map $d\colon Y\to Y\setminus\{y\}$ with
       $i\cdot d = \id_Y$, a contradiction.

  \end{description}
  For \emph{necessity}, consider $f$ and $h$ with
  \begin{equation}
    \begin{tikzcd}
      X
      \arrow{r}{f}
      \arrow{d}[swap]{s}
      & \Bag C
      \arrow{dl}{\Bag h}
      \\
      \Bag Y
    \end{tikzcd}
    \label{eq:sfh}
  \end{equation}
  By assumption on $s$ we have for all $y\in Y$:
  \[
    1 = \sum_{x\in X}s(x)(y)
    \overset{\text{\eqref{eq:sfh}}}{=}
    \sum_{x\in X}\sum_{\substack{c\in C\\ h(c)= y}} f(x)(c)
    = \sum_{\substack{x\in X,\,c\in C\\ h(c)= y}} f(x)(c)
  \]
  Since all summands of the right-hand sum is are non-negative integers, there
  exists precisely one $x_y\in X, c_y\in C$ with $h(c_y) = y$ such that $f(x)(c)
  \neq 0$. Hence, define $d\colon Y\to C$ by this witness: $d(y) = c_y$. This
  implies directly that $h(d(y)) = y$ and that $s(x)(y) = f(x)(d(y))$ for all
  $x\in X$, $y\in Y$.
  \[
    \Bag d(s(x))(c)
    = \sum_{\mathclap{y\in Y, d(y)=c}} s(x)(y)
    = \sum_{\mathclap{y\in Y, d(y) = c}} f(x)(d(y))
    = f(x)(c)
  \]
  where the last equality holds because $d$ is injective.\qed
\end{proof}
Now, we want to transfer preciseness from the bag functor to any analytic functor 
using the natural transformation $\alpha$. This is done using the following:
\begin{lemma} \label{preciseAlognCartesian}
  Let $F$ preserve weak pullbacks and let $\alpha\colon F\to G$ be a natural
  transformation whose naturality squares are weak pullbacks. Then a
  morphism $f\colon X\to FY$ is $F$-precise iff $\alpha_Y\cdot f$ is
  $G$-precise. If $G$ admits precise factorizations w.r.t.~$\S\subseteq \obj
  \C$, then so does $F$.
\end{lemma}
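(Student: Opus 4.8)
The plan is to prove the equivalence in \autoref{preciseAlognCartesian} by two independent implications and then to obtain the statement about precise factorizations as a corollary of the harder implication. Write $s := \alpha_Y\cdot f$ for brevity, and recall that, since $F$ preserves weak pullbacks, \autoref{rem:PreciseWeakPullbacks} lets me test $F$-preciseness of $f$ only against squares with $g = \id$.

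For the direction ``$f$ is $F$-precise $\Rightarrow$ $s$ is $G$-precise'' I would start from arbitrary test data $k\colon X\to GC$, $g\colon Y\to D$, $h\colon C\to D$ with $Gh\cdot k = Gg\cdot s$ and transport it into an $F$-square. By naturality of $\alpha$ at $g$ the right-hand side equals $\alpha_D\cdot Fg\cdot f$, so $(Fg\cdot f, k)$ is a cone over the naturality square of $\alpha$ at $h$; as that square is a weak pullback there is a filler $\ell\colon X\to FC$ with $\alpha_C\cdot\ell = k$ and $Fh\cdot\ell = Fg\cdot f$. Applying $F$-preciseness of $f$ to the square formed by $\ell$, $g$, $h$ yields $d\colon Y\to C$ with $Fd\cdot f = \ell$ and $h\cdot d = g$; naturality of $\alpha$ at $d$ then gives $Gd\cdot s = \alpha_C\cdot Fd\cdot f = \alpha_C\cdot\ell = k$, so $d$ is the required diagonal. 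Note this implication uses only that the naturality squares of $\alpha$ are weak pullbacks.

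For the converse, assume $s$ is $G$-precise. By \autoref{rem:PreciseWeakPullbacks} it suffices, given $f_0\colon X\to FC$ and $h\colon C\to Y$ with $Fh\cdot f_0 = f$, to build $d\colon Y\to C$ with $Fd\cdot f = f_0$ and $h\cdot d = \id_Y$. Pushing the triangle forward along $\alpha$ gives $Gh\cdot(\alpha_C\cdot f_0) = \alpha_Y\cdot Fh\cdot f_0 = s$, so $G$-preciseness of $s$ (instantiated with $g = \id_Y$) produces $d\colon Y\to C$ with $Gd\cdot s = \alpha_C\cdot f_0$ and $h\cdot d = \id_Y$. The step that makes the argument go through is that one cannot cancel $\alpha_C$ on the left of $\alpha_C\cdot Fd\cdot f = \alpha_C\cdot f_0$ directly, since $\alpha_C$ need not be monic; instead I would apply the weak-pullback naturality square of $\alpha$ \emph{at $d$} (not at $h$): the pair $(\alpha_Y\cdot f, f_0)$ is a cone over it because $Gd\cdot(\alpha_Y\cdot f) = \alpha_C\cdot f_0$, so it has a filler $w\colon X\to FY$ with $\alpha_Y\cdot w = \alpha_Y\cdot f$ and $Fd\cdot w = f_0$. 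Now $Fh\cdot f_0 = Fh\cdot Fd\cdot w = F(h\cdot d)\cdot w = w$, while $Fh\cdot f_0 = f$ by hypothesis, hence $w = f$ and therefore $Fd\cdot f = f_0$. This is exactly where ``$F$ preserves weak pullbacks'' is needed: for a general $g$ the same computation would only yield $Fg\cdot w = Fg\cdot f$, so reducing to $g = \id$ via \autoref{rem:PreciseWeakPullbacks} is essential, and this is the point I expect to be the main obstacle to getting right.

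Finally, for the claim that $F$ admits precise factorizations w.r.t.\ $\S$ whenever $G$ does: given $f\colon S\to FY$ with $S\in\S$, factor $\alpha_Y\cdot f$ using $G$, obtaining $Y'\in\S$, $h'\colon Y'\to Y$ and a $G$-precise $g'\colon S\to GY'$ with $Gh'\cdot g' = \alpha_Y\cdot f$. Applying the weak-pullback naturality square of $\alpha$ at $h'$ to the cone $(f, g')$ over $GY$ yields $f'\colon S\to FY'$ with $\alpha_{Y'}\cdot f' = g'$ and $Fh'\cdot f' = f$; since $\alpha_{Y'}\cdot f' = g'$ is $G$-precise, the converse implication just proved shows $f'$ is $F$-precise, and with $Y'\in\S$ this is the desired factorization. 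I do not expect to need \autoref{remPreciseFactor} for any of this.
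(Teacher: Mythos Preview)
Your proof is correct and follows essentially the same approach as the paper's own proof: both directions use the weak-pullback property of the naturality squares of $\alpha$ in the same way (at $h$ for the forward direction, at $d$ for the converse), and in the converse direction you recover $w = f$ via $F(h\cdot d) = F\id$ exactly as the paper does. The derivation of precise factorizations for $F$ from those for $G$ is likewise identical, lifting the $G$-precise factor through the naturality square at $h'$ and then invoking the converse implication.
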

\begin{proof}
  For \textbf{sufficiency}, let $f\colon X\to FY$ be $F$-precise and consider a commuting
  diagram
  \[
    \begin{tikzcd}
      X
      \arrow{r}{g}
      \arrow{d}[swap]{\alpha_Y\cdot f}
      & GW
      \arrow{d}{Gw}
      \\
      GY
      \arrow{r}{Gy}
      & GZ
    \end{tikzcd}
  \]
  Since the naturality square for $w$ is a weak pullback, a morphism $g'\colon
  X\to FZ$ exists, making the following diagram commute:
  \[
    \begin{tikzcd}
      X
      \arrow[dashed]{dr}[swap,near end]{g'}
      \arrow{dd}[swap]{f}
      \arrow[bend left=15]{drr}{g}
      \\
      & FW
      \pullbackangle{-45}
      \arrow{r}{\alpha_W}
      \arrow{d}[swap]{Fw}
      & GW
      \arrow{d}{Gw}
      \\
      FY
      \arrow{r}{Fy}
      \arrow{dr}[swap]{\alpha_Y}
      & FZ
      \arrow{r}{\alpha_Z}
      & GY
      \\
      & GZ
      \arrow{ur}[swap]{Gy}
    \end{tikzcd}
  \]
  Since $f$ is $F$-precise, we obtain a morphism $d\colon Y\to W$ with $w\cdot
  d=y$ and $Fd\cdot f = g'$, and thus also $g = \alpha_W\cdot g' =
  \alpha_W\cdot Fd\cdot f = Gd\cdot \alpha_Y\cdot f$ as desired.
  
  For \textbf{necessity}, we use the simplified version of $F$-precise because
  $F$ preserves weak pullbacks. So let $\alpha_Y\cdot f$ be $G$-precise and
  consider a commuting diagram
  \[
    \begin{tikzcd}
      X
      \arrow{r}{g}
      \arrow{d}[swap]{f}
      & FZ
      \arrow{dl}{Fz}
      \\
      FY
    \end{tikzcd}
  \]
  Hence, $Gz\cdot \alpha_Z\cdot g = \alpha_Y\cdot f$ and since $\alpha_Y\cdot f$ is $G$-precise
  we obtain a morphism $d\colon Y\to Z$ with
  \[
    \begin{tikzcd}
      X
      \arrow{r}{g}
      \arrow{d}[swap]{f}
      & FZ
      \arrow{r}{\alpha_Z}
      & GZ
      \\
      FY
      \arrow{r}{\alpha_Y}
      & GY
      \arrow{ur}[swap]{Gd}
    \end{tikzcd}
    \&
    \quad
    z\cdot d = \id_Y
  \]
  The naturality square for $d$ is a weak pullback, and so we have some $f'\colon
  X\to FY$ making the following diagrams commute:
  \[
    \begin{tikzcd}
      X
      \arrow[bend left=15]{drr}{g}
      \arrow[dashed]{dr}[swap,near end]{\exists f'}
      \arrow{d}[swap]{f}
      \\
      FY
      \arrow[bend right=10]{dr}[swap]{\alpha_Y}
      & FY
      \arrow{r}{Fd}
      \arrow{d}[swap]{\alpha_Y}
      \pullbackangle{-45}
      & FZ
      \arrow{d}{\alpha_Z}
      \\
      & GY
      \arrow{r}{Gd}
      & GZ
    \end{tikzcd}
    \quad\Longrightarrow\quad
    \begin{tikzcd}
      & X
      \arrow{dl}[swap]{f'}
      \arrow{d}[swap]{g}
      \arrow{dr}{f}
      \\
      FY
      \arrow{r}[swap]{Fd}
      \arrow[shiftarr={yshift=-5mm}]{rr}[swap]{F\id_Y}
      & FZ
      \arrow{r}[swap]{Fz}
      & FY
    \end{tikzcd}
  \]
  So $d$ is indeed a diagonal lifting: $Fd\cdot f= Fd\cdot f' = g$.

  For \textbf{precise factorizations}, consider $f\colon X\to FY$ with $X\in
  \S$. For the $G$-precise factorization $Gy'\cdot g'$ of $\alpha_Y\cdot f$ we
  have:
  \[
    \begin{tikzcd}[baseline=(mybase.base)]
      |[yshift=8mm]|
      X
      \arrow[bend left=15]{rr}{g'}
      \arrow[bend right=15]{dr}[swap]{f}
      & FY' \arrow{r}{\alpha_{Y'}}
       \arrow{d}[swap]{Fy'}
       \pullbackangle{-45}
      & GY'
      \arrow{d}[alias=mybase]{Gy'}
      \\
      & FY \arrow{r}{\alpha_Y}
      & GY
    \end{tikzcd}
    \quad Y'\in \S
  \]
  Since the naturality square is a weak pullback, we obtain some $f'\colon X\to FY'$
  with $Fy'\cdot f' = f$ and $Y'\in \S$ as desired.
  \qed
\end{proof}

To conclude, since analytic functors preserve weak pullbacks, it is enough to prove the 
following:
\begin{lemma}
  The naturality squares of $\alpha\colon F\to \Bag$ are weak pullback squares.
\end{lemma}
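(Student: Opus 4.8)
The plan is to verify the weak-pullback property directly by an element chase in $\Set$. For $\map{f}{X}{Y}$, the naturality square
\[
  \begin{tikzcd}
    FX \arrow{r}{Ff}\arrow{d}[swap]{\alpha_X} & FY \arrow{d}{\alpha_Y}\\
    \Bag X \arrow{r}{\Bag f} & \Bag Y
  \end{tikzcd}
\]
is a weak pullback iff for all $a\in FY$ and $b\in \Bag X$ with $\alpha_Y(a)=\Bag f(b)$ there exists some $c\in FX$ with $Ff(c)=a$ and $\alpha_X(c)=b$.

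First I would fix representatives. Since $FY = \coprod_{\sigma\in\Sigma_n} Y^n/G_\sigma$, the element $a$ lies in a single summand: there are $n\in\N$, $\sigma\in\Sigma_n$ and a tuple $(y_1,\ldots,y_n)\in Y^n$ with $a = [(y_1,\ldots,y_n)]_{G_\sigma}$. Write $b = [(x_1,\ldots,x_m)]_{\mathfrak{S}_m}$ for a tuple in $X^m$ representing the multiset $b$. The hypothesis $\alpha_Y(a)=\Bag f(b)$ unfolds to the equality of multisets $[(y_1,\ldots,y_n)]_{\mathfrak{S}_n} = [(f(x_1),\ldots,f(x_m))]_{\mathfrak{S}_m}$ in $\Bag Y$; comparing sizes forces $m=n$, and the $\mathfrak{S}_n$-action then yields a permutation $\tau\in\mathfrak{S}_n$ with $y_i = f(x_{\tau(i)})$ for every $i$.

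Then I would set $c := [(x_{\tau(1)},\ldots,x_{\tau(n)})]_{G_\sigma}$, an element of the $\sigma$-indexed summand $X^n/G_\sigma$ of $FX$. Both remaining verifications are immediate: $Ff$ acts on each summand by post-composition with $f$, so $Ff(c) = [(f(x_{\tau(1)}),\ldots,f(x_{\tau(n)}))]_{G_\sigma} = [(y_1,\ldots,y_n)]_{G_\sigma} = a$; and $\alpha_X$ sends a $G_\sigma$-class to its underlying size-$n$ multiset, so $\alpha_X(c) = \{x_{\tau(1)},\ldots,x_{\tau(n)}\} = \{x_1,\ldots,x_n\} = b$, the middle equality holding because $\tau$ merely permutes the entries. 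This exhibits the desired $c$, so the square is a weak pullback.

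I do not expect a genuine obstacle here; the one point worth keeping in mind is that the group acting on the $\Bag$-side is the full symmetric group $\mathfrak{S}_n$, strictly larger in general than $G_\sigma$, and it is precisely this extra symmetry that lets us reindex a chosen preimage tuple by $\tau$ while staying inside $X^n/G_\sigma$. Since a weak pullback only requires the existence of $c$, the choices of representatives and of $\tau$ are harmless. Combined with \autoref{preciseAlognCartesian} and \autoref{B-precise} (and the fact, noted above, that analytic functors preserve weak pullbacks), this completes the proof of \autoref{prop:analyticPrecise}.
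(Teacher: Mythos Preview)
Your proof is correct and follows essentially the same element-chase as the paper: pick a representative tuple for the $FY$-element, use the hypothesis to find a matching tuple of preimages in $X$ (up to a permutation), and assemble these into the required $FX$-element. You spell out the permutation $\tau$ and the size-matching step $m=n$ more explicitly than the paper does, but the argument is the same.
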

\begin{proof}
  Consider a map $f\colon X\to Y$ and elements $t \in FY$, $s\in \Bag X$ with
  $\alpha_Y(t) = \Bag f(s)$. We can write $t$ as an equivalence class $t =
  [\sigma(y_1,\ldots,y_n)]$. Every $y_i$ is in the image of $f$, and so any
  $t' = [\sigma(x_1,\ldots,x_n)] \in FX$ with $f(x_i) = y_i$ for all $1\le i\le n$ has
  the property that $\alpha_X(t') = s$ and $Ff(t') = t$.\qed
\end{proof}
So we obtain \autoref{prop:analyticPrecise} as a corollary of \autoref{preciseAlognCartesian}.

\subsection*{Auxiliary Lemma in Nominal Sets}
In this section, we will use the following standard lemma on strong nominal sets
(see e.g.~\cite[Prop.~5.10]{milius16} or \cite[p.~3]{kurz10}):
\begin{lemma}
  \label{strongUMP} Given a strong nominal set $X$, a subset
  $O\subseteq X$ such that for every $x'\in X$ there is precisely one $x\in X$
  with a $\pi\in
  \perms(\A)$ fulfilling $x' = \pi\cdot x$, and a map $f\colon O\to Y$ into a nominal
  set $Y$ with $\supp(f(x))\subseteq \supp(x)\,\forall x\in O$. Then there is a
  unique equivariant map $f'\colon X\to Y$ that agrees with $f$ on $O$.
  Such a subset $O\subseteq X$ exists for every nominal set $X$.
\end{lemma}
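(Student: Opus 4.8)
The plan is to build $f'$ by transporting $f$ along orbits, with the strongness of $X$ supplying exactly the coherence needed for well-definedness. First, for the existence of such an $O$: the orbits of the $\perms(\A)$-action partition $X$, so the axiom of choice yields a subset $O\subseteq X$ containing exactly one element of each orbit, and this $O$ has the stated property (for every $x'\in X$ there is a unique representative $x\in O$ and some $\pi\in\perms(\A)$ with $x' = \pi\cdot x$) by construction.

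Next I would define the extension. Given $x'\in X$, let $x\in O$ be its unique orbit representative and choose some $\pi\in\perms(\A)$ with $x' = \pi\cdot x$; set $f'(x') := \pi\cdot f(x)$. Three of the four things to check are routine once well-definedness is in hand. Agreement on $O$: for $x\in O$ take $\pi = \id$, giving $f'(x) = f(x)$. That $f'$ is a genuine map into $Y$ is immediate, since $\pi\cdot f(x)\in Y$ is finitely supported. Equivariance: for $\tau\in\perms(\A)$ the element $\tau\cdot x' = (\tau\pi)\cdot x$ has the same representative $x$, so $f'(\tau\cdot x') = (\tau\pi)\cdot f(x) = \tau\cdot(\pi\cdot f(x)) = \tau\cdot f'(x')$. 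Uniqueness: any equivariant $g\colon X\to Y$ extending $f$ must satisfy $g(\pi\cdot x) = \pi\cdot g(x) = \pi\cdot f(x)$ for every $x\in O$ and $\pi\in\perms(\A)$, and since every element of $X$ has this form, $g = f'$.

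The one substantial point — and the only place the hypotheses on $X$ and on $f$ are used — is that $f'(x')$ does not depend on the choice of $\pi$. Suppose $\pi\cdot x = \sigma\cdot x$ for $x\in O$. Then $\sigma^{-1}\pi$ fixes $x$, so by strongness of $X$ it fixes every atom of $\supp(x)$; since $\supp(f(x))\subseteq\supp(x)$, it therefore also fixes every atom of $\supp(f(x))$, whence $(\sigma^{-1}\pi)\cdot f(x) = f(x)$, i.e.\ $\pi\cdot f(x) = \sigma\cdot f(x)$, as required. This is precisely the step that fails when $X$ is not strong — a permutation fixing $x$ need not fix $\supp(x)$ pointwise — so it is the crux of the argument; the remainder is bookkeeping.
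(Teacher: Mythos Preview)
Your proof is correct and is exactly the standard argument for this result. The paper does not give its own proof of this lemma; it states it as a known fact with references to the literature (Milius et al., Kurz et al.), so there is nothing further to compare against.
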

\subsection*{Details for \autoref{ex:whyClassS}}
\subsubsection*{Factorization w.r.t.~strong nominal sets.}
Let $F\colon \Nom\to \Nom$ be the functor of unordered pairs:
\[
  FX = \{ p \subseteq X \mid 1\le |p| \le 2\}.
\]
with element-wise nominal structure. Given a strong nominal set $X$ and an
equivariant map $f\colon X\to FY$, define $s\colon X\to F(X+X)$ with
\[
  s(x) = \{\inl(x),\inr(x)\}.
\]
To see that $s$ is $F$-precise, consider a commutative square
\[
  \begin{tikzcd}
    X
    \arrow{r}{g}
    \arrow{d}[swap]{s}
    & FW
    \arrow{d}{Fu}
    \\
    F(X+X)
    \arrow{r}{F[h_1,h_2]}
    & FZ.
  \end{tikzcd}
\]
Spelling out the commutativity, we have:
\[
  \{h_1(x),h_2(x)\} = F[h_1,h_2](s(x)) = Fu(g(x))
  \text{ for all }x\in X.
\]
And so $\{h_1(x),h_2(x)\} = Fu(g(x)) = \{ u(w_1),u(w_2)\}$ for $g(x) = \{w_1,w_2\}$.
For every orbit of $X$ choose some $x\in X$, and let $d_1(x), d_2(x) \in W$ such
that $u(d_1(x)) = a_1(x)$ and $u(d_2(x)) = a_2(x)$ and $\{d_1(x),d_2(x)\} =
g(x)$ (this involves a finite choice if $|g(x)| = 2$ and $|Fu(g(x))| = 1$ but is
unique otherwise). Because
$X$ is a strong nominal set and by \autoref{strongUMP}, $d_1$ and $d_2$ extend
to equivariant maps $d_1,d_2\colon X\to W$ with $\{d_1(x),d_2(x)\} = g(x)$ and $u(d_i(x))
= h_i(x)$, $i\in \{1,2\}$. So in total $g = Fd\cdot s$ and $[h_1,h_2] = u\cdot [d_1,d_2]$.

To see that $f$ factors through $s$, consider $g:=f$, $Z=1$, $u=\mathord{!}\colon W\to Z$,
and $h_1=h_2=\mathord{!}\colon X\to 1$, using that $F1 \cong 1$, and so the above square
commutes trivially and the diagonal fill in proves that $f$ factors through $s$
as desired.

\subsubsection*{No factorization w.r.t.~all nominal sets.} In order to prove that $F$ does not admit precise factorizations w.r.t.~all
nominal sets, consider the (unique) equivariant map
\[
  f\colon P\to F1\cong 1\qquad\text{ with }P=\big\{\{a,b\}\mid a,b \in \A, a\neq
  b\big\}
\]
Let $s\colon P\to FX$ be its precise factorization, so necessarily $f = F!\cdot
s$, with $!\colon X\to 1$. In the following, we derive information about $s$ and
$X$.

As a first aspect, consider the constant equivariant map $t\colon P\to F2$ with
$t(\{a,b\}) = \{0, 1\}$. Hence,
\[
  \begin{tikzcd}
    P
    \arrow{r}{t}
    \arrow{d}[swap]{s}
    \arrow{dr}[description]{f}
    & F2
    \arrow{d}{F!}
    \\
    FX
    \arrow{r}[swap]{F!}
    & F1
  \end{tikzcd}
  \Longrightarrow
  \exists d\colon X\to 2\colon Fd\cdot s = t
\]
For every $\{a,b\}\in P$, $s(\{a,b\}) = \{x_1,x_2\}$ for some $x_1,x_2\in X$,
and $\{d(x_1),d(x_2)\} = \{0,1\}$. Since $0,1\in 2$ are in different orbits,
$x_1,x_2$ are in different orbits in $X$, i.e.~so for all $\pi \in \perms(\A)$,
$x_1\neq \pi\cdot x_2$.

As a second aspect, consider $p\colon P\to F(\A^2)$ defined by
\[
  p(\{a,b\}) = \{(a,b),(b,a)\}.
\]
Again, we can use the commutativity:
\[
  \begin{tikzcd}
    P
    \arrow{r}{p}
    \arrow{d}[swap]{s}
    \arrow{dr}[description]{f}
    & F\A^2
    \arrow{d}{F!}
    \\
    FX
    \arrow{r}[swap]{F!}
    & F1
  \end{tikzcd}
  \Longrightarrow
  \exists u\colon X\to \A^2\colon Fu\cdot s = p
\]
For $\{a,b\}\in P$ and $s(\{a,b\}) = \{x_1,x_2\}$ as before, $\{u(x_1),u(x_2)\}
= \{(a,b),(b,a)\}$ and the equivariance and functionality of $u$ implies
\[
  u(x_1)\neq u(x_2) = (a\,b)\cdot u(x_1) = u((a\,b)\cdot x_1)
  \Longrightarrow
  (a\,b)\cdot x_1 \neq x_1.
\]
Together with $\sigma\cdot x_1 \neq x_2$ from above, we have that
$(a\,b)\cdot x_1 \not\in \{x_1,x_2\}$.
This leads to a contradiction to the equivariance of $s$:
\begin{align*}
\{x_1,x_2\}
  &= s(\{a,b\}) = s((a\,b)\cdot \{a,b\})
  = (a\,b)\cdot s(\{a,b\})
    \\ &
  = (a\,b)\cdot \{x_1,x_2\}
  = \{(a\,b)\cdot x_1,(a\,b)\cdot x_2\}
  \neq \{x_1,x_2\}.
\end{align*}

\subsection*{Proof of \autoref{prop:powersetNom}}

Let us check the axioms:
\begin{itemize}
	\item \eqref{axJointlyepic}: obtain as in $\Set$.
	\item \eqref{axUnitBot}: We indeed have natural transformations 
	$\eta\colon X\to \powufs(X)$ for
    	singleton sets and $\bot\colon 1\to \powufs(X)$ for the empty set.
    \item \eqref{axJoin}: consider a equivariant $A\to \powufs B$ with $A$
      strong. Since $f$ is by definition an upper bound for the $\mathcal{F} =
      \{ [\eta_B,\bot_B]\cdot f'\sqsubseteq f \mid f'\colon A\to B+1\}$, we need
      to prove that it is the least upper bound. Let $g$ be an upper bound for
      $\mathcal{F}$. In order to have $f\sqsubseteq g$ we need to show for every
      $a\in A$ and $b\in f(a)$ that $b\in g(a)$. Since $b\in f(a)$ and $f(a)$ is
      ufs, we have $\supp(b) \subseteq \supp(f(a)) \subseteq \supp(a)$. Define
      $f'\colon A\to B+1$ via \autoref{strongUMP}, where $O \subseteq A$ is
      chosen with $a\in O$ and $f''\colon O\to B+1$ is defined by $f''(a) =
      \inl(b)$ and $f''(x) = \inr(*)$ for $x\neq a$. Now, $f'$ is the unique
      equivariant map extending $f''$. Since $[\eta_B,\bot]\cdot f''(x)
      \subseteq f(x)$ for all $x\in O$, we have that $[\eta_B,\bot]\cdot f'
      \sqsubseteq f$, and so $[\eta_B,\bot]\cdot f' \in \mathcal{F}$, and thus
      $[\eta_B,\bot]\cdot f'\sqsubseteq g$, hence $b\in g(a)$.
	\item \eqref{axChoice}: consider a weakly commuting square:
    \[
      \begin{tikzcd}
        A
        \arrow{r}{x}
        \arrow{d}[swap]{y}
        \descto[sloped,rotate=90]{dr}{$\sqsubseteq$}
        & \powufs X
        \arrow{d}{\powufs h}
        \\
        Y
        \arrow{r}{\eta_Y}
        & \powufs Y
      \end{tikzcd}
    \]
    where $A$ is strong.
    Pick any subset $O\subseteq A$ fulfilling the assumption of
    \autoref{strongUMP}. For each $a\in A$ we have that $y(a) \in h[x(a)]$, so
    for each $a \in O$, there exists some $a' \in x(a)$ with $h(a') = y(a)$.
    For each $a\in O$, denote this witness by $x'(a) := a'$.
    By \autoref{strongUMP}, $x'$ extendes to an equivariant map $c\colon A\to
    X$ with $c(a) = x'(a) \in x(a)$ for all $a \in O$.
    For every $b\in A$, there is some $\pi \in \perms(\A)$ and $a\in A$ with $b
    = \pi\cdot a$, and hence $x'(b) = x'(\pi\cdot a) = \pi\cdot x'(a) \in
    \pi\cdot x(a) = x(b)$, i.e.~$\eta_X\cdot x' \sqsubseteq x$. By construction
    of $c$, we have that $h\cdot x' = y$.
\end{itemize}

\subsection*{Proof of \autoref{prop:nomPrecise}}

To prove this, by \autoref{propFactorizationClosure}.4, it is enough to prove
that the binding functor $[\A]$ admits precise factorizations w.r.t.~strong nominal sets.

  Recall from \cite{pitts13} that $[\A]$ has the left adjoint $\A\#\_$, sending $X$
  to the nominal set $\A\#X := \{(a,x) \in \A\times X\mid a \not\in\supp(x)\}$. If $X$ is
  a strong nominal set, then so is $\A\# X$, hence $[\A]$ admits precise
  factorizations w.r.t.~strong nominal sets by \autoref{propFactorizationClosure}.5.\qed

\subsection*{Proof of \autoref{prop:LasotaAxioms}}
All the axioms are proved as in $\Set$, component-wise. The only interesting axiom is to 
prove that $F\{X_P\}_{P\in \bbP}
  = \{
  \coprod_{Q\in \bbP}
  {\bbP(P,Q)}\times X_Q\}_{P\in \bbP}$ 
  admits precise factorisations. Since $F = \coprod_{Q\in \bbP} H_Q\times\pi_Q$ with:\\
  \indent - $H_Q$ being the constant functor on $(\bbP(P,Q))_{P\in \bbP}$,\\
  \indent - $\pi_Q$ being the functor mapping $\{X_P\}_{P\in\bbP}$ to $\{X_Q\}_{P\in \bbP}$,\\
  by \autoref{propFactorizationClosure}.(1,3,4), it is enough to prove that $\pi_Q$ admits
  precise factorisations.
  
  For that purpose, given a morphism, i.e.~family of maps, $(f_{P})_{P\in \bbP}\colon \{X_P\}_{P\in \bbP} \rightarrow
  \pi_Q(\{Y_P\}_{P\in \bbP}) = \{Y_Q\}_{P\in \bbP}$, define
  \[
    (Z_P)_{P\in \bbP}
    \qquad
    \text{ with }
    Z_Q = \coprod_{P\in \bbP} X_P
    \text{ and }
    Z_P = \emptyset \text{ for }P\neq Q.
  \]
  and
  $(\inj_{P})_{P\in \bbP}\colon \{X_P\}_{P\in \bbP} \to \pi_Q(Z_P)_{P\in \bbP}$
  \[
    \inj_P\colon X_P\to (\pi_Q(Z_P)_{P\in \bbP})_P = Z_Q = \coprod_{P\in \bbP} X_P
  \]
  and $ (h_P)_{P\in \bbP}\colon (Z_P)_{P\in\bbP} \to (Y_P)_{P\in\bbP}$ with
  \[
    \quad
    h_Q = [f_P]_{P\in \bbP}\colon \coprod_{P\in \bbP} X_P\to Y_Q,
    \quad
    h_P = \mathord{!}\colon \emptyset\to Y_Q
    \text{ for }P\neq Q.
  \]
  Obviously, $(\pi_Qh)_P\cdot \inj_P = f_P$ for all $P\in \bbP$. It remains to
  show that $(\inj_P)_{P\in \bbP}$ is $\pi_Q$-precise. Consider a commutative
  square:
  \[
    \begin{tikzcd}[column sep=15mm]
      (X_P)_{P\in \bbP}
      \arrow{d}[swap]{(\inj_P)_{P\in \bbP}}
      \arrow{r}{(v_P)_{P\in \bbP}}
      & \pi_Q((V)_{P\in \bbP})
      \arrow{d}{\pi_Q((w_P)_{P\in \bbP})=w_Q}
      \\
      \pi_Q((Z)_{P\in \bbP})
      \arrow{r}{\pi_Q((z_P)_{P\in \bbP})}[below]{=\,z_Q}
      & \pi_Q((W)_{P\in \bbP})
    \end{tikzcd}
    \tag{$*$}
  \]
  Define a diagonal $(d_P)_{P\in \bbP}\colon (Z_P)_{P\in \bbP}\to (V_P)_{P\in
    \bbP}$ by
  \[
    d_Q = [v_P]_{P\in \bbP}\colon
    \coprod_{P\in \bbP} X_P\to V_Q
    \quad
    d_P = \mathord{!}\colon Z_P\to V_P\text{ for }P\neq Q.
  \]
  Obviously, we have the first triangle $v_P = d_Q\cdot \inj_P$ for all $P\in
  \bbP$. The other triangle follows by case distinction: for $P\neq Q$, $z_P =
  w_P\cdot d_P$ holds by initiallity of $Z_P = \emptyset$; we have
  \[
    z_Q\cdot \inj_P \overset{(*)}{=} w_Q\cdot v_P = w_Q\cdot d_Q\cdot \inj_P
    \quad\forall P\in \bbP
  \]
  and since the $\inj_P\colon X_P\to Z_Q$ are jointly epic, $z_Q = w_Q\cdot d_Q$.
  So $(d_P)_{P\in \bbP}$ is a diagonal lifting and so $(\inj_P)_{P\in \bbP}$ is $\pi_Q$-precise.
  
\subsection*{Proof of \autoref{prop:LasotaPrecise}}
To prove this proposition, it is enough to prove the following:
\begin{lemma}
  A morphism $f\colon \chi^P\to FY$ is $F$-precise iff $Y=\chi^Q$ for some
  $Q\in \bbP$.
\end{lemma}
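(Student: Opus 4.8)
The plan is to unwind the two functors and verify the equivalence directly. Since $\Set^{\obj\bbP}$ is the product of copies of $\Set$ indexed by $\obj\bbP$, the object $\chi^P$ is the family that is a singleton at $P$ and initial elsewhere, so for every $Z$ we have $\Set^{\obj\bbP}(\chi^P,Z)\cong Z_P$ naturally. Hence a morphism $f\colon\chi^P\to FY$ is nothing but an element of $(FY)_P=\coprod_{Q\in\bbP}\bbP(P,Q)\times Y_Q$, i.e.~a triple $(Q,m,y)$ with $Q\in\bbP$, $m\in\bbP(P,Q)$ and $y\in Y_Q$; here $Q$, $m$, $y$ are uniquely determined by $f$, since the coproduct summand in which $f_P(\ast)$ lies is well defined. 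I would then show that $f$ is $F$-precise iff $Y$ is isomorphic to $\chi^Q$ for exactly this $Q$ (in which case $f$ ``is'' the $\bbP$-morphism $m\colon P\to Q$), which is all that \autoref{prop:LasotaPrecise} needs.

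For the direction ``$Y=\chi^Q\Rightarrow f$ precise'', a map $f\colon\chi^P\to F\chi^Q$ is simply an element $m\in\bbP(P,Q)$, and I would verify the definition of $F$-precise by hand. In the defining square every component other than $P$ is trivial because $\chi^P$ is initial there; at component $P$, a competitor $a\colon\chi^P\to FC$ with $F\gamma\cdot a=F\beta\cdot f$ forces, by comparing coproduct summands, that $a_P(\ast)$ lies in the $Q$-summand of $(FC)_P$ with first coordinate $m$, say $a_P(\ast)=(m,\gamma_0)$ with $\gamma_0\in C_Q$ and $\gamma_Q(\gamma_0)=\beta_Q(\ast)$. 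The diagonal $d\colon\chi^Q\to C$ picking out $\gamma_0\in C_Q$ (and empty elsewhere) then satisfies $Fd\cdot f=a$ and $\gamma\cdot d=\beta$. I would also remark that $F$ preserves weak pullbacks, being a coproduct of products of constant and projection functors, so one could instead invoke the simplified criterion of \autoref{rem:PreciseWeakPullbacks}, but the direct check is just as short.

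For the converse ``$f$ precise $\Rightarrow Y\cong\chi^Q$'', let $(Q,m,y)$ be the data of $f$. I would introduce the evident witnesses $h\colon\chi^Q\to Y$ picking out $y\in Y_Q$ and $f'\colon\chi^P\to F\chi^Q$ corresponding to $m$, and check $Fh\cdot f'=f$ by a one-line computation on the $P$-component. By the previous paragraph $f'$ is $F$-precise; since $f$ is $F$-precise too, \autoref{remPreciseFactor} (essential uniqueness of precise morphisms) gives that $h$ is an isomorphism, hence $Y\cong\chi^Q$. A self-contained variant: applying the definition of $F$-precise to the factorization $f=Fh\cdot f'$ produces a right inverse $d$ of $h$, and a split epi $h\colon\chi^Q\to Y$ in $\Set^{\obj\bbP}$ forces $\id_{Y_Q}$ to factor through $1$ (so $Y_Q$ is a singleton, being nonempty as $y\in Y_Q$) and forces a map $Y_R\to\emptyset$ to exist (so $Y_R=\emptyset$) for $R\neq Q$, whence $Y\cong\chi^Q$. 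Finally I would note that, since $P_0=\chi^{0_\bbP}$, a straightforward induction on the length of a path in $\Path(I,F)$ rewrites every precise step $p_k$ as a $\bbP$-morphism and so yields \autoref{prop:LasotaPrecise}. The one genuine source of friction is bookkeeping: tracking the coproduct-summand index through $F$ and through $\chi^{(-)}$, and checking that the commutativities in the precise square pin that index down to $Q$.
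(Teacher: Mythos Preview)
Your proposal is correct and follows essentially the same route as the paper. For sufficiency you unfold the $P$-component of the square and read off the diagonal $d\colon\chi^Q\to C$ exactly as the paper does; for necessity you factor $f$ as $Fh\cdot f'$ through $F\chi^Q$ and then either invoke \autoref{remPreciseFactor} or (in your self-contained variant) extract a section of $h$ from preciseness of $f$, which is precisely the paper's argument phrased slightly differently (the paper notes $h$ is injective and then gets $h\cdot d=\id_Y$, you instead observe directly that a split epi out of $\chi^Q$ forces the target to be $\chi^Q$ componentwise).
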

\begin{proof}
  \textbf{Necessity.} Consider $f_P\colon 1\to \coprod_{Q\in \bbP}
  \bbP(P,Q)\times Y_Q$ and let $f_P = \inj_Q(m,x)$. Define the injective
  $h\colon \chi^Q\to (Y_R)_{R\in \bbP}$ by $h_R=\mathbin{!}\colon \emptyset \to
  Y_R$ for $R\neq Q$ and $h_Q= x\colon 1\to X_Q$. Furthermore, define by
  $g\colon \chi^P\to F\chi^Q$ $g_P=\inj_Q(m,*)$, and so $Fh\cdot g = f$.
  Since $f$ is $F$-precise, there is a $d\colon Y\to \chi^Q$ with $h\cdot d =
  \id_Y$, so in total $h$ is an isomorphism.

  \textbf{Sufficiency.} For $f\colon \chi^P\to F\chi^Q$ let $m\colon P\to Q$
  with $f_P = \inj_Q(m,*)$ and consider a commuting diagram:
  \[
    \begin{tikzcd}
      \chi^P
      \arrow{d}[swap]{f}
      \arrow{r}{w}
      & FW
      \arrow{d}{Fh}
      \\
      F\chi^Q
      \arrow{r}{Fz}
      & FZ
    \end{tikzcd}
    \text{ in }\Set^{\obj\bbP}
  \]
  And so in component $P$ we have:
  \[
    \begin{tikzcd}
      1
      \arrow{d}[swap]{\fpair{m,\id}}
      \arrow{rr}{w_P}
      \arrow{dr}[near end]{f_P}
      &
      & \coprod_{R\in \bbP}\bbP(P,R)\times W_R
      \arrow{d}{Fh}
      \\
      \bbP(P,Q)\times \chi^Q_Q
      \arrow{r}{\inj_Q}
      &
      \coprod_{R\in \bbP}\bbP(P,R)\times \chi^Q_R
      \arrow{r}{Fz}
      &
      \coprod_{R\in \bbP}\bbP(P,R)\times Z_R
    \end{tikzcd}
    \text{ in }\Set
  \]
  Hence, (formally by the extensivity of $\Set$) $w_P$ is necessarily $w_P =
  \inj_Q(m, x)$ for some $x\in W_Q$ with $h_Q(x) = z_Q(*)$. 
  Define $d\colon \chi^Q\to W$ by $d_Q\colon
  1 \to W_Q, d_Q = x$, and $d_R =\mathbin{!}\colon \emptyset\to W_R$ for $R\neq
  Q$. Obviously $d$ fulfils both $Fd\cdot f = w$ and $h\cdot d = z$.\qed
\end{proof}

\subsection*{Notes on \autoref{sec:furtherWork}}

In probabilistic systems, it is not clear how paths should look like.
Consider the probabilistic system $A$:
\begin{center}
\begin{tikzpicture}
  \node (a) {$a$};
  \node (b) at +(2, 0.5) {$b$};
  \node (c) at +(2,-0.5) {$c$};
  \node (e) at ($ (b) + (2, 0.5) $) {$e$};
  \node (f) at ($ (b) + (2, -0.5) $) {$f$};
  \node (g) at ($ (c) + (2, 0) $) {$g$};
  \path[->,every node/.append style={fill=white,inner sep=1pt}]
    ([xshift=-4mm]a.west) edge (a)
    (a) edge node {$0.2$} (b)
    (a) edge node {$0.3$} (c)
    (b) edge node {$0.3$} (e)
    (b) edge node {$0.4$} (f)
    (c) edge node {$0.7$} (g)
    ;
\end{tikzpicture}
\end{center}
There are different possibilities how to model a path in such a system:
\begin{enumerate}
\item Simply $\bullet \to \bullet \to \bullet$ as in LTS for a singleton input alphabet. Of
  course, this makes the probabilities in the system entirely meaningless for runs.
\item A path and a probability $(p,\bullet \to \bullet\to\bullet)$ meaning that
  there is a path whose weights multiply to at least $p$. So the above system
  has a run for such a path if $p \le 0.08 = 0.2\cdot 0.4$.

\item A path is a probabilistic system in which each state has at most one
  successor, and so the notion of run of path coincides with a functional
  simulation from that path to the system. E.g.~the above system has a run for $\bullet
  \xrightarrow{0.1}\bullet \xrightarrow{0.25} \bullet$. Clearly, there is a
  functional bisimulation $f: A\to X$ from the above system to the system $X$
  given by
  \[
    \to
    x \xrightarrow{~~0.5~~}
    y \xrightarrow{~~0.7~~}
    z.
  \]
  While $A$ has no run for the path $\bullet \xrightarrow{0.5} \bullet$, the
  system $X$ has. So $f\colon A\to X$ is not an open map, even though it is a
  functional bisimulation, i.e.~coalgebra homomorphism for the subdistribution
  functor.
\end{enumerate}

%%% Local Variables:
%%% mode: latex
%%% ispell-local-dictionary: "british-ize"
%%% TeX-master: "open_maps_in_coalgebra"
%%% End:

\end{appendix}
\end{document}

%%% Local Variables:
%%% mode: latex
%%% ispell-local-dictionary: "british-ize"
%%% TeX-master: t
%%% End:

% vim: tw=80